\theoremstyle{plain}
\newtheorem{theorem}{Theorem}
\newtheorem{proposition}[theorem]{Proposition}
\newtheorem{corollary}[theorem]{Corollary}
\newtheorem{lemma}[theorem]{Lemma}
\theoremstyle{remark}
\newtheorem{remark}{Remark}
\newtheorem{example}{Example}
\theoremstyle{definition}
\newtheorem{definition}{Definition}
\newcommand{\rmd}{\mathrm{d}}
\newcommand{\rmi}{\mathrm{i}}
\newcommand{\rme}{\mathrm{e}}
\newcommand{\Rbb}{{\mathbb{R}}}
\newcommand{\Cbb}{{\mathbb{C}}}
\newcommand{\Zbb}{{\mathbb{Z}}}
\newcommand{\Tr}[1]{{\rm Tr}\,#1}
\newcommand{\norm}[1]{\left\Vert#1\right\Vert}
\newcommand{\abs}[1]{\left|#1\right|}
\newcommand{\id}{{\mathbbm{1}}}
\newcommand{\argmin}{\operatornamewithlimits{arg\,min}}
\newcommand{\Ao}{\mathsf{A}}
\newcommand{\Bo}{\mathsf{B}}
\newcommand{\Co}{\mathsf{C}}
\newcommand{\Mo}{\mathsf{M}}
\newcommand{\Qo}{\mathsf{Q}}
\newcommand{\Po}{\mathsf{P}}
\newcommand{\To}{\mathsf{T}}
\newcommand{\Uo}{\mathsf{U}}
\newcommand{\Xo}{\mathsf{X}}
\newcommand{\Yo}{\mathsf{Y}}
\newcommand{\Zo}{\mathsf{Z}}
\newcommand{\Srel}[2]{S\big(#1\|#2\big)}
\newcommand{\ii}{c}
\newcommand{\Icomp}{\ii_{\rm inc}}
\newcommand{\Iad}{\ii_{\rm ed}}
\newcommand{\Iprep}{\ii_{\rm prep}}
\newcommand{\Div}[2]{D\big(#1\|#2\big)}
\newcommand{\Mad}{\Mscr_{\rm ed}}
\newcommand{\Mcomp}{\Mscr_{\rm inc}}
\newcommand{\sddiv}[2]{S[#1 \| #2]}
\newcommand{\Ccal}{\mathcal{C}}
\newcommand{\Jcal}{\mathcal{J}}
\newcommand{\Scal}{\mathcal{S}}
\newcommand{\Ucal}{\mathcal{U}}
\newcommand{\Hscr}{\mathscr{H}}
\newcommand{\Jscr}{\mathscr{J}}
\newcommand{\Lscr}{\mathscr{L}}
\newcommand{\Mscr}{\mathscr{M}}
\newcommand{\Pscr}{\mathscr{P}}
\newcommand{\Sscr}{\mathscr{S}}
\newcommand{\Tscr}{\mathscr{T}}
\newcommand{\Uscr}{\mathscr{U}}
\newcommand{\Zscr}{\mathscr{Z}}
\newcommand{\Xscr}{\mathscr{X}}
\newcommand{\Yscr}{\mathscr{Y}}
\def\centerarc[#1](#2,#3)(#4:#5:#6){ \draw[#1,domain=#4:#5] plot ({#2+#6*cos(\x)}, {#3+#6*sin(\x)});}
\newcommand{\ma}[1]{\mathsf{#1}}
\newcommand{\ip}[2]{\left\langle\,#1\,|\,#2\,\right\rangle} 
\newcommand{\kb}[2]{\left|\,#1\,\right\rangle\left\langle\,#2\,\right|} 
\newcommand{\F}{\mathbb{F}}
\newcommand{\sh}{\Sscr(\Hscr)} 
\newcommand{\tr}{{\rm tr}\,}
\newcommand{\lam}{\lambda}
\begin{document}
\title{Measurement uncertainty relations for discrete observables: Relative entropy
formulation}

\author{Alberto Barchielli$^{1,2,3}$,
Matteo Gregoratti$^{1,2}$, Alessandro Toigo$^{1,3}$
\\
\\
$^1$ Politecnico di Milano, Dipartimento di Matematica, \\
Piazza Leonardo da Vinci 32, I-20133 Milano, Italy,\\
$^2$ Istituto Nazionale di Alta Matematica (INDAM-GNAMPA),\\
$^3$ Istituto Nazionale di Fisica Nucleare (INFN), Sezione di Milano}

\maketitle

\begin{abstract}
We introduce a new information-theoretic formulation of quantum measurement uncertainty relations, based on the notion of relative entropy between measurement probabilities. In the case of a finite-dimensional system and for any approximate joint measurement of two target discrete observables, we define the entropic divergence as the maximal total loss of information occurring in the approximation at hand. For fixed target observables, we study the joint measurements minimizing the entropic divergence, and we prove the general properties of its minimum value. Such a minimum is our uncertainty lower bound: the total information lost by replacing the target observables with their optimal approximations, evaluated at the worst possible state. The bound turns out to be also an entropic incompatibility degree, that is, a good information-theoretic measure of incompatibility: indeed, it vanishes if and only if the target observables are compatible, it is state-independent, and it enjoys all the invariance properties which are desirable for such a measure.
In this context, we point out the difference between general approximate joint
measurements and sequential approximate joint measurements; to do this, we
introduce a separate index for the tradeoff between the error of the first
measurement and the disturbance of the second one. By exploiting the symmetry
properties of the target observables, exact values, lower bounds and optimal approximations are
evaluated in two different concrete examples: (1) a couple of spin-1/2
components (not necessarily orthogonal); (2) two Fourier conjugate mutually
unbiased bases in prime power dimension. Finally, the entropic
incompatibility degree straightforwardly generalizes to the case of many observables, still maintaining all its relevant properties; we explicitly compute it for three orthogonal spin-1/2 components.
\end{abstract}

\section{Introduction}

In the foundations of Quantum Mechanics, a remarkable achievement of the last
years has been the clarification of the differences between \emph{preparation
uncertainty relations} (PURs) and \emph{measurement uncertainty relations} (MURs)
\cite{Oza02,Oza03a,Oza03b,Oza04,Wer04,BHL07,AAHB16,Wer16,BLW14a,BLW14b,DamSW15,Oza15,BusLW14},
both of them arising from Heisenberg's heuristic considerations about the
precision with which the position and the momentum of a quantum particle can be
determined \cite{Hei27}.

One speaks of PURs when some lower bound is given on the ``spreads'' of the
distributions of two observables $\Ao$ and $\Bo$ measured in the same state
$\rho$. The most known formulation of PURs, due to Robertson \cite{Rob29},
involves the product of the two standard deviations; more recent formulations
are given in terms of distances among probability distributions \cite{Wer16} or
entropies \cite{Kra87,MaaU88,KriParthas02,MaaWer,WehW10,KTW14,ColesBTW17,AAHB16}.

On the other hand, one refers to MURs when some lower bound is given on the
``errors'' of any approximate joint measurement $\Mo$ of two target observables
$\Ao$ and $\Bo$. When $\Mo$ is realized as a sequence of two measurements, one
for each target observable, MURs are regarded also as relations between the
``error'' allowed in an approximate measurement of the first observable and the
``disturbance'' affecting the successive measurement of the second one.

Although the recent developments of the theory of approximate quantum
measurements \cite{Hol01,BGL97,BHL07,BusH08,BLPY16} and nondisturbing quantum
measurements \cite{HeiW10,HeiM15}  have generated a considerable renewed
interest in MURs, no agreement has yet been reached about the proper
quantifications of the ``error'' or ``disturbance'' terms. Here, the main
problem is how to compare the target observables $\Ao$ and $\Bo$ with their
approximate or perturbed versions provided by the marginals $\Mo_{[1]}$ and
$\Mo_{[2]}$ of $\Mo$; indeed, $\Ao$, $\Mo_{[1]}$, $\Mo_{[2]}$ and $\Bo$ may
typically be incompatible. The proposals then range from operator formulations
of the error \cite{Oza02,Oza03a,Oza03b,Oza04, App98, App16} to distances for
probability distributions \cite{Wer04,Wer16,BLW14a,BLW14b,DamSW15,BusLW14,BHL07} and
conditional entropies \cite{BusHOW,AbbB16,CF15}.

In this paper, we propose and develop a new approach to MURs based on the notion
of {\em relative entropy}. Here we deal with the case of discrete
observables for a finite dimensional quantum system.
The extension to position and momentum is given in \cite{BGT17}.

In the spirit of Busch, Lahti, Werner \cite{Wer04,Wer16,BLW14a,BLW14b,BusLW14}, we quantify the ``error'' in the approximation $\Ao\simeq\Mo_{[1]}$ by comparing the respective outcome
distributions $\Ao^\rho$ and $\Mo_{[1]}^\rho$ in every possible state $\rho$;
however, differently from \cite{Wer04,Wer16,BLW14a,BLW14b,BusLW14}, the comparison is
done from the point of view of information theory. Then, the natural choice is
to consider $\Srel{\Ao^\rho}{\Mo_{[1]}^\rho}$, the relative entropy of
$\Ao^\rho$ with respect to $\Mo_{[1]}^\rho$, as a quantification of the
information loss when $\Ao^\rho$ is approximated with $\Mo_{[1]}^\rho$.
Similarly, in order to quantify either the ``error'' or -- if $\Ao$ and $\Bo$
are measured in sequence -- the ``disturbance'' related to the approximation
$\Bo\simeq\Mo_{[2]}$, we employ the relative entropy
$\Srel{\Bo^\rho}{\Mo_{[2]}^\rho}$. Relative entropy appears to be the
fundamental quantity from which the other entropic notions can be derived, cf.\
\cite{BA02,CovT06,OhP93}.
It should be noticed that relative entropy, of
classical or quantum type, has already been used in quantum measurement theory
to give proper measures of information gains and losses in various scenarios
\cite{OhP93,BarL05,BarL06a,BarL06b,Mac07}.

The relative entropy formulation of MURs,
given in Section \ref{sec:MUR}, is: for every approximate joint
measurement $\Mo$ of $\Ao$ and $\Bo$, there exists a state $\rho$ such that
\begin{equation}\label{S0}
\Srel{\Ao^{\rho}}{\Mo^{\rho}_{[1]}}+\Srel{\Bo^{\rho}}{\Mo^{\rho}_{[2]}}\geq c(\Ao,\Bo),
\end{equation}
where the uncertainty lower bound
\begin{equation}\label{c0}
c(\Ao,\Bo)=\inf_\Mo\sup_\rho\Big\{\Srel{\Ao^{\rho}}{\Mo^{\rho}_{[1]}}+\Srel{\Bo^{\rho}}{\Mo^{\rho}_{[2]}}\Big\}
\end{equation}
depends on the allowed joint measurements $\Mo$. In the above
definition, the same state $\rho$ appears in both error terms
$\Srel{\Ao^{\rho}}{\Mo^{\rho}_{[1]}}$ and
$\Srel{\Bo^{\rho}}{\Mo^{\rho}_{[2]}}$; thus, by making their sum, all possible error compensations are taken into account in the maximization. The quantity $\sup_\rho\left\{\Srel{\Ao^{\rho}}{\Mo^{\rho}_{[1]}}+\Srel{\Bo^{\rho}}{\Mo^{\rho}_{[2]}}\right\}$ gives a state-independent quantification of the total inefficiency of the approximate joint measurement $\Mo$ at hand, and we call it entropic divergence of $\Mo$ from $(\Ao, \Bo)$.

By considering any possible approximate joint measurement in the definition of
$c(\Ao,\Bo)$, we get an uncertainty lower bound $\Icomp(\Ao,\Bo)$ that turns out to be a proper measure of the incompatibility of $\Ao$ and $\Bo$. On the
other hand, by considering only sequential measurements, we derive an
uncertainty lower bound $\Iad(\Ao,\Bo)$ that provides a suitable quantification
of the error/disturbance tradeoff for the two (sequentially ordered) target
observables. Indeed, such lower bounds share a lot of desirable properties:
they are zero if and only if the target observables are compatible
(respectively, sequentially compatible); they are invariant under unitary
transformations and relabelling of the output values of the measurements; and
finally, they are bounded from above by a value that is independent of both the
dimension of the Hilbert space and the number of the possible outcomes. As a
main result, we show also that, for a generic couple of observables $\Ao$ and
$\Bo$, considering only their sequential measurements is a real restriction,
because in general $\Iad(\Ao,\Bo)$ may be larger than $\Icomp(\Ao,\Bo)$;
actually, the two indexes are guaranteed to coincide only if one makes some
extra assumptions on $\Ao$ and $\Bo$ (e.g.\  if the second observable
$\Bo$ is supposed to be sharp).

Thus, every time $\Ao$ and $\Bo$ are incompatible, the total loss of
information $\Srel{\Ao^{\rho}}{\Mo^{\rho}_{[1]}}$ $+$ $\Srel{\Bo^{\rho}}{\Mo^{\rho}_{[2]}}$ in
the approximations $\Ao\simeq\Mo_{[1]}$ and $\Bo\simeq\Mo_{[2]}$ depends on
both the joint measurement $\Mo$ and the state $\rho$; however, since
$\Icomp(\Ao,\Bo)>0$, inequality \eqref{S0} states that there is a minimum potential loss that no
joint measurement $\Mo$ can avoid. Similar remarks hold for sequential measurements and the corresponding error/disturbance coefficient. Note that, even if $\Ao$ and $\Bo$ are incompatible, the left hand side of \eqref{S0} can vanish if the state $\rho$ and the approximate joint measurement $\Mo$ are suitably chosen (see Section \ref{sec:MUR}). Of course, this is not a contradiction, as the formulation \eqref{S0}, \eqref{c0} of MURs is about the size of the total information loss in the worst -- but not all -- input states. In this sense, the bound \eqref{c0} is a state-independent quantification of the minimal inefficiency of the approximations $\Ao\simeq\Mo_{[1]}$ and $\Bo\simeq\Mo_{[2]}$.

Our MURs directly compare with those of \cite{Wer04,Wer16,BLW14a,BLW14b,BusLW14}, from which however they differ in one essential aspect: the latter quantify the inaccuracy of the approximate joint measurement $\Mo$ by maximizing the errors of the approximations $\Ao^{\rho_1}\simeq\Mo^{\rho_1}_{[1]}$ and $\Bo^{\rho_2}\simeq\Mo^{\rho_2}_{[2]}$ over independently chosen states $\rho_1$ and $\rho_2$; instead, in \eqref{c0} we maximize the total approximation error $\Srel{\Ao^{\rho}}{\Mo^{\rho}_{[1]}} + \Srel{\Bo^{\rho}}{\Mo^{\rho}_{[2]}}$ over a single state $\rho$.
On the conceptual level, this amounts to say that our MURs are a statement about the inaccuracy of the approximation $(\Ao,\Bo)\simeq (\Mo_{[1]},\Mo_{[2]})$ that occurs in one preparation of the system; those of \cite{Wer04,Wer16,BLW14a,BLW14b,BusLW14} rather refer to the inefficiencies of two separate uses of the approximate joint measurement $\Mo$, namely, for approximating $\Ao\simeq\Mo_{[1]}$ in a first preparation, and $\Bo\simeq\Mo_{[2]}$ in a second one. Similar considerations hold for the conditional entropy approach of \cite{BusHOW,AbbB16,CF15}, where the ``noise'' and ``disturbance'' terms are defined through different preparations in a sort of calibration procedure.
In this respect, our MURs are reminiscent of the traditional entropic PURs, which relate the spreads of the distributions $\Ao^\rho$ and $\Bo^\rho$ evaluated at the same state $\rho$ (see Section \ref{sec:PUR}).

Whenever $\Ao$
and $\Bo$ are incompatible, we will look for the exact value of
$\Icomp(\Ao,\Bo)$, or at least some lower bound for it, as well as we will try
to determine the optimal approximate joint measurements $\Mo$ which saturate
the minimum. In particular, we will prove that in some relevant applications
there is actually a unique such $\Mo$, thus showing that in these cases the
entropic optimality criterium unambiguously fixes the best approximate joint
measurement.

The generalization of our MURs to the case of
more than two target observables is rather straightforward by the very structure of the relative entropy formulation.
It is worth noticing that there are
triples of observables whose optimal approximate joint measurements are not
unique, even if all their possible pairings do have the corresponding binary
uniqueness property (see e.g. the two and three orthogonal spin-1/2 components
in Sections \ref{ex:D2} and \ref{ex:3spin}).

Now, we summarize the structure of the paper. In Section \ref{sec:indices}, we state our entropic MURs for two target observables, and we introduce and study the main mathematical objects which are involved in their formulation.
In Section \ref{sec:symm}, we undertake the explicit computation of the incompatibility indexes $\Icomp(\Ao,\Bo)$ and
$\Iad(\Ao,\Bo)$ and their respective optimal approximate joint measurements
$\Mo$ for several examples of incompatible target observables. Some general
results are proved, which show how the symmetry properties of the quantum
system can help in the task. Then, two cases are studied: two spin-1/2
components, which we do not assume to be necessarily orthogonal, and two
Fourier conjugate observables associated with a pair of mutually unbiased bases
(MUBs) in prime power dimension. In Section \ref{sec:manyobs}, we
generalize the relative entropy formulation of MURs to the case of many target
observables. As an example, the case of three orthogonal spin-1/2 components is
completely solved. Finally, Section \ref{sec:CO} contains a conclusive discussion and presents some open problems. Three further appendices are provided at the end of the paper: in Appendix \ref{app:HW}, a couple of examples show that the coefficients $\Icomp(\Ao,\Bo)$, $\Iad(\Ao,\Bo)$ and $\Iad(\Bo,\Ao)$ may be different in general; Appendices \ref{app:spin} and \ref{app:MUB} collect all the technical details and proofs for the cases studied in Sections \ref{ex:D2}, \ref{sec:cov.MUBs}, and \ref{ex:3spin}.

\subsection{Observables and instruments}\label{Ob+In}
We start by fixing our quantum system and recalling the notions and basic facts
on observables and measurements that we will use in the article
\cite{DavQTOS,Hol01,Hol12,HeZi12,BGL97,BLPY16,HeiW10,HeiMZ16}.

\paragraph{The Hilbert space $\Hscr$ and the spaces $\Lscr(\Hscr),\, \Tscr(\Hscr),\, \Sscr(\Hscr)$}
We consider a quantum system described by a finite-dimensional complex Hilbert
space $\Hscr$, with $\dim \Hscr=d$; then, the spaces $\Lscr(\Hscr)$ of all
linear bounded operators on $\Hscr$  and the trace-class $\Tscr(\Hscr)$
coincide. Let $\Sscr(\Hscr)$ denote the convex set of all states on $\Hscr$
(positive, unit trace operators), which is a compact subset of $\Tscr(\Hscr)$.
The extreme points of $\Sscr(\Hscr)$ are the pure states (rank-one projections)
$\rho = \kb{\psi}{\psi}$, with $\psi\in\Hscr$ and $\norm{\psi} = 1$.

\paragraph{The space of observables $\Mscr(\Xscr)$ and the space of probabilities
$\Pscr(\Xscr)$} In the general formulation of quantum mechanics, an
\emph{observable} is identified with a \emph{positive operator valued measure}
(POVM). We will consider only observables with outcomes in a finite set $\Xscr$. Then, a POVM on $\Xscr$ is
identified with its discrete density $\Ao:\Xscr\to\Lscr(\Hscr)$, whose values
$\Ao(x)$ are positive operators on $\Hscr$ such that $\sum_{x\in\Xscr}\Ao(x) =
\id$; here, the sum involves a finite number $\abs{\Xscr}$ of terms ($\abs{\Xscr}$
denotes the cardinality of $\Xscr$). Similarly, a probability on $\Xscr$ is
identified with its discrete probability density (or mass function)
$p:\Xscr\to\Rbb$, where $p(x)\geq0$ and $\sum_{x\in\Xscr}p(x) = 1$.

For $\rho\in\sh$, the function $\Ao^\rho(x) = \Tr\{\rho\Ao(x)\}$ is the
discrete probability density on $\Xscr$ which gives the outcome distribution in
a measurement of the observable $\Ao$ performed on the quantum system prepared
in the state $\rho$.

We denote by $\Mscr(\Xscr)$ the set of the observables which are associated with
the system at hand and have outcomes in $\Xscr$; $\Mscr(\Xscr)$ is a convex,
compact subset of $\Lscr(\Hscr)^{\Xscr}$, the finite dimensional linear space
of all functions from $\Xscr$ to $\Lscr(\Hscr)$. Both mappings
$\rho\mapsto\Ao^\rho$ and $\Ao\mapsto\Ao^\rho$ are continuous and affine (i.e.\
preserving convex combinations) from the respective domains into the convex set
$\Pscr(\Xscr)$ of the probabilities on $\Xscr$. As a subset of $\Rbb^{\Xscr}$,
the set $\Pscr(\Xscr)$ is convex and compact. The extreme points of
$\Pscr(\Xscr)$ are the (Kronecker) delta distributions  $\delta_x$, with
$x\in\Xscr$.

\paragraph{Trivial and sharp observables}
An observable $\Ao$ is {\em trivial} if $\Ao = p\id$ for some probability $p$,
where $\id$ is the identity of $\Hscr$. In particular, we will make use of the
uniform distribution $u_\Xscr$ on $\Xscr$, $u_\Xscr(x)=1/\abs\Xscr$, and the
\emph{trivial uniform observable} $\Uo_\Xscr=u_\Xscr \id$.

An observable $\Ao$ is \emph{sharp} if $\Ao(x)$ is a projection $\forall x \in
\Xscr$. Note that we allow $\Ao(x)=0$ for some $x$, which is required when
dealing with sets of observables sharing the same outcome space. Of course, for
every sharp observable we have $|\{x:\Ao(x)\neq0\}|\leq d$.

\paragraph{Bi-observables and compatible observables}
When the outcome set has the product form $\Xscr\times\Yscr$, we speak of
bi-observables. In this case, given the POVM $\Mo\in\Mscr(\Xscr\times \Yscr)$,
we can introduce also the \emph{marginal observables}
$\Mo_{[1]}\in\Mscr(\Xscr)$ and $\Mo_{[2]}\in\Mscr(\Yscr)$ by
\[
\Mo_{[1]}(x) = \sum_{y\in\Yscr}
\Mo(x,y) , \qquad \Mo_{[2]}(y) = \sum_{x\in\Xscr}
\Mo(x,y).
\]
In the same way, for $p\in\Pscr(\Xscr\times \Yscr)$, we get the \emph{marginal probabilities} $p_{[1]}\in\Pscr(\Xscr)$ and
$p_{[2]}\in\Pscr(\Yscr)$. Clearly, $(\Mo_{[i]})^\rho = (\Mo^\rho)_{[i]}$; hence
there is no ambiguity in writing $\Mo^\rho_{[i]}$ for both probabilities.

Two observables $\Ao\in\Mscr(\Xscr)$ and
$\Bo\in\Mscr(\Yscr)$ are \emph{jointly measurable} or \emph{compatible} if
there exists a bi-observable $\Mo\in \Mscr(\Xscr\times \Yscr)$ such that
$\Mo_{[1]}=\Ao$ and $\Mo_{[2]}=\Bo$; then, we call $\Mo$ a \emph{joint
measurement} of $\Ao$ and $\Bo$.

Two classical probabilities $p\in\Pscr(\Xscr)$ and $q\in\Pscr(\Yscr)$ are
always compatible, as they can be seen as the marginals of at least one joint
probability in $\Pscr(\Xscr\times\Yscr)$. Indeed, one can take the {\em product
probability} $p\otimes q$ given by $(p\otimes q)(x,y) = p(x)q(y)$. Clearly,
nothing similar can be defined for two non-commuting quantum observables, for which
instead compatibility usually is a highly nontrivial requirement.

\paragraph{The space of instruments $\Jscr(\Xscr)$} Given a pre-measurement
state $\rho$, a POVM allows to compute the probability distribution of the
measurement outcome. In order to describe also the state change produced
by the measurement, we need the more general mathematical notion of
\emph{instrument}, i.e.\ a measure $\Jcal$ on the outcome set $\Xscr$ taking
values in the set of the completely positive maps on $\Lscr(\Hscr)$. In our
case of finitely many outcomes, an instrument is described by its discrete
density $x\mapsto\Jcal_x$, $x\in \Xscr$, whose general structure is
$\Jcal_x[\rho]=\sum_\alpha J^\alpha_x\rho J^{\alpha\,*}_x$,
$\forall\rho\in\Sscr(\Hscr)$; here, the Kraus operators $J^\alpha_x\in
\Lscr(\Hscr)$ are such that $\sum_{x\in\Xscr}\sum_\alpha J^{\alpha\,*}_x
J^\alpha_x=\id$ and, since $\Hscr$ is finite-dimensional, the index $\alpha$
can be restricted to finitely many values. The \emph{adjoint instrument} is
given by $\Jcal^*_x[F]=\sum_\alpha J^{\alpha\,*}_x FJ^\alpha_x$, $\forall F\in
\Lscr(\Hscr)$. The sum $\Jcal_{\Xscr}=\sum_{x\in\Xscr}\Jcal_x$ is a
\emph{quantum channel}, i.e.\ a completely positive trace preserving map on
$\Sscr(\Hscr)$. We denote by $\Jscr(\Xscr)$ the convex and compact set of all
$\Xscr$-valued instruments for our quantum system.

By setting $\Ao(x)=\Jcal_x^*[\id]=\sum_\alpha J^{\alpha\,*}_x J^\alpha_x$, a
POVM $\Ao\in \Mscr(\Xscr)$ is defined, which is the observable measured by the
instrument $\Jcal$; we say that \emph{the instrument $\Jcal$ implements the
observable $\Ao$}. The state of the system after the measurement, conditioned
on the outcome $x$, is $\Jcal_x[\rho]/\Ao^\rho(x)$. We recall that, given an
observable $\Ao$, one can always find an instrument $\Jcal$ implementing $\Ao$,
but $\Jcal$ is not uniquely determined by $\Ao$, i.e.\ different instruments
$\Jcal$, with different actions on the quantum system, may be used to measure
the same observable $\Ao$.

\paragraph{Sequential measurements and sequentially compatible observables}
Employing the notion of instrument, we can describe a measurement of an
observables $\Ao\in\Mscr(\Xscr)$ followed by a measurement of an observable
$\Bo\in\Mscr(\Yscr)$: a \emph{sequential measurement} of $\Ao$ followed by
$\Bo$ is a bi-observable $\Mo(x,y)=\Jcal^*_x[\Bo(y)]$, where $\Jcal$ is any
instrument implementing $\Ao$.  Its marginals are
$\Mo_{[1]}(x)=\Jcal^*_x[\id]=\Ao(x)$ and
$\Mo_{[2]}(y)=\Jcal^*_{\Xscr}[\Bo(y)]$. We write $\Mo=\Jcal^*(\Bo)$, which is a
measurement in which one first applies the instrument $\Jcal$ to measure $\Ao$,
and then he measures the observable $\Bo$ on the resulting output state; in
this way, he obtains a {joint measurement} of $\Ao$ and
$\Jcal^*_{\Xscr}[\Bo(\cdot)]$, a perturbed version of $\Bo$.

An observable $\Ao\in\Mscr(\Xscr)$ \emph{can be measured without disturbing}
$\Bo\in\Mscr(\Yscr)$  \cite{HeiW10}, or shortly $\Ao$ and $\Bo$ are
\emph{sequentially compatible observables}, if there exists a sequential
measurement $\Mo=\Jcal^*(\Bo)$ such that
\[
\Mo_{[1]}\equiv\Jcal^*_\cdot[\id]=\Ao,\qquad\Mo_{[2]}\equiv\Jcal^*_{\Xscr}[\Bo(\cdot)]=\Bo.
\]
So, a measurement of $\Bo$ at time 1 (i.e.\ after the measurement of
$\Ao$) has the same outcome distribution as a measurement of $\Bo$ at time 0
(i.e.\ before the measurement of $\Ao$).

If $\Ao$ and $\Bo$ are sequentially compatible observables,  they clearly
are also jointly measurable. However, the opposite is not true; two
counterexamples are shown in \cite{HeiW10} and are reported in Appendix
\ref{app:HW}. This happens because we demand to measure just $\Bo$ at time 1,
i.e.\ we do not content ourselves with getting at time 1 the same outcome
distribution of a measurement of $\Bo$ performed at time 0. Indeed, this second
requirement is weaker: it can be satisfied by any couple of jointly measurable
observables $\Ao$ and $\Bo$, by measuring a suitable third observable $\Co$ after
$\Ao$ (with $\Ao$ implemented by an instrument $\Jcal$ which possibly increases
the dimension of the Hilbert space). The definition of sequentially compatible
observables is not symmetric, and indeed there exist couples of observables
such that $\Ao$ can be measured without disturbing $\Bo$, but for which the
opposite is not true. This asymmetry is also reflected in the remarkable fact that,
if the second observable is sharp, then the compatibility of $\Ao$ and $\Bo$
turns out to be equivalent to their sequential compatibility.

\paragraph{Target observables} In this paper, we fix two target observables
with finitely many values, $\Ao\in\Mscr(\Xscr)$ and $\Bo\in\Mscr(\Yscr)$, and
we study how to characterize their uncertainty relations. For
any $\rho\in\sh$, the associated probability distributions $\Ao^\rho$ and
$\Bo^\rho$ can be estimated  by measuring either $\Ao$ or $\Bo$ in many
identical preparations of the quantum system in the state $\rho$. No joint or
sequential measurement of $\Ao$ and $\Bo$ is required at this stage. In Section
\ref{sec:indices} we develop a general theory to quantify the error made
by approximating $\Ao$ and $\Bo$ with compatible observables and we introduce the
notion of optimal approximate joint measurement for $\Ao$ and $\Bo$.

\subsection{Relative and Shannon entropies}\label{sec:c-ent}

In this paper, we will be concerned with entropic quantities of classical type
\cite{CovT06,BA02}; we express them in ``bits'', which means to use
logarithms with base 2: $\log \equiv \log_2$.

The fundamental quantity is the \emph{relative entropy}; although it can be defined for general probability measures, here we only recall the discrete case. Given two probabilities $p,\,q\in \Pscr(\Xscr)$, the relative
entropy of $p$ with respect to $q$ is
\begin{equation}\label{eq:defS}
\Srel pq=
\begin{cases}\displaystyle
\sum_{x\in {\rm supp}\,p} p(x)\log \frac{p(x)}{q(x)} & \text{ if ${\rm supp}\,p\subseteq{\rm supp}\,q$,}
\\ {}+\infty & \text{ otherwise;}
\end{cases}
\end{equation}
it defines an extended real valued function on the product set
$\Pscr(\Xscr)\times\Pscr(\Xscr)$. Also the terms \emph{Kullback-Leibler
divergence} and \emph{information for discrimination} are used for $\Srel pq$.

The relative entropy $\Srel pq$ is a measure of the inefficiency of assuming that
the probability is $q$ when the true probability is $p$ \cite[Sect.\ 2.3]{CovT06}; in other words, it is the amount of information lost when $q$ is used to approximate $p$ \cite[p.\ 51]{BA02}. It appears in data compression theory \cite[Theor.\ 5.4.3]{CovT06}, model selection problems \cite{BA02}, and it is related to the error probability in the context of hypothesis tests that discriminate the two distributions $p$ and $q$ \cite[Theor.\ 11.8.3]{CovT06}. We stress that $\Srel pq$
compares $p$ and $q$, but it is not a distance since it is not symmetric.  As
such, the use of $S$ is particularly convenient when the two probabilities have
different roles; for instance, if $p$ is the true distribution of a given
random variable, while $q$ is the distribution actually used as an
approximation of $p$. This will be our case, where the role of $p$ is played by
the distribution $\Ao^\rho$ (or $\Bo^\rho$) of the target observable $\Ao$ (or
$\Bo$) and  $q$ will be the distribution of some allowed approximation;
in particular, no joint distribution of $p$ and $q$ is involved.

In comparing our results with entropic PURs, we need also the \emph{Shannon entropy} of a probability $p\in\Pscr(\Xscr)$. It is defined by
\begin{equation}\label{def:H}
H(p)=- \sum_{x\in\Xscr}p(x)\log p(x) ,
\end{equation}
and it provides a measure of the uncertainty of a random variable with distribution $p$ \cite[Sect.\ 2.1]{CovT06}.

We collect in the following proposition the main properties of the
relative and Shannon entropies \cite{CovT06,BA02,OhP93,Hol12,Top01}. For the
definition and main properties of lower semicontinuous (LSC) functions, we
refer to \cite[Sect.~1.5]{Ped89}.

\begin{proposition}\label{prop:propHSrel}
The following properties hold.
\begin{enumerate}[(i)]
\item \label{HSbound} $0\leq H(p)\leq \log \abs\Xscr$ and $\Srel pq \geq
    0$, for all $p,\, q\in \Pscr(\Xscr)$.
\item \label{HSnull} $H(p)=0$ if and only if $p=\delta_x$ for some $x$,
    where $\delta_x$ is the delta distribution at $x$. $\Srel pq = 0$ if
    and only if $p=q$.
\item \label{HSrel} $H(u_\Xscr)=\log \abs\Xscr$, and $H(p)=\log \abs\Xscr -\Srel p {u_\Xscr}
    $ for all $p\in \Pscr(\Xscr)$, where $u_\Xscr$ is the
    uniform probability on $\Xscr$.
\item \label{HSinv} $H$ and $S$ are invariant for relabelling of the
    outcomes; that is, if $f:\Xscr'\to\Xscr$ is a bijective map, then
    $H(p\circ f) = H(p)$ and $\Srel{p\circ f}{q\circ f} = \Srel{p}{q}$.
\item \label{HSconvex} $H$ is a concave function on $\Pscr(\Xscr)$, and $S$
    is jointly convex on $\Pscr(\Xscr)\times\Pscr(\Xscr)$, namely
\[
\Srel{\lambda p_1 + (1-\lambda) p_2}{\lambda q_1 + (1-\lambda) q_2}
\leq \lambda \Srel{p_1}{q_1} + (1-\lambda) \Srel{p_2}{q_2}, \ \ \forall\lambda\in [0,1].
\]

\item \label{HScont} The function $p\mapsto H(p)$ is continuous on
    $\Pscr(\Xscr)$. The function $(p,q)\mapsto \Srel pq$ is LSC on $\Pscr(\Xscr)\times\Pscr(\Xscr)$.

\item \label{HSprod} If $p_1,p_2\in\Pscr(\Xscr)$ and
    $q_1,q_2\in\Pscr(\Yscr)$, then $ \Srel{p_1\otimes q_1}{p_2\otimes q_2}
    = \Srel{p_1}{p_2} + \Srel{q_1}{q_2}$.
\end{enumerate}
\end{proposition}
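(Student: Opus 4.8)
The plan is to derive essentially everything from a single analytic fact: the strict convexity of $t\mapsto-\log t$ on $(0,\infty)$, equivalently of $\varphi(t)=t\log t$ with $\varphi(0)=0$. From this I would first record the \emph{log-sum inequality}: for nonnegative numbers $a_1,\dots,a_n$ and $b_1,\dots,b_n$,
\[
\sum_i a_i\log\frac{a_i}{b_i}\geq\Big(\sum_i a_i\Big)\log\frac{\sum_i a_i}{\sum_i b_i},
\]
with equality if and only if the ratios $a_i/b_i$ are all equal; this follows by applying Jensen's inequality to $\varphi$. With this tool in hand, the bulk of the proposition reduces to elementary manipulations of the defining sums \eqref{eq:defS} and \eqref{def:H}.

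For \ref{HSbound}, the bound $H(p)\geq0$ is immediate since each summand $-p(x)\log p(x)$ is nonnegative for $p(x)\in[0,1]$; the nonnegativity $\Srel pq\geq0$ (Gibbs' inequality) is the log-sum inequality applied to $a_x=p(x)$, $b_x=q(x)$ summed over $x\in{\rm supp}\,p$, where $\sum_{x\in{\rm supp}\,p}p(x)=1$ while $\sum_{x\in{\rm supp}\,p}q(x)\leq1$, so the right-hand side equals $\log\big(1/\sum_{x\in{\rm supp}\,p}q(x)\big)\geq0$; and $H(p)\leq\log\abs\Xscr$ then follows from identity \ref{HSrel} together with $\Srel p{u_\Xscr}\geq0$. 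Part \ref{HSnull} is the equality analysis: $H(p)=0$ forces every term $-p(x)\log p(x)$ to vanish, hence $p(x)\in\{0,1\}$, i.e.\ $p=\delta_x$; while $\Srel pq=0$ forces equality in the log-sum step, giving $p(x)/q(x)$ constant on ${\rm supp}\,p$, which with ${\rm supp}\,p\subseteq{\rm supp}\,q$ yields $p=q$. Parts \ref{HSrel}, \ref{HSinv} and \ref{HSprod} are direct computations: \ref{HSrel} by substituting $q=u_\Xscr$ into \eqref{eq:defS} and separating the $\log\abs\Xscr$ factor; \ref{HSinv} by the change of summation index induced by the bijection $f$, which maps supports bijectively; and \ref{HSprod} by expanding $\log\big(p_1(x)q_1(y)\big)-\log\big(p_2(x)q_2(y)\big)$ and using $\sum_x p_1(x)=\sum_y q_1(y)=1$.

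For the convexity statements in \ref{HSconvex}, concavity of $H$ is inherited from concavity of $t\mapsto-t\log t$, while the joint convexity of $S$ follows by applying the log-sum inequality to each outcome $x$ separately, with the four numbers $\lambda p_1(x),(1-\lambda)p_2(x)$ and $\lambda q_1(x),(1-\lambda)q_2(x)$, and then summing over $x$. The continuity of $H$ in \ref{HScont} is clear once one notes that $t\mapsto t\log t$ extends continuously to $[0,1]$, so $H$ is a finite sum of continuous functions on the compact simplex $\Pscr(\Xscr)$.

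The one genuinely delicate point --- and the step I expect to be the main obstacle --- is the lower semicontinuity of $(p,q)\mapsto\Srel pq$ in \ref{HScont}. The difficulty is entirely at the boundary of $\Pscr(\Xscr)\times\Pscr(\Xscr)$: individual summands $p(x)\log\big(p(x)/q(x)\big)$ can be negative, the value jumps to $+\infty$ when a support inclusion fails in the limit, and naive termwise limits do not control the $\liminf$. The cleanest route is to realize $\Srel pq=\sum_x\psi\big(p(x),q(x)\big)$, where $\psi(s,t)=t\,\varphi(s/t)$ is the perspective of the proper convex LSC function $\varphi$, extended by $\psi(s,0)=+\infty$ for $s>0$ and $\psi(0,t)=0$; the perspective of a proper convex LSC function is again proper convex and LSC on $[0,\infty)^2$, and a finite sum of LSC functions is LSC, which gives the claim (and re-proves the joint convexity of \ref{HSconvex}). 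Equivalently, and more transparently, I would use the variational representation $\Srel pq=\sup_{f}\big\{\sum_x p(x)f(x)-\log\sum_x q(x)2^{f(x)}\big\}$ over real functions $f$ on $\Xscr$: for each fixed $f$ the bracketed expression is continuous in $(p,q)$ (the inner sum is strictly positive since $q$ is a probability), so a pointwise supremum of continuous functions is automatically LSC. Here the choice $f=\log\big(p(x)/q(x)\big)$ returns exactly $\Srel pq$ on the finite part, while divergent test functions supported where the inclusion fails force the bracket to $+\infty$; establishing this finite-dimensional Legendre duality is the only part requiring real work, everything else being bookkeeping.
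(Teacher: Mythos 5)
Your proof is correct, but it takes a genuinely different route from the paper, for the simple reason that the paper does not prove this proposition at all: items (i)--(vii) are presented as a collection of standard facts, with the reader referred to the information-theory literature (Cover--Thomas, Burnham--Anderson, Ohya--Petz, Holevo, Tops\o e). Your proposal instead derives everything self-containedly from one convexity principle, and the organization is sound: the log-sum inequality (Jensen applied to $t\log t$) handles nonnegativity, the equality cases in (ii), and joint convexity in (v); items (iii), (iv), (vii) and the continuity of $H$ are indeed bookkeeping; and the one delicate point, lower semicontinuity of $S$, is settled either by writing $\Srel pq=\sum_x\psi\bigl(p(x),q(x)\bigr)$ with $\psi$ the closed perspective of $t\log t$, or by the base-$2$ Donsker--Varadhan representation, a supremum of functions jointly continuous in $(p,q)$. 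Both LSC arguments are valid, and your boundary conventions $\psi(s,0)=+\infty$ for $s>0$ and $\psi(0,t)=0$ are exactly the recession values that make the perspective a closed convex function. It is worth noting that your perspective function is precisely the function $s(u,v)$ that the paper introduces in \eqref{eq:defs} immediately after the proposition, and whose lower semicontinuity at the origin the paper asserts as ``easily proved'' and then exploits in Proposition \ref{prop:propHSrel2}, Theorem \ref{prop:2A} and Lemma \ref{lem:conts}; so your argument not only fills the gap left by the citations but also supplies the justification for a tool the paper itself relies on. What the citation approach buys is brevity and reliance on well-known references; what yours buys is a self-contained treatment with a single unifying mechanism behind all seven items, at the modest cost of having to establish the finite-dimensional variational formula (or the closedness of the perspective), which you correctly identify as the only step requiring real work.
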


In order to derive some further specific properties of the relative entropy that
will be needed in the following, it is useful to introduce the extended real
function $s:[0,1]\times[0,1]\to[-(\log\rme)/\rme,+\infty]$, with
\begin{equation}\label{eq:defs}
s(u,v)=\begin{cases}\displaystyle u\log\frac{u}{v} &\text{if }0<u\leq 1 \text{ and } 0<v\leq 1,\\
0 &\text{if }u=0 \text{ and } 0\leq v\leq 1,\\
+\infty &\text{if }u>0 \text{ and } v=0.
\end{cases}
\end{equation}
In terms of $s$, the relative entropy can be rewritten as $\Srel{p}{q} = \sum_{x\in\Xscr} s(p(x),q(x))$. Note that, unlike the relative entropy, the function $s$ can take also negative values, and
its minimum is $s(1/\rme,1) = -(\log \rme)/\rme$. As a function of $(u,v)$, $s$ is continuous
at all the points of the square $[0,1]\times [0,1]$ except at the origin
$(0,0)$, where it is  easily proved to be LSC.

\begin{proposition}\label{prop:propHSrel2}
For all $\lambda\in (0,1]$ and $q\in\Pscr(\Xscr)$, the map $g_\lambda(p)=\Srel{p}{\lambda p+(1-\lambda)q}$ is finite and continuous in $p\in\Pscr(\Xscr)$. It attains the maximum value
\begin{equation}\label{eq:bAlb}
\max_{p\in\Pscr(\Xscr)}
\Srel{p}{\lambda p + (1-\lambda)q} = \log\frac{1}{\lambda + (1-\lambda) \min_{x\in\Xscr} q(x)} ,
\end{equation}
which is a strictly decreasing function of $\lambda\in(0,1]$.
\end{proposition}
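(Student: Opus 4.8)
The plan is to prove the three assertions in turn — finiteness together with continuity, the value of the maximum, and its strict monotonicity — the key leverage throughout being the joint convexity of the relative entropy recorded in Proposition~\ref{prop:propHSrel}\eqref{HSconvex}.

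First I would dispose of finiteness and continuity. Writing $r_\lambda(x) = \lambda p(x) + (1-\lambda)q(x)$, for every $x$ with $p(x) > 0$ one has $r_\lambda(x) \geq \lambda p(x) > 0$, whence $p(x)/r_\lambda(x) \leq 1/\lambda$; summing gives the uniform bound $0 \leq g_\lambda(p) \leq \log(1/\lambda) < +\infty$, so $g_\lambda$ is everywhere finite (the lower bound being Proposition~\ref{prop:propHSrel}\eqref{HSbound}). For continuity I would use the representation $g_\lambda(p) = \sum_{x\in\Xscr} s\big(p(x), r_\lambda(x)\big)$ and examine each summand as a function of the coordinate $p(x)$. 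If $q(x) > 0$, then $r_\lambda(x) \geq (1-\lambda)q(x) > 0$ uniformly in $p$, so the pair $\big(p(x), r_\lambda(x)\big)$ stays away from the origin and the continuity of $s$ off $(0,0)$ applies; if $q(x) = 0$, then $r_\lambda(x) = \lambda p(x)$ and the summand reduces to $s(p(x), \lambda p(x)) = -p(x)\log\lambda$ for all $p(x)\in[0,1]$ (with value $0$ at $p(x)=0$), which is linear, hence continuous. Thus $g_\lambda$ is a finite sum of continuous functions of the coordinates of $p$.

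Next I would compute the maximum. The decisive observation is that $g_\lambda$ is convex: the map $p \mapsto \big(p,\, \lambda p + (1-\lambda)q\big)$ is affine, and precomposing the jointly convex function $S(\cdot\|\cdot)$ with it yields a convex function of $p$. Since $g_\lambda$ is continuous and $\Pscr(\Xscr)$ is compact and convex, its maximum is attained at an extreme point, i.e.\ at a delta distribution $\delta_x$. A direct substitution gives $g_\lambda(\delta_x) = \log\frac{1}{\lambda + (1-\lambda)q(x)}$, and maximizing over $x\in\Xscr$ — equivalently, minimizing $q(x)$ — produces formula \eqref{eq:bAlb}.

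Finally, strict monotonicity is elementary. Setting $q_{\min} = \min_{x\in\Xscr} q(x)$, the right-hand side of \eqref{eq:bAlb} equals $-\log\big(q_{\min} + \lambda(1-q_{\min})\big)$, whose argument is strictly increasing in $\lambda$ as soon as $q_{\min} < 1$, which holds whenever $\abs\Xscr \geq 2$ since then $q_{\min} \leq 1/\abs\Xscr < 1$; composing with the strictly decreasing function $-\log$ gives the claim. I expect the only delicate point to be the continuity at the boundary of $\Pscr(\Xscr)$, where a probability component may vanish; this is handled cleanly by the case split on $q(x) = 0$ above. The value of the maximum and its monotonicity then follow with essentially no further work once the convexity of $g_\lambda$ and the fact that extreme points of $\Pscr(\Xscr)$ are the deltas are in hand.
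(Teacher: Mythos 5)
Your proof is correct and follows essentially the same route as the paper's: continuity is obtained summand-by-summand for $s\big(p(x),\lambda p(x)+(1-\lambda)q(x)\big)$, convexity of $g_\lambda$ comes from precomposing the jointly convex relative entropy with the affine map $p\mapsto\big(p,\lambda p+(1-\lambda)q\big)$, the maximum is then located at the extreme points $\delta_x$ of $\Pscr(\Xscr)$ and evaluated there, and monotonicity is a direct check. Your explicit uniform bound $g_\lambda(p)\leq\log(1/\lambda)$ and the monotone-composition argument for strict decrease (in place of the paper's derivative computation) are only cosmetic differences.

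One small slip worth fixing: in the continuity step, your case $q(x)>0$ asserts $r_\lambda(x)\geq(1-\lambda)q(x)>0$ uniformly in $p$, which fails at the endpoint $\lambda=1$ (which the statement includes), since then $r_1(x)=p(x)$ and the pair $\big(p(x),r_1(x)\big)$ does approach the origin as $p(x)\to 0$. The patch is immediate: for $\lambda=1$ every summand is $s\big(p(x),p(x)\big)\equiv 0$ — equivalently, your own bound $g_\lambda(p)\leq\log(1/\lambda)$ forces $g_1\equiv 0$ — so continuity there is trivial; alternatively one can argue as the paper does, by computing the limit of $s\big(u,\lambda u+(1-\lambda)v\big)$ as $u\to 0^+$ directly, which equals $0$ for every $v\in[0,1]$ and every $\lambda\in(0,1]$.
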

\begin{proof}
Let $\lambda\in (0,1]$. For all $u,v\in [0,1]$, the condition $u>0$
implies $\lambda u + (1-\lambda)v >0$, hence
$$
s(u,\lambda u + (1-\lambda)v) = \begin{cases}\displaystyle u\log\frac{u}{\lambda u
+ (1-\lambda)v} &\text{if }0<u\leq 1 ,\\
0 &\text{if }u=0 .
\end{cases}
$$
Clearly, this is a continuous function of $u\in (0,1]$. To see that it is
continuous also at $0$, we take the limit
\begin{multline*}
\lim_{u\to 0^+} u\log\frac{u}{\lambda u + (1-\lambda)v}  = \lim_{u\to 0^+} u\log u
- \lim_{u\to 0^+} u\log[\lambda u + (1-\lambda)v] \\
{} = - \lim_{u\to 0^+} u\log[\lambda u + (1-\lambda)v]
= \begin{cases}\displaystyle 0 &\text{if } v\neq 0 ,\\
- \frac{1}{\lambda}\lim_{u\to 0^+} \lambda u\log(\lambda u) = 0 &\text{if } v = 0 .
\end{cases}
\end{multline*}
Since $g_\lambda(p) = \sum_x s\left(p(x),\lambda p(x) + (1-\lambda) q(x)\right)$, the
continuity of $g_\lambda$ then follows. Since $g_\lambda$ is also convex on
$\Pscr(\Xscr)$ by Proposition \ref{prop:propHSrel}, item (\ref{HSconvex}),
and the set $\Pscr(\Xscr)$ is compact, the function $g_\lambda$ takes its maximum
at some extreme point $\delta_x$ of $\Pscr(\Xscr)$. It follows that
\begin{multline*}
\sup_{p\in\Pscr(\Xscr)} \Srel{p}{\lambda p + (1-\lambda)q} = \max_{x\in\Xscr}
\Srel{\delta_x}{\lambda \delta_x + (1-\lambda)q}
\\ {} = \log\frac{1}{\lambda + (1-\lambda)\min_{x\in\Xscr} q(x)} .
\end{multline*}
Setting $q_{\rm min} = \min_{x\in\Xscr} q(x)$, the derivative in $\lambda$ of the
last expression is
\[
\frac{\rmd\ }{\rmd\lambda}\left(\log\frac{1}{\lambda + (1-\lambda) q_{\rm min}}\right) =
\frac{\left(q_{\rm min}-1\right)\log \rme}{(1-q_{\rm min})\lambda + q_{\rm min}}\, ,
\]
which is negative for all $\lambda\in (0,1]$ since $q_{\rm min}\leq 1/|\Xscr|<1$.
Thus, the right hand side of \eqref{eq:bAlb} is strictly decreasing in $\lambda$.
\end{proof}

Note that, if $\lambda = 0$, then $g_0(p)=\Srel{p}{q}$ is an extended real LSC
function on $\Pscr(\Xscr)$ by Proposition
\ref{prop:propHSrel}, item (\ref{HScont}). However, it is not difficult to show along the lines of
the previous proof that the maximum in \eqref{eq:bAlb} is still attained, and
$$
\max_{p\in\Pscr(\Xscr)}
\Srel{p}{q} = \begin{cases} \displaystyle \log\frac{1}{\min_x q(x)} & \text{ if ${\rm supp}\,q = \Xscr$,}
\\
+\infty & \text{ otherwise} .
\end{cases}
$$

\section{Entropic measurement uncertainty relations}\label{sec:indices}

In general, the two target  observables $\Ao$ and $\Bo$, introduced at the
end of Section \ref{Ob+In}, are incompatible, and only ``approximate'' joint
measurements are possible for them.
Moreover, any measurement of $\Ao$ may disturb a subsequent measurement of
$\Bo$, in a way that the resulting distribution of $\Bo$ can be very far from
its unperturbed version; this disturbance may be present even when the two
observables are compatible. Typically, such a disturbance of $\Ao$ on $\Bo$ can
not be removed, nor just made arbitrarily small, unless we drop the requirement
of exactly measuring $\Ao$. However, in both cases, the measurement
uncertainties on $\Ao$ and $\Bo$ can not always be made equally small. The
quantum nature of $\Ao$ and $\Bo$ relates their measurement uncertainties, so
that improving the approximation of $\Ao$ affects the quality of the
corresponding approximation of $\Bo$ and vice versa. Incompatibility of $\Ao$
and $\Bo$ on the one hand, and the disturbance induced on $\Bo$ by a
measurement of $\Ao$ on the other hand, are alternative manifestations of the
quantum relation between the two observables, and as such deserve
different approaches.

Our aim is now to quantify both these types of measurement uncertainty
relations between $\Ao$ and $\Bo$ by means of suitable informational
quantities. In the case of incompatible observables, we will find an \emph{entropic incompatibility degree}, encoding the minimum total error affecting
any approximate joint measurement of $\Ao$ and $\Bo$. Similarly, when the
observable $\Bo$ is measured after an approximate version of $\Ao$, the resulting
uncertainties on both observables will produce an \emph{error/disturbance tradeoff} for $\Ao$ and $\Bo$. In both cases, we will look
for an optimal bi-observable $\Mo$ whose marginals $\Mo_{[1]}$ and $\Mo_{[2]}$
are the best approximations of the two target observables $\Ao$ and $\Bo$.
However, the different points of view will be reflected in the fact that we will
optimize over $\Mo$ in two different sets, according to the case at hand.

\subsection{Error function and entropic divergence for observables}

We now regard any bi-observable $\Mo\in \Mscr(\Xscr\times\Yscr)$ as an
\emph{approximate joint measurement} of $\Ao$ and $\Bo$ and we want an
informational quantification of how far its marginals $\Mo_{[1]}$ and
$\Mo_{[2]}$ are from correctly approximating the two target observables
$\Ao$ and $\Bo$.
Following \cite{Wer04,BLW14a,BLW14b}, these two approximations will be judged
by comparing (within our entropic approach) the distribution $\Mo_{[1]}^\rho$
with $\Ao^\rho$, and the distribution $\Mo_{[2]}^\rho$ with $\Bo^\rho$,
for all states $\rho$. Note that we can not compare the output of
$\Mo_{[1]}$ with that of $\Ao$, and the output of $\Mo_{[2]}$ with that of
$\Bo$, in one and the same experiment. Indeed, although our bi-observable $\Mo$ is a
joint measurement of $\Mo_{[1]}$ and $\Mo_{[2]}$, there is no way to turn
it into a joint measurement of the four observables $\Ao$, $\Mo_{[1]}$,
$\Mo_{[2]}$ and $\Bo$, when $\Ao$ and $\Bo$ are not compatible.
Nevertheless,
even if $\Ao$ and $\Bo$ are incompatible, each of them can be measured in
independent repetitions of a preparation (state) $\rho$ of the system.
Similarly, any bi-observable $\Mo$ can be measured in other independent
repetitions of the same preparation. So, all the three probability
distributions $\Ao^\rho$, $\Bo^\rho$, $\Mo^\rho$ can be estimated from
independent experiments, and then they can be compared without any hypothesis of
compatibility among $\Ao$, $\Bo$ and $\Mo$.

The first step is to quantify the inefficiency of the distribution
approximations $\Ao^\rho\simeq\Mo_{[1]}^\rho$ and
$\Bo^\rho\simeq\Mo_{[2]}^\rho$, given the bi-observable $\Mo$. According to the
discussion in Section \ref{sec:c-ent}, the natural way to quantify the loss of
information in each approximation is to use the relative entropy. Remarkably, the relative entropy properties allow us to give a single quantification for the whole couple approximation $(\Ao^\rho,\Bo^\rho)\simeq(\Mo_{[1]}^\rho,\Mo_{[2]}^\rho)$: since $\Srel{\Ao^\rho}{\Mo_{[1]}^\rho}$ and $\Srel{\Bo^\rho}{\Mo_{[2]}^\rho}$ are homogeneous and dimensionless, they can be added to give the total amount of information loss.

\begin{definition}\label{def:erf}
For any bi-observable $\Mo\in\Mscr(\Xscr\times\Yscr)$, the \emph{error function} of the approximation $(\Ao,\Bo)$ $\simeq (\Mo_{[1]},\Mo_{[2]})$ is the state-dependent quantity
\begin{equation}\label{statedependentdivergence}
\sddiv{\Ao,\Bo}{\Mo}(\rho)=\Srel{\Ao^\rho}{ \Mo^\rho_{[1]}}+ \Srel{\Bo^\rho}{ \Mo^\rho_{[2]}}.
\end{equation}
\end{definition}

Note that the approximating distributions $\Mo^\rho_{[i]}$ appear in the second entry of the relative entropy, consistently with the discussion following its definition \eqref{eq:defS}.

By Proposition \ref{prop:propHSrel}, item (\ref{HSprod}), we can rewrite
\eqref{statedependentdivergence} in the form
\begin{equation}\label{eq:Soprod}
\sddiv{\Ao,\Bo}{\Mo}(\rho)= \Srel{\Ao^\rho\otimes \Bo^\rho}{ \Mo^\rho_{[1]}\otimes \Mo^\rho_{[2]}}.
\end{equation}
It is important to note that the error function itself is a relative entropy; this can be mathematically useful in some situations (see e.g.~the proof of Theorem \ref{prop:invarD^}). Note that, whether $\Ao$ and $\Bo$ are compatible or not, $\Ao^\rho\otimes \Bo^\rho$ is the distribution of their measurements in two independent preparations of the same state $\rho$.

The second step is to quantify the inefficiency of the observable
approximations $\Ao\simeq\Mo_{[1]}$ and $\Bo\simeq\Mo_{[2]}$ by means of the
marginals of a given bi-observable $\Mo$, without reference to any particular state. In order to construct a state-independent quantity, we take the worst
case in \eqref{statedependentdivergence} with respect to the system state
$\rho$.
\begin{definition}\label{def:D}
The \emph{entropic divergence of $\Mo\in \Mscr(\Xscr\times\Yscr)$ from
$(\Ao,\Bo)$} is the quantity
\begin{equation}\label{eqdef:D}
\Div{\Ao,\Bo}{\Mo}= \sup_{\rho\in\Sscr(\Hscr)} \sddiv{\Ao,\Bo}{\Mo}(\rho) \equiv \sup_{\rho\in\Sscr(\Hscr)}
\left\{\Srel{\Ao^\rho}{ \Mo^\rho_{[1]}}+ \Srel{\Bo^\rho}{ \Mo^\rho_{[2]}}\right\}.
\end{equation}
\end{definition}
The entropic divergence $\Div{\Ao,\Bo}{\Mo}$ quantifies the worst
total loss of information due to the couple approximation $(\Ao,\Bo)\simeq(\Mo_{[1]},\Mo_{[2]})$. Note that there is a unique supremum over $\rho$, so
that $\Div{\Ao,\Bo}{\Mo}$ takes into account any
possible balancing and compensation between the information losses in the first and in the
second approximation. The entropic divergence depends only on $\Mo_{[1]}$ and $\Mo_{[2]}$, and so it is the same for different bi-observables with
equal marginals. If $\Ao$ and $\Bo$ are compatible and $\Mo$ is any of
their joint measurements, then $\Div{\Ao,\Bo}{\Mo} = 0$ by Proposition
\ref{prop:propHSrel}, item (\ref{HSnull}).

\begin{theorem}\label{prop:2A}
Let $\Ao\in\Mscr(\Xscr)$, $\Bo\in\Mscr(\Yscr)$ be the target observables.
The error function and the entropic divergence defined above have the following properties.
\begin{enumerate}[(i)]
\item \label{slsc} The function
    $\sddiv{\Ao,\Bo}{\Mo}:\Sscr(\Hscr)\to[0,+\infty]$ is convex and LSC,
    $\forall \Mo\in \Mscr(\Xscr\times \Yscr)$.

\item \label{Dlsc} The function $\Div{\Ao,\Bo}{\cdot} : \Mscr(\Xscr\times
    \Yscr)\to [0,+\infty]$ is convex and LSC.

\item \label{Dfinite} For any $\Mo\in\Mscr(\Xscr\times\Yscr)$, the
    following three statements are equivalent:
\begin{enumerate}[(a)]
\item $\Div{\Ao,\Bo}{\Mo}<+\infty$,
\item $\operatorname{ker}\Mo_{[1]}(x)\subseteq\operatorname{ker}\Ao(x),\
    \forall x, \quad\text{and}\quad
    \operatorname{ker}\Mo_{[2]}(y)\subseteq\operatorname{ker}\Bo(y),\
    \forall y$,
\item $\sddiv{\Ao,\Bo}{\Mo}$ is bounded and continuous.
\end{enumerate}

\item \label{worstrho} $\displaystyle\Div{\Ao,\Bo}{\Mo}=
    \max_{\rho\in\Sscr(\Hscr),\ \rho\    \mathrm{pure}}
    \sddiv{\Ao,\Bo}{\Mo}(\rho) $, where the maximum can be any
    value in the extended interval $[0,+\infty]$.

\item \label{Sddivinv} The error $\sddiv{\Ao,\Bo}{\Mo}(\rho)$ is invariant under an overall unitary conjugation of $\Ao$, $\Bo$, $\Mo$ and $\rho$, and a relabelling of the outcome spaces $\Xscr$ and $\Yscr$.

\item \label{Dinv} The entropic divergence $\Div{\Ao,\Bo}{\Mo}$ is invariant under an overall unitary conjugation of $\Ao$, $\Bo$ and $\Mo$, and a relabelling of the outcome spaces $\Xscr$ and $\Yscr$.
\end{enumerate}
\end{theorem}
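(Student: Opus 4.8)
The plan is to treat the six items in three groups, relying throughout on two composition principles: the maps $\rho\mapsto\Ao^\rho$, $\rho\mapsto\Bo^\rho$ and, for fixed $\rho$, the maps $\Mo\mapsto\Mo^\rho_{[1]}$, $\Mo\mapsto\Mo^\rho_{[2]}$ are all affine and continuous, while $S$ is jointly convex and LSC by Proposition \ref{prop:propHSrel}(\ref{HSconvex}) and (\ref{HScont}). For item (\ref{slsc}) I would observe that $\rho\mapsto(\Ao^\rho,\Mo^\rho_{[1]})$ is an affine continuous map from $\Sscr(\Hscr)$ into $\Pscr(\Xscr)\times\Pscr(\Xscr)$; precomposing the jointly convex LSC function $S$ with it gives that $\rho\mapsto\Srel{\Ao^\rho}{\Mo^\rho_{[1]}}$ is convex and LSC, and the same for the $\Bo$-term, so their sum $\sddiv{\Ao,\Bo}{\Mo}$ is convex and LSC. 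For item (\ref{Dlsc}) the identical argument with the roles of $\rho$ and $\Mo$ interchanged shows that, for each fixed $\rho$, the map $\Mo\mapsto\sddiv{\Ao,\Bo}{\Mo}(\rho)$ is convex and LSC; then $\Div{\Ao,\Bo}{\cdot}$ is a pointwise supremum over $\rho$ of such functions, and suprema preserve both convexity and lower semicontinuity.

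The core of the theorem is item (\ref{Dfinite}), which I would prove as the cycle (b)$\Rightarrow$(c)$\Rightarrow$(a)$\Rightarrow$(b). Here (c)$\Rightarrow$(a) is immediate from the definition of $\Div{\Ao,\Bo}{\Mo}$ as a supremum, and (a)$\Rightarrow$(b) is cleanest by contraposition: if some $\psi\in\ker\Mo_{[1]}(x)\setminus\ker\Ao(x)$ exists, then the pure state $\rho=\kb{\psi}{\psi}$ gives $\Mo^\rho_{[1]}(x)=0<\Ao^\rho(x)$, the support inclusion required in \eqref{eq:defS} fails, and $\sddiv{\Ao,\Bo}{\Mo}(\rho)=+\infty$, forcing $\Div{\Ao,\Bo}{\Mo}=+\infty$ (and symmetrically for $\Bo$). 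The hard implication, and the main obstacle of the whole theorem, is (b)$\Rightarrow$(c). I would first invoke the finite-dimensional operator fact that $\ker\Mo_{[1]}(x)\subseteq\ker\Ao(x)$ is equivalent to a domination $\Ao(x)\leq C_x\,\Mo_{[1]}(x)$ for some constant $C_x$ (on $(\ker\Mo_{[1]}(x))^\perp$ the operator $\Mo_{[1]}(x)$ is bounded below, while $\Ao(x)$ already vanishes on $\ker\Mo_{[1]}(x)$). Tracing against $\rho$ gives $\Ao^\rho(x)\leq C_x\,\Mo^\rho_{[1]}(x)$ uniformly in $\rho$, which bounds every ratio on the support and produces a finite bound on $\Srel{\Ao^\rho}{\Mo^\rho_{[1]}}$ independent of $\rho$, hence boundedness of the error. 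For continuity I would write the error as $\sum_x s\big(\Ao^\rho(x),\Mo^\rho_{[1]}(x)\big)$ and recall from the discussion after \eqref{eq:defs} that $s$ is continuous off the origin; the domination ensures that whenever $\Mo^\rho_{[1]}(x)\to0$ one also has $\Ao^\rho(x)\to0$ with bounded ratio, so each summand tends to $0$ and the only possible singularity, at $(0,0)$, is removable. Continuity of each term then yields continuity of $\sddiv{\Ao,\Bo}{\Mo}$.

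For item (\ref{worstrho}) I would combine convexity from (\ref{slsc}) with the fact that every $\rho$ is a finite convex combination of pure states (spectral decomposition): convexity gives $\sddiv{\Ao,\Bo}{\Mo}(\rho)\leq\max_i\sddiv{\Ao,\Bo}{\Mo}(e_i)$ over the pure constituents, so the supremum over $\Sscr(\Hscr)$ equals the supremum over pure states. Attainment then splits along item (\ref{Dfinite}): in the finite case the error is continuous, so it attains its maximum on the compact set of pure states; in the infinite case the proof of (a)$\Leftrightarrow$(b) exhibits a pure state at which the error is $+\infty$, so the value $+\infty$ is again attained. This is why the maximum can be any element of $[0,+\infty]$.

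Finally, items (\ref{Sddivinv}) and (\ref{Dinv}) are routine. Under a joint unitary conjugation $\Ao\mapsto U\Ao U^*$, $\Bo\mapsto U\Bo U^*$, $\Mo\mapsto U\Mo U^*$, $\rho\mapsto U\rho U^*$, cyclicity of the trace gives $\Tr\{U\rho U^*\,U\Ao(x)U^*\}=\Tr\{\rho\Ao(x)\}=\Ao^\rho(x)$, so every outcome distribution, and hence both relative entropies in \eqref{statedependentdivergence}, are unchanged; a relabelling of $\Xscr$ or $\Yscr$ by a bijection leaves the relative entropies invariant by Proposition \ref{prop:propHSrel}(\ref{HSinv}). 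This proves (\ref{Sddivinv}). Since unitary conjugation is moreover a bijection of $\Sscr(\Hscr)$, the invariance of the error function for each $\rho$ passes to the supremum defining $\Div{\Ao,\Bo}{\Mo}$, giving (\ref{Dinv}).
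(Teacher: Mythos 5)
Your proposal is correct and follows essentially the same route as the paper: the core implication (b)$\Rightarrow$(c) rests on exactly the operator inequality that underlies the paper's Lemma \ref{lem:conts} (in finite dimensions, $\ker\Mo_{[1]}(x)\subseteq\ker\Ao(x)$ gives $\Ao(x)\leq C_x\,\Mo_{[1]}(x)$ via $\Ao(x)\leq c_{\rm max}P_{\Ao(x)}\leq c_{\rm max}P_{\Mo_{[1]}(x)}$ and $\Mo_{[1]}(x)\geq c_{\rm min}P_{\Mo_{[1]}(x)}$), and your squeeze $u\log u\leq s(u,v)\leq u\log C_x$ is an equivalent repackaging of the paper's bound $\abs{s_{A,B}(\rho)}\leq MB^\rho$. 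The only other deviations are cosmetic: you run the equivalence cycle as (b)$\Rightarrow$(c)$\Rightarrow$(a)$\Rightarrow$(b) instead of (a)$\Rightarrow$(b)$\Rightarrow$(c)$\Rightarrow$(a), and in item (\ref{worstrho}) you reduce to pure states by spectral decomposition plus Jensen's inequality rather than by the extreme-point maximization principle for convex LSC functions, both of which are sound.
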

\begin{proof}
(\ref{slsc}) The function $\sddiv{\Ao,\Bo}{\Mo}$ is the sum of two terms which
are convex, because the mapping $\rho\mapsto \Xo^\rho$ is affine for any
observable $\Xo$ and by Proposition
\ref{prop:propHSrel}, item (\ref{HSconvex}); hence $\sddiv{\Ao,\Bo}{\Mo}$ is convex. Moreover, each
term is LSC, since $\rho\mapsto \Xo^\rho$ is continuous and because of Proposition \ref{prop:propHSrel}, item (\ref{HScont}); so the sum
$\sddiv{\Ao,\Bo}{\Mo}$ is LSC by \cite[Prop.~1.5.12]{Ped89}.

(\ref{Dlsc}) Each mapping $\Mo\mapsto\Mo_{[i]}^\rho$ is affine and continuous,
and the functions $\Srel{\Ao^\rho}{\cdot}$, $\Srel{\Bo^\rho}{\cdot}$ are convex
and LSC by Proposition \ref{prop:propHSrel}, items (\ref{HSconvex}) and
(\ref{HScont}). It follows that $\Mo\mapsto \Srel{\Ao^\rho}{\Mo_{[1]}^\rho}$
and $\Mo\mapsto \Srel{\Bo^\rho} {\Mo_{[2]}^\rho}$ are also convex and LSC
functions on $\Mscr(\Xscr\times\Yscr)$; hence, such are their sum and the
supremum $\Div{\Ao,\Bo}{\cdot}$ \cite[Prop.~1.5.12]{Ped89}.

(\ref{Dfinite}) Let us show (a)$\Rightarrow$(b)$\Rightarrow$(c)$\Rightarrow$(a).
\\
(a)$\Rightarrow$(b). If
$\operatorname{ker}\Mo_{[1]}(x)\nsubseteq\operatorname{ker}\Ao(x)$ for some
$x$, then we could take a pure state $\rho=\kb{\psi}{\psi}$ with $\psi$
belonging to $\operatorname{ker}\Mo_{[1]}(x)$ but not to
$\operatorname{ker}\Ao(x)$, so that $\Mo_{[1]}^\rho(x)=0$ while
$\Ao^\rho(x)>0$; thus, we would get $\Srel{\Ao^\rho}{
\Mo^\rho_{[1]}}=+\infty$ and the contradiction $\Div{\Ao,\Bo}{\Mo}=+\infty$.
\\
(b)$\Rightarrow$(c). The function $\sddiv{\Ao,\Bo}{\Mo}$ is a finite sum of terms
of the kind $ s\big(\Ao^\rho(x), \Mo^\rho_{[1]}(x)\big)$ or
$s\big(\Bo^\rho(y),\Mo^\rho_{[2]}(y)\big)$, where $s$ is the function
defined in \eqref{eq:defs}. Under the hypothesis (b), each of these terms is a
bounded and continuous function of $\rho$ by Lemma \ref{lem:conts} below. We
thus conclude that $\sddiv{\Ao,\Bo}{\Mo}$ is bounded and continuous.
\\
(c)$\Rightarrow$(a). Trivial, as $\Div{\Ao,\Bo}{\Mo} =
\sup_{\rho\in\sh}\sddiv{\Ao,\Bo}{\Mo}(\rho)$.

(\ref{worstrho}) If $\Div{\Ao,\Bo}{\Mo}<+\infty$, then $\sddiv{\Ao,\Bo}{\Mo}$
is a bounded and continuous function on the compact set $\Sscr(\Hscr)$ by item
(\ref{Dfinite}) above, and thus it attains a maximum; moreover,
$\sddiv{\Ao,\Bo}{\Mo}$ is convex, hence it has at least a maximum point among
the extreme points of $\Sscr(\Hscr)$, which are the pure states. If instead
$\Div{\Ao,\Bo}{\Mo}=+\infty$, then
$\operatorname{ker}\Mo_{[1]}(x)\nsubseteq\operatorname{ker}\Ao(x)$ for some
$x$, or $\operatorname{ker}\Mo_{[2]}(y)\nsubseteq\operatorname{ker}\Bo(y)$ for
some $y$ again by item (\ref{Dfinite}). In this case, every pure state
$\rho=\kb{\psi}{\psi}$ with
$\psi\in\operatorname{ker}\Mo_{[1]}(x)\setminus\operatorname{ker} \Ao(x)$, or
$\psi\in\operatorname{ker}\Mo_{[2]}(y)\setminus\operatorname{ker}\Bo(y)$, is
such that $\sddiv{\Ao,\Bo}{\Mo}(\rho) = +\infty$, and thus it is a maximum point
of $\sddiv{\Ao,\Bo}{\Mo}$.

(\ref{Sddivinv}) For any unitary operator $U$ on $\Hscr$, we have $(U^*\Ao U)^{U^*\rho U} = \Ao^\rho$, $(U^*\Bo U)^{U^*\rho U} = \Bo^\rho$, and, since $(U^*\Mo U)_{[i]} = U^*\Mo_{[i]}U$, also $(U^*\Mo U)_{[i]}^{U^*\rho U} = \Mo_{[i]}^\rho$. Therefore, by the definition \eqref{statedependentdivergence} of the error function, we get the equality
$$
\sddiv{U^*\Ao U,\,U^*\Bo U}{U^*\Mo U}(U^*\rho U) = \sddiv{\Ao,\Bo}{\Mo}(\rho).
$$
The invariance under relabelling of the outcomes is an immediate consequence of the analogous property of the relative entropy (Proposition \ref{prop:propHSrel}, item (\ref{HSinv})).

(\ref{Dinv}) The two invariances immediately follow by the previous item. We check only the first one:
\begin{align*}
& \Div{U^*\Ao U,\, U^*\Bo U}{U^*\Mo U} = \sup_{\rho\in\sh} \sddiv{U^*\Ao U,\,U^*\Bo U}{U^*\Mo U}(\rho) \\
& \qquad \qquad = \sup_{\rho\in\sh} \sddiv{U^*\Ao U,\,U^*\Bo U}{U^*\Mo U}(U^*\rho U) = \sup_{\rho\in\sh} \sddiv{\Ao,\Bo}{\Mo}(\rho) \\
& \qquad \qquad = \Div{\Ao,\Bo}{\Mo} ,
\end{align*}
where in the second equality we have used the fact that $U\sh U^* = \sh$.
\end{proof}

An essential step in the last proof is the following lemma.
\begin{lemma}\label{lem:conts}
Suppose $A,B\in\Lscr(\Hscr)$ are such that $0\leq A\leq\id$ and $0\leq
B\leq\id$, and assume that $\ker B\subseteq\ker A$. Let $A^\rho = \Tr\{A\rho\}$, $B^\rho = \Tr\{B\rho\}$, and let $s$ be the function defined in \eqref{eq:defs}. Then, the function $s_{A,B}
: \sh\to [0,+\infty]$, with
$s_{A,B}(\rho) = s(A^\rho,B^\rho)$, is bounded and continuous.
\end{lemma}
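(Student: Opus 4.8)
The plan is to exploit the fact, recorded just before Proposition~\ref{prop:propHSrel2}, that $s$ is continuous on the whole square $[0,1]\times[0,1]$ except at the origin. Since $0\leq A\leq\id$ and $0\leq B\leq\id$, the affine maps $\rho\mapsto A^\rho=\Tr\{A\rho\}$ and $\rho\mapsto B^\rho=\Tr\{B\rho\}$ are continuous and send $\sh$ into $[0,1]$; hence $\rho\mapsto(A^\rho,B^\rho)$ is continuous into $[0,1]\times[0,1]$, and $s_{A,B}$ is automatically continuous at every state $\rho$ for which $(A^\rho,B^\rho)\neq(0,0)$. All the work therefore concentrates on the states at which $(A^\rho,B^\rho)=(0,0)$, and the hypothesis $\operatorname{ker}B\subseteq\operatorname{ker}A$ is what makes those points harmless.

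First I would dispose of the value $+\infty$. The only way to get $s_{A,B}(\rho)=+\infty$ is to have $A^\rho>0$ and $B^\rho=0$. But if $B^\rho=\Tr\{B\rho\}=0$ with $B\geq0$ and $\rho\geq0$, then $\Tr\{B^{1/2}\rho B^{1/2}\}=0$ forces $B^{1/2}\rho^{1/2}=0$, so the support of $\rho$ lies in $\operatorname{ker}B\subseteq\operatorname{ker}A$; consequently $A^\rho=\Tr\{A\rho\}=0$. Thus $B^\rho=0$ implies $A^\rho=0$, so $s_{A,B}$ never takes the value $+\infty$, and the only points left to examine are exactly those $\rho_0$ with $A^{\rho_0}=B^{\rho_0}=0$, where $s_{A,B}(\rho_0)=s(0,0)=0$.

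The key quantitative ingredient is an operator domination $A\leq cB$ for some constant $c>0$. Indeed, $\operatorname{ker}B\subseteq\operatorname{ker}A$ means $A$ vanishes on $\operatorname{ker}B$, so both $A$ and $B$ are supported on the finite-dimensional subspace $\mathcal{K}=(\operatorname{ker}B)^\perp$, on which $B$ is strictly positive and hence bounded below by its smallest nonzero eigenvalue $\beta>0$; since $A\leq\id$, this gives $A\leq\beta^{-1}B$, and I take $c=\max\{1,\beta^{-1}\}$. Taking traces against any state yields $A^\rho\leq cB^\rho$, so $A^\rho/B^\rho\leq c$ whenever $B^\rho>0$. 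Now I estimate $s_{A,B}(\rho)$ as $\rho\to\rho_0$: when $A^\rho=0$ it equals $0$, and when $A^\rho>0$ (which forces $B^\rho>0$) I write $u=A^\rho$, $v=B^\rho\in(0,1]$ and use the squeeze
\[
u\log u\;\leq\;u\log\frac{u}{v}\;\leq\;u\log c ,
\]
where the left bound holds because $\log v\leq0$ and the right because $u/v\leq c$. As $\rho\to\rho_0$ one has $u=A^\rho\to0$ by continuity, so both sides tend to $0$; hence $s_{A,B}(\rho)\to0=s_{A,B}(\rho_0)$, establishing continuity at $\rho_0$.

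Being continuous everywhere on the compact set $\sh$ and real-valued, $s_{A,B}$ is automatically bounded, which completes the argument. The hard part is precisely the continuity at the degenerate states $A^{\rho_0}=B^{\rho_0}=0$: there both arguments of $s$ approach the origin, where $s$ is merely LSC, so the qualitative inclusion $\operatorname{ker}B\subseteq\operatorname{ker}A$ must be upgraded to the quantitative comparison $A\leq cB$ that keeps the ratio $A^\rho/B^\rho$ bounded and lets the squeeze estimate neutralize the potential singularity. Everything else is routine continuity and compactness.
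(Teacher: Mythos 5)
Your proof is correct and takes essentially the same route as the paper's: both arguments reduce everything to the degenerate states where $A^\rho=B^\rho=0$ (elsewhere continuity is routine), both upgrade the kernel inclusion $\ker B\subseteq\ker A$ to an operator domination that keeps the ratio $A^\rho/B^\rho$ bounded (your $A\leq P_B\leq c_{\rm min}(B)^{-1}B$ is the paper's chain $A\leq c_{\rm max}(A)P_A\leq c_{\rm max}(A)P_B$, $B\geq c_{\rm min}(B)P_B$ with $c_{\rm max}(A)$ replaced by $1$), and both finish with a squeeze at the degenerate states --- yours $A^\rho\log A^\rho\leq s_{A,B}(\rho)\leq A^\rho\log c\to 0$, the paper's $\abs{s_{A,B}(\rho)}\leq M\,B^\rho\to 0$. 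The only detail to patch is the case $B=0$, where your constant $\beta$ (the smallest nonzero eigenvalue of $B$) is undefined; as in the paper, dispose of it first by noting that $B=0$ forces $A=0$ and hence $s_{A,B}\equiv 0$, after which your argument is complete.
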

\begin{proof}
We will show that $s_{A,B}$ is a continuous function on $\sh$; since $\sh$ is
compact, this will also imply that $s_{A,B}$ is bounded. The case $B=0$ is
trivial, hence we will suppose $B\neq 0$. By the hypotheses, the condition
$B^\rho = 0$ implies that $A^\rho = 0$. The definition \eqref{eq:defs} of $s$
then gives
$$
s_{A,B}(\rho) = \begin{cases}\displaystyle A^\rho\log\frac{A^\rho}{B^\rho}
&\text{if } A^\rho > 0 \text{ and } B^\rho > 0 \\
0 &\text{if } A^\rho = 0 \text{ and } B^\rho > 0 \\
0 &\text{if } A^\rho = 0 \text{ and } B^\rho = 0
\end{cases}
\ = \ \begin{cases}\displaystyle B^\rho h\left(\frac{A^\rho}{B^\rho}\right) &\text{if } B^\rho > 0\\
0 &\text{if } B^\rho = 0
\end{cases}
$$
where we have introduced the continuous function $h:[0,+\infty)\to [-(\log\rme)/\rme,+\infty)$,
with $ h(t) =  t\log t$ if $ t > 0$, and $h(0)=0$.
The function $s_{A,B}$ is clearly continuous on the open subset $\Uscr =
\{\rho\in\sh : B^\rho > 0\}$ of the state space $\sh$. It remains to show that
it is also continuous at all the points of the set $\Uscr^{\rm c} =
\{\rho\in\sh : B^\rho = 0\}$. To this aim, observe that
\[
A\leq c_{\rm max}(A) P_A \leq c_{\rm max}(A) P_B \qquad \text{and} \qquad B\geq c_{\rm min}(B) P_B ,
\]
where $c_{\rm max}(A)$ is the maximum eigenvalue of $A$, $c_{\rm min}(B)$ is
the minimum positive eigenvalue of $B$, and we denote by $P_A$ and $P_B$ the
orthogonal projections onto $\ker A^\perp$ and $\ker B^\perp$, respectively.
Since $P_B^\rho \neq 0$ for all $\rho$ such that $B^\rho>0$, it follows that
$$
0\leq \frac{A^\rho}{B^\rho}\leq \frac{c_{\rm max}(A)}{c_{\rm min}(B)}\,, \qquad \forall\rho\in \Uscr .
$$
Hence, by continuity of $h$ and boundedness of the interval $[0,c_{\rm
max}(A)/c_{\rm min}(B)]$, there is a constant $M>0$ such that
$$
|s_{A,B}(\rho)| = \left|B^\rho h\left(\frac{A^\rho}{B^\rho}\right)\right|
\leq M B^\rho ,\qquad \forall\rho\in \Uscr .
$$
On the other hand, for $\rho\in \Uscr^c$ we have $s_{A,B}(\rho) = 0$. If
$(\rho_k)_k$ is a sequence in $\sh$ converging to $\rho_0\in \Uscr^c$, then
$|s_{A,B}(\rho_k)-s_{A,B}(\rho_0)|\leq M B^{\rho_k}
\displaystyle\operatornamewithlimits{\longrightarrow}_{k\to\infty} 0$, which
shows that $s_{A,B}$ is continuous at $\rho_0$. \end{proof}

\subsection{Incompatibility degree, error/disturbance coefficient, and
optimal approximate joint measurements}\label{sef:def_c}

After introducing the error function $\sddiv{\Ao,\Bo}{\Mo}(\rho)$, which
describes the total information lost by measuring the bi-observable $\Mo$ in
place of $\Ao$ and $\Bo$ in the state $\rho$, and after defining its maximum
value $\Div{\Ao,\Bo}{\Mo}$ over all states, the third step is to quantify
the intrinsic measurement uncertainties between $\Ao$ and $\Bo$, dropping any
reference to a particular state or approximating joint measurement. When we are
interested in incompatibility, this is done by taking the minimum of the
divergence $\Div{\Ao,\Bo}{\Mo}$ over all possible bi-observables
$\Mo\in\Mscr(\Xscr\times\Yscr)$. The resulting quantity is the minimum
inefficiency which can not be avoided when the (possibly incompatible)
observables $\Ao$ and $\Bo$ are approximated by the compatible marginals
$\Mo_{[1]}$ and $\Mo_{[2]}$ of any bi-observable $\Mo$. This minimum can be
understood as an ``incompatibility degree'' of the two observables $\Ao$ and
$\Bo$.

\begin{definition}\label{def:Cinc} The \emph{entropic incompatibility degree}
$\Icomp(\Ao,\Bo)$ of the observables $\Ao$ and $\Bo$ is
\begin{equation}\label{def:inc}
\Icomp(\Ao,\Bo)=\inf_{\Mo\in\Mscr(\Xscr\times \Yscr)} \Div{\Ao,\Bo}{\Mo}
\equiv \inf_{\Mo\in\Mscr(\Xscr\times \Yscr)}\sup_{\rho\in\Sscr(\Hscr)} \sddiv{\Ao,\Bo}{\Mo}(\rho) .
\end{equation}
\end{definition}

The definition is consistent, as obviously $\Icomp(\Ao,\Bo)\geq 0$, and
$\Icomp(\Ao,\Bo) = 0$ when $\Ao$ and $\Bo$ are compatible.
As the notion of incompatibility is symmetric by exchanging the observables
$\Ao$ and $\Bo$, we would expect that also the incompatibility degree satisfies
the property $\Icomp(\Ao,\Bo) = \Icomp(\Bo,\Ao)$. Indeed, this is actually
true, as $\Div{\Ao,\Bo}{\Mo} = \Div{\Bo,\Ao}{\Mo'}$ for all
$\Mo\in\Mscr(\Xscr\times\Yscr)$, where $\Mo'\in\Mscr(\Yscr\times\Xscr)$ is
defined by $\Mo'(y,x) = \Mo(x,y)$. Note that the symmetry of $\Icomp$ comes
from the fact that, in defining the error function $\sddiv{\Ao,\Bo}{\Mo}$, we have
chosen equal weights for the contributions of the two approximation errors of
$\Ao$ and $\Bo$.

On the other hand, when we deal with the error/disturbance uncertainty
relation, our analysis is restricted to the bi-observables describing
sequential measurements of an approximate version $\Ao'$ of $\Ao$, followed by
an exact measurement of $\Bo$. In other words, we focus on
\begin{multline}\label{def:seq}
\Mscr(\Xscr;\Bo)  = \{\Jcal^*(\Bo) : \Jcal\in\Jscr(\Xscr)\} \\
 = \{\Mo\in\Mscr(\Xscr\times\Yscr) : \Mo(x,y)
= \Jcal^*_x[\Bo(y)] \ \forall x,y, \ \text{for some} \ \Jcal\in\Jscr(\Xscr)\},
\end{multline}
the subset of
$\Mscr(\Xscr\times\Yscr)$ consisting of the sequential measurements where the first outcome set $\Xscr$
and the second observable $\Bo$ are fixed. If $\Mo =
\Jcal^*(\Bo)\in\Mscr(\Xscr;\Bo)$, then $\Ao'=\Mo_{[1]} = \Jcal^*_\cdot[\id]$ is
the observable approximating $\Ao$, and $\Bo'=\Jcal^*_\Xscr[\Bo(\cdot)]$ is the
version of $\Bo$ perturbed by the measurement of $\Ao'$. In
general, it may equally well be $\Ao'\neq\Ao$ and $\Bo'\neq\Bo$, unless the
observable $\Ao$ can be measured without disturbing $\Bo$ \cite{HeiW10}.

In order to quantify the measurement uncertainties due to the error/disturbance
trade\-off, we then consider the minimum of the entropic divergence
$\Div{\Ao,\Bo}{\Mo}$ for $\Mo\in\Mscr(\Xscr;\Bo)$. If we read $\Srel{\Ao^\rho}{
\Mo^\rho_{[1]}}$ as the error made by $\Jcal$ in measuring $\Ao$ in the state
$\rho$, and $\Srel{\Bo^\rho}{ \Mo^\rho_{[2]}}$ as the amount of disturbance
introduced by $\Jcal$ on the subsequent measurement of $\Bo$, then the
divergence $\Div{\Ao,\Bo}{\Mo}$ expresses the sum error $+$ disturbance
maximized over all states for the sequential measurement $\Mo$. Minimizing
$\Div{\Ao,\Bo}{\Mo}$ over all sequential measurements, we then obtain the
following entropic quantification of the error/disturbance tradeoff
between $\Ao$ and $\Bo$.

\begin{definition}\label{def:Ced}
The \emph{entropic error/disturbance coefficient}  $\Iad(\Ao,\Bo)$ of $\Ao$
followed by $\Bo$ is
\begin{equation}\label{def:ed}
\Iad(\Ao,\Bo)=\inf_{\Mo\in\Mscr(\Xscr;\Bo)} \Div{\Ao,\Bo}{\Mo} \equiv \inf_{\Mo\in\Mscr(\Xscr;\Bo)} \sup_{\rho\in\Sscr(\Hscr)} \sddiv{\Ao,\Bo}{\Mo}(\rho).
\end{equation}
\end{definition}

Similarly to the incompatibility degree, the error/disturbance coefficient is
always nonnegative, and $\Iad(\Ao,\Bo)=0$ when $\Ao$ can be measured without
disturbing $\Bo$, i.e.\ $\Ao$ and $\Bo$ are sequentially compatible. Contrary to $\Icomp$, we stress that in general the two
indexes $\Iad(\Ao,\Bo)$ and $\Iad(\Bo,\Ao)$ can be different, as shown in
Remark \ref{ex:12} below.

When the approximate measurement of the first observable $\Ao$ is described by
the instrument $\Jcal$, the measurement of the second fixed observable $\Bo$
could be preceded by any kind of correction taking into account the observed
outcome $x$ \cite{BLW14a}. This can be formalized by inserting a quantum
channel $\Ccal_x$ in between the measurements of $\Ao$ and $\Bo$. As the
composition  $\Jcal'_x = \Ccal_x\circ\Jcal_x$ gives again an instrument
$\Jcal'\in\Jscr(\Xscr)$, we then see that any possible correction is considered
when we take the infimum in $\Mscr(\Xscr;\Bo)$. The latter fact shows that
Definition \ref{def:Ced} is consistent, since only by taking into account all
possible corrections we can properly speak of pure unavoidable disturbance and
of error/disturbance tradeoff.

Comparing the two indexes $\Icomp$ and $\Iad$, the inequality
$\Icomp(\Ao,\Bo)\leq \Iad(\Ao,\Bo)$ trivially follows from the inclusion
$\Mscr(\Xscr;\Bo)\subseteq\Mscr(\Xscr\times\Yscr)$. This means that, even if
one is interested in $\Iad$, the most symmetric index $\Icomp$ is at least a
lower bound for it.
We stress that the inclusion $\Mscr(\Xscr;\Bo)\subseteq\Mscr(\Xscr\times
\Yscr)$ may be strict in general. For example, there may exist observables
which are compatible with $\Bo$, but can not be measured before $\Bo$ without
disturbing it. Then, taken such an observable $\Ao$, a joint measurement of
$\Ao$ and $\Bo$ clearly belongs to $\Mscr(\Xscr\times\Yscr)$ but can not be in
$\Mscr(\Xscr;\Bo)$. When $\Mscr(\Xscr;\Bo)\subsetneq\Mscr(\Xscr\times\Yscr)$,
the incompatibility and error/disturbance approaches definitely are not
equivalent. Nevertheless, there is one remarkable situation in which they are the same.

\begin{proposition}\label{prop:sharp}
If $\Bo\in\Mscr(\Yscr)$ is a sharp observable, then $\Mscr(\Xscr;\Bo) =
\Mscr(\Xscr\times\Yscr)$.
\end{proposition}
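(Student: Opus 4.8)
The inclusion $\Mscr(\Xscr;\Bo)\subseteq\Mscr(\Xscr\times\Yscr)$ is immediate from the definitions, so the plan is to prove the reverse inclusion: given an arbitrary bi-observable $\Mo\in\Mscr(\Xscr\times\Yscr)$, I would exhibit an instrument $\Jcal\in\Jscr(\Xscr)$ with $\Jcal^*_x[\Bo(y)]=\Mo(x,y)$ for all $x,y$, so that $\Mo=\Jcal^*(\Bo)$ lands in $\Mscr(\Xscr;\Bo)$. The whole idea is that sharpness of $\Bo$ should let us reconstruct each operator $\Mo(x,y)$ out of $\Bo(y)$ by a completely positive ``read-off-then-write'' map, something that would fail for a general (unsharp) $\Bo$.

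Since $\Bo$ is sharp, I first write $\Bo(y)=P_y$ with the $P_y$ mutually orthogonal projections and $\sum_y P_y=\id$. For each $y$ with $P_y\neq 0$ I fix a unit vector $\eta_y\in\operatorname{ran}P_y$; the single property of sharpness that I will exploit is the exact discrimination identity $\ip{\eta_y}{P_{y'}\eta_y}=\delta_{yy'}$, which holds precisely because the $P_{y'}$ are orthogonal projections. (There is one genuine caveat to record here: if $P_y=0$ for some $y$, then no $\eta_y$ exists and a sequential measurement $\Jcal^*(\Bo)$ necessarily has $\Jcal^*_x[\Bo(y)]=0$; so the statement is to be read for $\Bo$ with all values nonzero, equivalently after deleting the outcomes on which $\Bo$ vanishes, and I would flag this explicitly.)

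With this in hand I would define, for each $x$, the Heisenberg-picture map
\[
\Jcal^*_x[F]=\sum_{y}\ip{\eta_y}{F\eta_y}\,\Mo(x,y),\qquad F\in\Lscr(\Hscr).
\]
Complete positivity is manifest: choosing any decomposition $\Mo(x,y)=\sum_k\kb{v^{xy}_k}{v^{xy}_k}$ into rank-one positive operators exhibits the Kraus operators $J^{y,k}_x=\kb{\eta_y}{v^{xy}_k}$, and $\sum_{y,k}(J^{y,k}_x)^*FJ^{y,k}_x$ reproduces exactly the displayed formula. The two remaining checks are then routine. By the discrimination identity, $\Jcal^*_x[\Bo(y')]=\sum_y\delta_{yy'}\,\Mo(x,y)=\Mo(x,y')$, which is the required reconstruction; and $\sum_x\Jcal^*_x[\id]=\sum_x\sum_y\Mo(x,y)=\sum_x\Mo_{[1]}(x)=\id$, so $\Jcal=(\Jcal_x)_x$ is a bona fide instrument in $\Jscr(\Xscr)$. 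Therefore $\Mo=\Jcal^*(\Bo)\in\Mscr(\Xscr;\Bo)$, completing the reverse inclusion.

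The computation itself is light; the conceptual heart — and the only place sharpness enters — is the identity $\ip{\eta_y}{P_{y'}\eta_y}=\delta_{yy'}$, which lets the states $F\mapsto\ip{\eta_y}{F\eta_y}$ read off ``which eigenspace of $\Bo$'' with no error and then write back the prescribed $\Mo(x,y)$. I expect the main thing to get right is not any calculation but the clean articulation of \emph{why} orthogonality is exactly the property required (for a non-sharp $\Bo$ the values $\Bo(y)$ overlap, no such perfect read-out exists, and the inclusion is generally strict), together with the minor bookkeeping for the vanishing-value outcomes noted above.
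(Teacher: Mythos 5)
Your proof is correct and takes essentially the same approach as the paper: both construct a measure-and-prepare instrument $\Jcal_x[\rho]=\sum_{y}\Tr\{\rho\Mo(x,y)\}\,\sigma_y$ that measures $\Mo$ and re-prepares a state $\sigma_y$ supported in the range of $\Bo(y)$, so that orthogonality of the sharp effects gives $\Jcal^*_x[\Bo(y)]=\Mo(x,y)$; the paper chooses $\sigma_y=\Bo(y)/\Tr\{\Bo(y)\}$ where you choose the pure state $\kb{\eta_y}{\eta_y}$, a purely cosmetic difference. Your explicit caveat about outcomes with $\Bo(y)=0$ matches the paper's own handling, since its sum is likewise restricted to $y$ with $\Bo(y)\neq 0$.
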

\begin{proof}
The proof directly follows from the argument at the end of
\cite[Sect.~II.D]{HeiW10}. Indeed, for any $\Mo\in\Mscr(\Xscr\times\Yscr)$, we
can define the instrument $\Jcal\in\Jscr(\Xscr)$ with
\[
\Jcal_x[\rho] = \sum_{y\in\Yscr:\Bo(y)\neq 0}
\Tr\left\{\rho\Mo(x,y)\right\} \frac{\Bo(y)}{\Tr\left\{\Bo(y)\right\}} .
\]
For such an instrument, the equality $\Mo(x,y) = \Jcal^*_x[\Bo(y)]$ is
immediate. \end{proof}

As an immediate consequence of this result, we have
$\Iad(\Ao,\Bo) = \Icomp(\Ao,\Bo)$ when\-ever the second measured observable $\Bo$
is sharp.

By Theorem \ref{prop:2B} below, the two infima in the definitions of $\Icomp(\Ao,\Bo)$ and $\Iad(\Ao,\Bo)$ are actually two minima. It is convenient to give a name to the corresponding sets
of minimizing bi-observables:
\begin{equation*}
\Mcomp(\Ao,\Bo) = \argmin_{\Mo\in\Mscr(\Xscr\times \Yscr)} \Div{\Ao,\Bo}{\Mo},
\qquad \Mad(\Ao,\Bo) = \argmin_{\Mo\in\Mscr(\Xscr;\Bo)} \Div{\Ao,\Bo}{\Mo}.
\end{equation*}
We can say that $\Mcomp(\Ao,\Bo)$ is the set of the {\em optimal approximate
joint measurements of $\Ao$ and $\Bo$}. Similarly, $\Mad(\Ao,\Bo)$ contains the
sequential measurements optimally approximating  $\Ao$ and $\Bo$.

The next theorem summarizes the main properties of $\Icomp$ and $\Iad$
contained in the above discussion, and states some further relevant facts about
the two indexes.

\begin{theorem}\label{prop:2B}
Let $\Ao\in\Mscr(\Xscr)$, $\Bo\in\Mscr(\Yscr)$ be the target observables.
For the entropic coefficients defined above the following properties hold.
\begin{enumerate}[(i)]
\item \label{cinv} The coefficients $\Icomp(\Ao,\Bo)$ and $\Iad(\Ao,\Bo)$
    are invariant under an overall unitary conjugation of the observables
    $\Ao$ and $\Bo$, and they do not depend on the labelling of the outcomes in
    $\Xscr$ and $\Yscr$.
		
\item \label{cincsymm} The incompatibility degree has the exchange symmetry
    $\Icomp(\Ao,\Bo)=\Icomp(\Bo,\Ao)$.

\item \label{cbound} \ \ We have \ \ \ $\displaystyle0\leq \Icomp(\Ao,\Bo)\leq \Iad(\Ao,\Bo)\leq
    \log \abs\Xscr-\inf_{\rho\in\Sscr(\Hscr)}H(\Ao^\rho)$ \ \ and \ \ \ \goodbreak
    $\displaystyle\Icomp(\Ao,\Bo)\leq \log
    \abs\Yscr-\inf_{\rho\in\Sscr(\Hscr)}H(\Bo^\rho)$.

\item \label{optM} The sets $\Mcomp(\Ao,\Bo)$ and $\Mad(\Ao,\Bo)$ are
    nonempty convex compact subsets of $\Mscr(\Xscr\times\Yscr)$.

\item \label{jmeas} $\Icomp(\Ao,\Bo)=0$ if and only if the observables
    $\Ao$ and $\Bo$ are compatible, and in this case
    $\Mcomp(\Ao,\Bo)$ is the set of all their joint measurements.

\item \label{smeas} $\Iad(\Ao,\Bo)=0$ if and only if the observables $\Ao$
    and $\Bo$ are sequentially compatible, and in this case $\Mad(\Ao,\Bo)$ is the set of all the sequential measurements of $\Ao$ followed by $\Bo$.

\item \label{sharpc} If $\Bo$ is sharp, then
   $\Mcomp(\Ao,\Bo)=\Mad(\Ao,\Bo)$ and  $\Icomp(\Ao,\Bo)=\Iad(\Ao,\Bo)$.
\end{enumerate}
\end{theorem}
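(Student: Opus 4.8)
The plan is to reduce every assertion to the properties of $\Div{\Ao,\Bo}{\cdot}$ established in Theorem \ref{prop:2A} together with the elementary facts about relative entropy in Proposition \ref{prop:propHSrel}. Items (\ref{cinv}) and (\ref{cincsymm}) I would dispatch first. Since $\Icomp$ and $\Iad$ are infima of $\Div{\Ao,\Bo}{\Mo}$, and unitary conjugation $\Mo\mapsto U^*\Mo U$ is an affine bijection of $\Mscr(\Xscr\times\Yscr)$ onto itself which also maps $\Mscr(\Xscr;\Bo)$ onto $\Mscr(\Xscr;U^*\Bo U)$, the invariance of $\Div$ from Theorem \ref{prop:2A}(\ref{Dinv}) transfers to the infima; relabellings are handled identically. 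For the exchange symmetry in (\ref{cincsymm}) I would invoke the identity $\Div{\Ao,\Bo}{\Mo}=\Div{\Bo,\Ao}{\Mo'}$ with $\Mo'(y,x)=\Mo(x,y)$, already noted after Definition \ref{def:Cinc}, since $\Mo\mapsto\Mo'$ is a bijection between $\Mscr(\Xscr\times\Yscr)$ and $\Mscr(\Yscr\times\Xscr)$.

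Item (\ref{optM}) is the functional-analytic core on which (\ref{jmeas}) and (\ref{smeas}) rest. Both minimization domains are convex and compact: for $\Mscr(\Xscr\times\Yscr)$ this is recalled in the excerpt, and for $\Mscr(\Xscr;\Bo)$ I would note that it is the image of the convex compact set $\Jscr(\Xscr)$ under the affine, continuous map $\Jcal\mapsto\Jcal^*(\Bo)$. Since $\Div{\Ao,\Bo}{\cdot}$ is LSC and convex by Theorem \ref{prop:2A}(\ref{Dlsc}), it attains its infimum on each compact domain — so both infima are genuinely minima and the argmin sets are nonempty — each argmin set equals the sublevel set at the minimum value and is thus closed, hence compact, and convexity of $\Div$ makes it convex.

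For (\ref{jmeas}) and (\ref{smeas}) I would exploit the attainment secured by (\ref{optM}): whenever $\Icomp=0$ or $\Iad=0$, the value $0$ is actually reached by some minimizer $\Mo$. As $\sddiv{\Ao,\Bo}{\Mo}$ is a supremum of nonnegative terms, $\Div{\Ao,\Bo}{\Mo}=0$ forces $\Srel{\Ao^\rho}{\Mo^\rho_{[1]}}=\Srel{\Bo^\rho}{\Mo^\rho_{[2]}}=0$ for every $\rho$; by Proposition \ref{prop:propHSrel}(\ref{HSnull}) this gives $\Ao^\rho=\Mo^\rho_{[1]}$ and $\Bo^\rho=\Mo^\rho_{[2]}$ for all $\rho$, whence $\Mo_{[1]}=\Ao$ and $\Mo_{[2]}=\Bo$, an observable being determined by its distributions in all states. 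Thus $\Mo$ is a joint measurement (for (\ref{jmeas})) or, lying in $\Mscr(\Xscr;\Bo)$, a sequential measurement realizing nondisturbance (for (\ref{smeas})), yielding compatibility resp. sequential compatibility. The converse implications and the explicit descriptions of $\Mcomp$ and $\Mad$ then follow, because every such measurement makes both relative entropies vanish identically and hence lies in the corresponding argmin at value $0$. Item (\ref{sharpc}) is immediate from Proposition \ref{prop:sharp}: sharpness of $\Bo$ gives $\Mscr(\Xscr;\Bo)=\Mscr(\Xscr\times\Yscr)$, so the two minimizations coincide.

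The part requiring an explicit construction, and hence the main obstacle, is the upper bounds in (\ref{cbound}). The chain $0\le\Icomp\le\Iad$ is clear from nonnegativity of $S$ and the inclusion $\Mscr(\Xscr;\Bo)\subseteq\Mscr(\Xscr\times\Yscr)$. To bound $\Iad$ I would exhibit one explicit sequential measurement of small divergence: the instrument $\Jcal_x[\rho]=\rho/\abs\Xscr$ leaves the state untouched and emits a uniform label, so $\Jcal^*(\Bo)$ has marginals $\Uo_\Xscr$ and $\Bo$, error function $\Srel{\Ao^\rho}{u_\Xscr}+0=\log\abs\Xscr-H(\Ao^\rho)$ by Proposition \ref{prop:propHSrel}(\ref{HSrel}), and supremum over $\rho$ equal to $\log\abs\Xscr-\inf_\rho H(\Ao^\rho)$. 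Symmetrically, the product bi-observable $\Mo(x,y)=\Ao(x)/\abs\Yscr$ has marginals $\Ao$ and $\Uo_\Yscr$ and divergence $\log\abs\Yscr-\inf_\rho H(\Bo^\rho)$, giving the stated bound for $\Icomp$. I expect the only delicate points to be checking that these candidates genuinely belong to the prescribed sets — the first to $\Mscr(\Xscr;\Bo)$, the second (whose trivial second marginal is generally not a perturbation of $\Bo$) only to $\Mscr(\Xscr\times\Yscr)$, which is exactly why it bounds $\Icomp$ but not $\Iad$ — and the correct bookkeeping of which trivial marginal controls which index.
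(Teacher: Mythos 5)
Your proof is correct and follows essentially the same route as the paper's: the trivial uniform instrument $\Ucal_x[\rho]=u_\Xscr(x)\rho$ for the $\Iad$ upper bound, lower semicontinuity plus compactness (of $\Mscr(\Xscr\times\Yscr)$ and of $\Mscr(\Xscr;\Bo)$ as the continuous affine image of $\Jscr(\Xscr)$) for item (iv), vanishing relative entropy forcing $\Mo_{[1]}=\Ao$, $\Mo_{[2]}=\Bo$ for items (v)--(vi), and Proposition \ref{prop:sharp} for item (vii). The only (harmless) deviation is the second bound in item (iii), where you directly exhibit the product bi-observable $\Mo(x,y)=\Ao(x)\,u_\Yscr(y)$, while the paper instead obtains $\Icomp(\Ao,\Bo)\leq \log \abs\Yscr-\inf_{\rho}H(\Bo^\rho)$ by applying the exchange symmetry of item (ii) to the first bound; both arguments are equally valid.
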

\begin{proof}
(\ref{cinv}) The invariance under unitary conjugation follows from the
corresponding property of the entropic divergence (Theorem \ref{prop:2A}, item
(\ref{Dinv})). We will prove it only for $\Iad$, the case of $\Icomp$ being
even simpler. We have
\[ 
\Iad(U^*\Ao U,U^*\Bo U)=\inf_{\Mo \in \Mscr(\Xscr;U^*\Bo U)} \Div{U^*\Ao U,
U^*\Bo U}{\Mo} = \inf_{\Mo' \in U\Mscr(\Xscr;U^*\Bo U)U^*}\Div{\Ao,\Bo}{\Mo'} ,
\]
and, in order to show that $\Iad(U^*\Ao U,U^*\Bo U)=\Iad(\Ao,\Bo)$, it only
remains to prove the set equality $U\Mscr(\Xscr;U^*\Bo U)U^* =
\Mscr(\Xscr;\Bo)$. If $\Mo = \Jcal^*(U^*\Bo U) \in \Mscr(\Xscr;U^*\Bo U)$,
then, defining the instrument $\Jcal'_x[\rho] = U\Jcal_x[U^*\rho U]U^*$,
$\forall\rho,x$, we have $U\Mo U^* = \Jcal^{\prime *}(\Bo)\in\Mscr(\Xscr;\Bo)$,
as claimed. In a similar way, the invariance under relabelling of the outcomes
is a consequence of the analogous property of the entropic divergence.

(\ref{cincsymm}) This property has already been noticed.

(\ref{cbound}) The positivity and the inequality between the two indexes have already been noticed. Then, let
$\Ucal\in\Jscr(\Xscr)$ be the trivial uniform instrument $\Ucal_x[\rho] =
u_{\Xscr}(x)\rho$. Taking the sequential
measurement $\Ucal^*(\Bo)\in\Mscr(\Xscr;\Bo )$, we get $\Ucal^*(\Bo)^\rho =
u_{\Xscr}\otimes\Bo^\rho$ and
\[
\Srel{\Ao^\rho}{\Ucal^*(\Bo)^\rho_{[1]}}+\Srel{\Bo^\rho}{\Ucal^*(\Bo)^\rho_{[2]}}
= \Srel{\Ao^\rho}{u_{\Xscr}}= \log \abs\Xscr - H\big(\Ao^\rho\big),
\]
where the last equality follows from Proposition \ref{prop:propHSrel}, item (\ref{HSrel}). By taking the supremum over all the states, we get \
$\Div{\Ao,\Bo}{\Ucal^*(\Bo)}= \log
\abs\Xscr-\inf_{\rho\in\Sscr(\Hscr)}H(\Ao^\rho)$, \ hence \ $\Iad(\Ao,\Bo)\leq \log
\abs\Xscr-\inf_{\rho\in\Sscr(\Hscr)}H(\Ao^\rho)$ by definition. The last
inequality then follows by item (\ref{cincsymm}).

(\ref{optM}) By item (\ref{Dlsc}) of Theorem \ref{prop:2A} and item
(\ref{cbound}) just above, $\Div{\Ao,\Bo}{\cdot}$ is a convex LSC proper (i.e.\
not identically $+\infty$) function on the compact set
$\Mscr(\Xscr\times\Yscr)$. This implies that $\Mcomp(\Ao,\Bo) \neq \emptyset$
\cite[Exerc. E.1.6]{Ped89}. Closedness and convexity of $\Mcomp(\Ao,\Bo)$ are
then easy and standard consequences of $\Div{\Ao,\Bo}{\cdot}$ being convex and
LSC. On the other hand, the set $\Mscr(\Xscr;\Bo)$ is a convex and compact
subset of $\Mscr(\Xscr\times\Yscr)$; indeed, this follows from convexity and
compactness of $\Jscr(\Xscr)$ and continuity of the mapping $\Jcal\mapsto
\Jcal^*(\Bo)$ in the definition \eqref{def:seq}. The proof that the subset
$\Mad(\Ao,\Bo)\subseteq\Mscr(\Xscr;\Bo)$ is nonempty, convex and compact then
follows along the same lines of $\Mcomp(\Ao,\Bo)$.

(\ref{jmeas}) Assume $\Icomp(\Ao,\Bo) = 0$. Then $\Mcomp(\Ao,\Bo)$ exactly
consists of all the joint measurements of $\Ao$ and $\Bo$, which therefore turn
out to be compatible, as $\Mcomp(\Ao,\Bo)\neq\emptyset$ by (\ref{optM}).
Indeed, if $\Mo\in\Mcomp(\Ao,\Bo)$, then
$0=\Icomp(\Ao,\Bo)=\Div{\Ao,\Bo}{\Mo}$, which gives
$\Srel{\Ao^\rho}{\Mo_{[1]}^\rho} = \Srel{\Bo^\rho}{\Mo_{[2]}^\rho}=0$ for all
$\rho$. By Proposition \ref{prop:propHSrel}, item (\ref{HSnull}), this yields
$\Ao^\rho = \Mo_{[1]}^\rho$, $\Bo = \Mo_{[2]}^\rho$, $\forall \rho$, and so
$\Ao = \Mo_{[1]}$, $\Bo = \Mo_{[2]}$, which means that $\Mo$ is a joint
measurement of $\Ao$ and $\Bo$. The converse implication was already noticed in
the text.

(\ref{smeas}) Similarly to the previous item, if $\Iad(\Ao,\Bo) = 0$, then
$\Mad(\Ao,\Bo)$ consists exactly of all the sequential measurements of $\Ao$ followed by $\Bo$. Indeed, by the same argument of (\ref{jmeas}), if
$\Mo\in\Mad(\Ao,\Bo)$, then $\Mo$ is a joint measurement of $\Ao$ and $\Bo$;
since $\Mad(\Ao,\Bo)\subseteq\Mscr(\Xscr;\Bo)$, such a $\Mo$
is also a sequential measurement. As $\Mad(\Ao,\Bo)\neq\emptyset$ by
(\ref{optM}), this proves that $\Ao$ and $\Bo$ are sequentially compatible.
The other implication is trivial and was already remarked.

(\ref{sharpc}) As observed above, if $\Bo$ is sharp, then by Proposition
\ref{prop:sharp} we have $\Mscr(\Xscr;\Bo) = \Mscr(\Xscr\times\Yscr)$, which
implies the claim. \end{proof}

Item (\ref{cbound}) implies that the two indexes $\Icomp$
and $\Iad$ are always finite, although the relative entropy $\Srel pq$ is infinite whenever ${\rm supp}\,q\nsupseteq{\rm supp}\,p$. Actually, such a feature of $S$ has a role: because of Theorem \ref{prop:2A}, item (\ref{Dfinite}), a bi-observable $\Mo$ is immediately discarded as a very bad approximation of $\Ao$ and $\Bo$ whenever $\operatorname{ker}\Mo_{[1]}(x)\nsubseteq\operatorname{ker}\Ao(x)$ for some $x$, or $\operatorname{ker}\Mo_{[2]}(y)\nsubseteq\operatorname{ker}\Bo(y)$ for some $ y$.

We see in items (\ref{jmeas}) and (\ref{smeas}) that $\Icomp$
and $\Iad$ have the desirable feature of being zero exactly when the two
observables $\Ao$ and $\Bo$ satisfy the corresponding compatibility or
nondisturbance property. We also stress that, by their very definitions,
$\Icomp(\Ao,\Bo)$ and $\Iad(\Ao,\Bo)$ are independent of both the preparations
$\rho$ and the approximating bi-observables $\Mo$, as well as they
satisfy the natural invariance properties of item (\ref{cinv}). In view of these facts, we are allowed once more to say that the two bounds
$\Icomp(\Ao,\Bo)$ and $\Iad(\Ao,\Bo)$ are proper quantifications of the
intrinsic incompatibility and error/disturbance affecting the two observables
$\Ao$ and $\Bo$.

We stress that the definitions of $\Icomp(\Ao,\Bo)$
and $\Iad(\Ao,\Bo)$ are rather implicit. Indeed, even if we proved that they
are strictly positive when $\Ao$ and $\Bo$ are incompatible (or sequentially
incompatible), their evaluation requires the two optimizations ``sup'' on the
states and ``inf'' on the measurements. Nevertheless, in some cases explicit
computations are possible (even including the evaluation of the optimal
approximate joint measurements) or explicit lower bounds can be exhibited, see
Sections \ref{ex:D2} and \ref{sec:cov.MUBs}.

\begin{remark}\label{ex:12}
Item (\ref{sharpc}) of Theorem \ref{prop:2B} says that the two indexes
coincide in the important case in which $\Bo$ is sharp. However, this is not
true in general, as shown e.g.~by the two examples in Appendix \ref{app:HW}
(taken from \cite{HeiW10}). In the first example, $\dim \Hscr=3$,
$\abs\Xscr=2$, $\abs\Yscr=5$, and we have
$\Iad(\Ao,\Bo)>\Iad(\Bo,\Ao)=\Icomp(\Ao,\Bo)=0$. The second example is more
symmetric and simpler ($\abs\Xscr=\abs\Yscr=2$), and it yields
$\Iad(\Ao,\Bo)>\Icomp(\Ao,\Bo)=0$ and also $\Iad(\Bo,\Ao)>0$.
\end{remark}

\subsection{Entropic MURs}\label{sec:MUR}
By definition, the two coefficients \eqref{def:inc} and \eqref{def:ed} are
lower bounds for the entropic divergence \eqref{eqdef:D} of every bi-observable
$\Mo$ from $(\Ao,\Bo)$:
\begin{equation}\label{MUR1}\begin{split}
&\Div{ \Ao,\Bo}{\Mo}\geq \Icomp(\Ao,\Bo), \quad \forall\Mo\in \Mscr(\Xscr \times \Yscr);
\\ &\Div{\Ao,\Bo}{\Mo}\geq \Iad(\Ao,\Bo), \quad \forall \Mo\in \Mscr(\Xscr;\Bo).
\end{split}
\end{equation}
By items (\ref{jmeas}) and (\ref{smeas}) of Theorem \ref{prop:2B},
the two inequalities are non trivial and, by item (\ref{optM}), both bounds are
tight. As $\Div{ \Ao,\Bo}{\Mo}$ is a state-independent quantification of the inefficiency of the observable approximations $\Ao\simeq\Mo_{[1]}$ and $\Bo\simeq\Mo_{[2]}$, the inequalities \eqref{MUR1} are two state-independent formulations of entropic MURs.

Since the definition of $\Div{ \Ao,\Bo}{\Mo}$ involves a unique supremum over $\rho$, by Theorem \ref{prop:2A}, item (\ref{worstrho}), we can also reformulate the entropic MURs \eqref{MUR1} as statements about the total loss of information that occurs in one preparation of the system:
\begin{equation}\label{MUR2}\begin{split}
\forall \Mo\in \Mscr(\Xscr\times\Yscr), \ \
\exists \rho\in \Sscr(\Hscr) : \ \
&\Srel{\Ao^{\rho}}{\Mo^{\rho}_{[1]}}+\Srel{\Bo^{\rho}}{\Mo^{\rho}_{[2]}}\geq \Icomp(\Ao,\Bo);
\\
\forall \Mo\in \Mscr(\Xscr;\Bo), \ \
\exists \rho\in \Sscr(\Hscr) : \ \
&\Srel{\Ao^{\rho}}{\Mo^{\rho}_{[1]}}+\Srel{\Bo^{\rho}}{\Mo^{\rho}_{[2]}}\geq \Iad(\Ao,\Bo).
\end{split}
\end{equation}
So, in an approximate joint measurement of $\Ao$ and $\Bo$, the total loss of
information can not be arbitrarily reduced: it depends on the state $\rho$, but
potentially it can be as large as $\Icomp(\Ao,\Bo)$. Similarly, in a sequential
measurement of $\Ao$ and $\Bo$, there is a tradeoff between the information
lost in the first measurement (because of the approximation error) and the
information lost in the second measurement (because of the disturbance): they
both depend on the state $\rho$, but potentially their sum can be as large as
$\Iad(\Ao,\Bo)$.

The indexes $\Icomp(\Ao,\Bo)$ and $\Iad(\Ao,\Bo)$ are state-independent by their very definitions; however, the corresponding MURs \eqref{MUR2} only refer to the worst possible state $\rho$ for the measurement $\Mo$ at hand. Such a state-dependency is a general feature of MURs \cite[Sect.\ C]{BLW14a}: no MUR can provide a non trivial bound for the error of the approximation $(\Ao^\rho,\Bo^\rho) \simeq (\Mo_{[1]}^\rho,\Mo_{[2]}^\rho)$, holding for all states $\rho$ in any approximate joint measurement $\Mo$. Indeed, for a fixed $\rho \in \Sscr(\Hscr)$, the trivial bi-observable $\Mo(x,y)=\Ao^{\rho}(x) \Bo^{\rho}(y)\id$ gives $(\Ao^\rho,\Bo^\rho) = (\Mo_{[1]}^\rho,\Mo_{[2]}^\rho)$; hence, it perfectly approximates the target observables in the state $\rho$ whatever criterion one chooses for defining the error.

Here, in some detail, let us compare our MURs with Busch, Lahti and Werner's approach based on Wasserstein (or transport) distances (in the following, BLW approach; see \cite{Wer04,BLW14a,BLW14b}).
As for BLW, our starting point is just giving a quantification of the error in the distribution approximation $\Ao^\rho\simeq\Mo_{[1]}^\rho$ (or $\Bo^\rho\simeq\Mo_{[2]}^\rho$).
Anyway, employing the relative entropy in place of a Wasserstein distance reflects a different point of view, with some immediate consequences.
BLW use a Wasserstein distance $d(\Ao^{\rho},\Mo^{\rho}_{[1]})$ because they want that the error reflects the metric structure of the underlying outputs $\Xscr$; since the units of measurement of $\Xscr$ and $\Yscr$ may not be homogeneous, this essentially leads to  quantifying the error of the
whole couple approximation $(\Ao^{\rho},\Bo^{\rho})\simeq(\Mo_{[1]}^{\rho},\Mo_{[2]}^{\rho})$
with the dimensional pair $\big(d(\Ao^{\rho},\Mo^{\rho}_{[1]}),d(\Bo^{\rho},\Mo^{\rho}_{[2]})\big)$. On the contrary, the relative entropy is homogeneous and scale invariant; thus, it allows us to quantify the error of the couple approximation $(\Ao^\rho,\Bo^\rho)\simeq(\Mo_{[1]}^\rho,\Mo_{[2]}^\rho)$ with the single, dimensionless and scalar total error $\Srel{\Ao^\rho}{ \Mo^\rho_{[1]}}+ \Srel{\Bo^\rho}{ \Mo^\rho_{[2]}}$.

A second difference arises in the quantification of the inefficiency of the observable approximations $\Ao\simeq\Mo_{[1]}$ and $\Bo\simeq\Mo_{[2]}$.
The BLW approach naturally leads to using the two \emph{deviations} $d(\Ao,\Mo_{[1]}) = \sup_{\rho} d(\Ao^{\rho},\Mo^{\rho}_{[1]})$ and $d(\Bo,\Mo_{[2]}) = \sup_{\rho}d(\Bo^{\rho},\Mo^{\rho}_{[2]})$, that is, the dimensional couple $\big(d(\Ao,\Mo_{[1]}),d(\Bo,\Mo_{[2]})\big)$.
Instead, the entropic approach gives the entropic divergence $\Div{\Ao,\Bo}{\Mo}$ as a natural, dimensionless and scalar measure of the approximation inefficiency.

Note that, for fixed $\Mo$, the divergence $\Div{\Ao,\Bo}{\Mo}$ tells us how badly $\Mo^\rho$ can approximate the probabilities $\Ao^\rho$ and $\Bo^\rho$ when the three observables are measured in one state $\rho$, but the same is not true for $\big(d(\Ao,\Mo_{[1]}),d(\Bo,\Mo_{[2]})\big)$. Indeed, BLW evaluate the worst possible errors separately, so that the two suprema for the Wasserstein distances $d(\Ao^{\rho_1},\Mo^{\rho_1}_{[1]})$ and $d(\Bo^{\rho_2},\Mo^{\rho_2}_{[2]})$ are attained at possibly different states $\rho_1$ and $\rho_2$.

Now, when MURs are derived, the difference of the two approaches is reflected in the distinct aims of the respective statements.

For BLW, proving a MUR means showing that the two deviations $d(\Ao,\Mo_{[1]})$ and $d(\Bo,\Mo_{[2]})$ can not both be too small; that is, all the couples $\big(d(\Ao,\Mo_{[1]}),d(\Bo,\Mo_{[2]})\big)$ must lie above some curve in the real plane, away from the origin. One can even look for the exact characterisation of all the admissible points
\[
\Big\{\big(d(\Ao,\Mo_{[1]}),d(\Bo,\Mo_{[2]})\big): \Mo \in\Mscr(\Xscr\times \Yscr)\Big\};
\]
this is the {\em uncertainty region} (or diagram) of $\Ao$ and $\Bo$. Then, any constraint on the shape of the uncertainty region yields a relation between the worst errors occurring in two separate uses of an approximate joint measurement $\Mo$: namely, for approximating $\Ao\simeq\Mo_{[1]}$ in a first preparation, and $\Bo\simeq\Mo_{[2]}$ in a second one.

On the other hand, in our entropic approach, proving a MUR amounts to giving a strictly positive lower bound for $\Div{\Ao,\Bo}{\Mo}$; the sharpest statements are achieved when $\Icomp(\Ao,\Bo)$ or $\Iad(\Ao,\Bo)$ are explicitly evaluated. This is the state-independent formulation \eqref{MUR1}; it can be further rephrased as the statement \eqref{MUR2} about the inefficiency of an arbitrary approximation $(\Ao,\Bo)\simeq(\Mo_{[1]},\Mo_{[2]})$ that occurs in one preparation of the system, the same for both observables.

\subsection{Noisy observables and uncertainty upper bounds}\label{sec:noise}

Before trying to exactly compute $\Icomp(\Ao,\Bo)$ and $\Iad(\Ao,\Bo)$ in some concrete examples, let us improve their general upper bound given in Theorem \ref{prop:2B}, item
(\ref{cbound}). For this task, we introduce an important class of
bi-observables $\Mo$ that are known to give good approximations of $\Ao$ and
$\Bo$. Even if these $\Mo$ were not optimal, we expect
that they should have a small divergence from $(\Ao,\Bo)$ and thus they should
give a good upper bound for its minimum.

Two incompatible observables $\Ao$ and $\Bo$ can always be turned into a
compatible pair by adding enough classical noise to their measurements. Indeed,
for any choice of trivial observables $\To_\Ao=p_\Ao\id$,
$p_\Ao\in\Pscr(\Xscr)$, and $\To_\Bo=p_\Bo\id$, $p_\Bo\in\Pscr(\Yscr)$, the
observables $\lam\Ao+(1-\lam)\To_\Ao$ and $\gamma\Bo+(1-\gamma)\To_\Bo$, which
are {\em noisy versions} of $\Ao$ and $\Bo$ with {\em noise intensities}
$1-\lam$ and $1-\gamma$, are compatible for all $\lam,\gamma\in [0,1]$ such
that $\lam+\gamma \leq 1$ (sufficient condition) \cite[Prop.\ 1]{BHSS13}. A
bi-observable with the given marginals is
\begin{equation*}
\Mo(x,y)=\lam \Ao(x)p_\Bo(y)+\gamma p_\Ao(x)\Bo(y)+\Big(1-\lam-\gamma\Big)p_\Ao(x)p_\Bo(y)\id.
\end{equation*}
Anyway, depending on $\Ao$, $\Bo$, $p_\Ao$ and $p_\Bo$, it may be possible to go outside the region $\lambda+\gamma \leq 1$, and so reduce the noise intensities. In the following, for every $0\leq\lambda\leq1$, we will consider the couple of equally noisy
observables
\begin{align}\label{eq:defOlam}
\begin{aligned}
\Ao_\lam(x) & = \lam\Ao(x)+(1-\lam)\Ao^{\rho_0}(x)\id, \\
\Bo_\lam(y) & = \lam\Bo(y)+(1-\lam)\Bo^{\rho_0}(y)\id,
\end{aligned}
\end{align}
where $\rho_0=(1/d)\id$ is the maximally chaotic state. Note that, if $\Ao$ is a rank-one sharp observable, then $\Ao^{\rho_0}=u_\Xscr$; a similar consideration holds for $\Bo$. If $\lambda\leq1/2$, the two observables are compatible, but,
depending on the specific $\Ao$ and $\Bo$, they could be compatible also for
larger $\lambda$. In any case, by  \eqref{eq:bAlb} and \eqref{eqdef:D} we get the bound
\begin{multline}\label{eq:bound1}
\Icomp(\Ao,\Bo) \leq \Div{\Ao,\Bo}{\Mo} \\ {}\leq \log\frac{1}{\lam +(1-\lam)
\min_{x\in \Xscr} \Ao^{\rho_0}(x)}
+\log\frac{1}{\lam + (1-\lam)\min_{y\in\Yscr} \Bo^{\rho_0}(y)}
\end{multline}
for all $\lam\in[0,1]$ such that $\Ao_\lam$ and $\Bo_\lam$ are compatible, and
any joint measurement $\Mo$ of $\Ao_\lam$ and $\Bo_\lam$. Since the two terms
in the right hand side of \eqref{eq:bound1} are decreasing functions of $\lam$,
in order to obtain the best bound we are led to find the maximal value
$\lam_{\rm max}$ of $\lam$ for which the noisy observables $\Ao_{\lam}$ and
$\Bo_{\lam}$ are compatible. This problem was addressed in \cite{HSTZ14}, where
a complete solution was given for a couple of Fourier conjugate sharp
observables. Moreover, it was shown that in the general case a nontrivial lower
bound for $\lam_{\rm max}$ can always be achieved by means of {\em optimal
approximate cloning} \cite{KW99}.

Following the same idea, we are going to find a nontrivial upper bound for
$\Icomp(\Ao,\Bo)$ by means of the optimal approximate $2$-cloning channel
\begin{equation*}
\Phi : \sh\to\Scal(\Hscr\otimes\Hscr), \qquad \Phi(\rho) = \frac{2}{d+1}\, S_2 (\rho\otimes\id) S_2 ,
\end{equation*}
where $S_2:\Hscr\otimes\Hscr\to\Hscr\otimes\Hscr$ is the orthogonal projection
of $\Hscr\otimes\Hscr$ onto its symmetric subspace Sym$(\Hscr\otimes\Hscr)$,
defined by $S_2(\phi_1\otimes\phi_2) = (\phi_1\otimes\phi_2 +
\phi_2\otimes\phi_1)/2$. Performing a measurement of the tensor product
observable $\Ao\otimes\Bo$ in the state $\Phi(\rho)$ then amounts to measure
the bi-observable $\Mo_{\rm cl}=\Phi^*(\Ao\otimes\Bo)$ in $\rho$; its marginals
are (see \cite{Wer98})
\[
\Mo_{{\rm cl}\,[1]} = \Ao_{\lam_{\rm cl}} \qquad \text{and} \qquad
\Mo_{{\rm cl}\,[2]} = \Bo_{\lam_{\rm cl}} \qquad \text{where} \qquad \lam_{\rm cl} = \frac{d+2}{2(d+1)}.
\]
Of course $\lam_{\rm cl}\leq\lam_{\rm max}$, but the important point is that
$\lam_{\rm cl}>1/2$. Inserting the above $\lam_{\rm cl}$ in the bound
\eqref{eq:bound1} and using $d\Ao^{\rho_0}(x)=\Tr\left\{\Ao(x)\right\}$, we obtain
\[ 
\Icomp(\Ao,\Bo) \leq \Div{\Ao,\Bo}{\Mo_{\rm cl}} \leq \log\frac{2(d+1)}
{d+2 + \min_x \Tr\left\{\Ao(x)\right\}}+\log\frac{2(d+1)}{d+2 + \min_y \Tr\left\{\Bo(y)\right\}} ,
\] 
holding for all observables $\Ao$ and $\Bo$.

It is worth noticing that the bi-observable $\Mo_{\rm cl}$ describes a
sequential measurement having $\Bo$ as second measured observable. Indeed,
define the instrument $\Jcal\in\Jscr(\Xscr)$, with
\[
\Jcal_x[\rho] = {\rm Tr}_1 \left\{(\Ao(x)\otimes\id)\Phi(\rho)\right\} ,
\]
where ${\rm Tr}_1$ denotes the partial trace with respect to the first factor.
It is easy to check that $\Mo_{\rm cl} = \Jcal^*(\Bo)$, so that $\Mo_{\rm
cl}\in\Mad(\Xscr;\Bo)$. Therefore, the upper bound we have found for
$\Div{\Ao,\Bo}{\Mo_{\rm cl}}$ actually provides a bound also for the entropic
error/disturbance coefficient $\Iad(\Ao,\Bo)$.

Summarizing the above discussion, we thus arrive at the main conclusion of this
section.

\begin{theorem}\label{prop:3}
For any couple of observables $\Ao$ and $\Bo$, we have
\begin{multline}\label{eq:bound3}
\Icomp(\Ao,\Bo) \leq \Iad(\Ao,\Bo) \leq \log\frac{2(d+1)}{d+2 + \min_{x\in\Xscr} \Tr\left\{\Ao(x)\right\}} \\
 + \log\frac{2(d+1)}{d+2 + \min_{y\in\Yscr} \Tr\left\{\Bo(y)\right\}} \leq 2\log\frac{2(d+1)}{d+k} \leq 2 ,
\end{multline}
where in the second to last expression, $k=2$ in general, or even $k=3$ if $|\Xscr|=|\Yscr|=d$ and both $\Ao$ and $\Bo$ are sharp with ${\rm rank}\,\Ao(x) = {\rm rank}\,\Bo(y) = 1$ for all $x,y$.
\end{theorem}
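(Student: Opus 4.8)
The plan is to assemble the displayed chain of four inequalities, relying on the construction already set up in Section~\ref{sec:noise}. The first inequality $\Icomp(\Ao,\Bo)\leq\Iad(\Ao,\Bo)$ is immediate from the set inclusion $\Mscr(\Xscr;\Bo)\subseteq\Mscr(\Xscr\times\Yscr)$, exactly as recorded in Theorem~\ref{prop:2B}, item~(\ref{cbound}). For the second and only substantial inequality, I would take the bi-observable $\Mo_{\rm cl}=\Phi^*(\Ao\otimes\Bo)$ produced by the optimal approximate $2$-cloning channel $\Phi$. Its marginals are the equally noisy observables $\Ao_{\lam_{\rm cl}}$ and $\Bo_{\lam_{\rm cl}}$ with $\lam_{\rm cl}=(d+2)/(2(d+1))$, and---crucially---$\Mo_{\rm cl}$ is a \emph{sequential} measurement with $\Bo$ measured second, i.e.\ $\Mo_{\rm cl}\in\Mscr(\Xscr;\Bo)$ via the instrument $\Jcal_x[\rho]={\rm Tr}_1\{(\Ao(x)\otimes\id)\Phi(\rho)\}$; hence $\Iad(\Ao,\Bo)\leq\Div{\Ao,\Bo}{\Mo_{\rm cl}}$. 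I would then invoke the general noise bound \eqref{eq:bound1}, which itself follows from the explicit maximum \eqref{eq:bAlb} of Proposition~\ref{prop:propHSrel2} applied termwise (using that $\sup_\rho$ of a sum is at most the sum of the suprema), and rewrite $\min_x\Ao^{\rho_0}(x)=(1/d)\min_x\Tr\{\Ao(x)\}$ to land precisely on the two-logarithm upper bound in \eqref{eq:bound3}.

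For the third inequality I would just estimate the denominators $d+2+\min_x\Tr\{\Ao(x)\}$ and $d+2+\min_y\Tr\{\Bo(y)\}$. In the general case one uses only positivity, $\min_x\Tr\{\Ao(x)\}\geq 0$, so that each denominator is at least $d+2$ and each logarithm is at most $\log\frac{2(d+1)}{d+2}$; this gives $k=2$. When $|\Xscr|=|\Yscr|=d$ and both $\Ao$ and $\Bo$ are sharp with all values of rank one, each $\Ao(x)$ is a rank-one projection, so $\Tr\{\Ao(x)\}=1$ for every $x$ and $\min_x\Tr\{\Ao(x)\}=1$ (and likewise for $\Bo$); the denominators become at least $d+3$, improving the constant to $k=3$.

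The final inequality $2\log\frac{2(d+1)}{d+k}\leq 2$ is elementary. Recalling that $\log=\log_2$, it is equivalent to $\frac{2(d+1)}{d+k}\leq 2$, i.e.\ to $d+1\leq d+k$, which holds for every $k\geq 1$ and in particular in both cases above. This closes the chain.

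I do not expect a genuine obstacle in the theorem statement itself: all the real content has been absorbed into the preceding discussion, and the argument reduces to substituting $\lam_{\rm cl}$ into \eqref{eq:bound1} together with two arithmetic estimates on the denominators. If there is a delicate point worth emphasizing, it is the verification that $\Mo_{\rm cl}\in\Mscr(\Xscr;\Bo)$, since it is precisely this sequential structure that upgrades the bound from the symmetric index $\Icomp$ to the error/disturbance coefficient $\Iad$; without it one would only control $\Icomp(\Ao,\Bo)$.
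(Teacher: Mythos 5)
Your proposal is correct and follows essentially the same route as the paper: the paper's proof of Theorem~\ref{prop:3} is precisely the discussion in Section~\ref{sec:noise}, namely taking the cloning bi-observable $\Mo_{\rm cl}=\Phi^*(\Ao\otimes\Bo)$ with marginals $\Ao_{\lam_{\rm cl}}$, $\Bo_{\lam_{\rm cl}}$, checking via the instrument $\Jcal_x[\rho]={\rm Tr}_1\{(\Ao(x)\otimes\id)\Phi(\rho)\}$ that $\Mo_{\rm cl}\in\Mscr(\Xscr;\Bo)$ so that the bound controls $\Iad$ and not just $\Icomp$, substituting $\lam_{\rm cl}=(d+2)/(2(d+1))$ and $\Ao^{\rho_0}(x)=\Tr\{\Ao(x)\}/d$ into \eqref{eq:bound1}, and finishing with the two arithmetic estimates on the denominators ($k=2$ from positivity of the traces, $k=3$ from $\Tr\{\Ao(x)\}=\Tr\{\Bo(y)\}=1$ in the sharp rank-one case) and the elementary check $2(d+1)\leq 2(d+k)$. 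You also correctly single out the sequential structure of $\Mo_{\rm cl}$ as the one point carrying real content, which is exactly the paper's emphasis.
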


The striking result is that the two uncertainty indexes lie between 0 and 2,
independently of the target
observables $\Ao$ and $\Bo$, the numbers $\abs\Xscr$
and $\abs\Yscr$ of the possible outcomes, and the Hilbert space dimension $d$. Note that the bound $2\log [2(d+1)]/(d+k)$ tends to $2$ from below as $d\to\infty$.

For sharp observables, the bound \eqref {eq:bound3} is much better than the
bound given in Theorem \ref{prop:2B}, item (\ref{cbound}). However, the case
of two trivial uniform observables $\Ao=\Uo_\Xscr$ and $\Bo=\Uo_\Yscr$ is an
example where the bound of Theorem \ref{prop:2B} is better than the bound
\eqref{eq:bound3}.

As a final consideration, we will later show that there are observables $\Ao$
and $\Bo$ such that their compatible noisy versions \eqref{eq:defOlam}  do not
optimally approximate $\Ao$ and $\Bo$. Equivalently, for these observables all
the elements $\Mo\in\Mcomp(\Ao,\Bo)$ (or $\Mo\in\Mad(\Ao,\Bo)$) have marginals
$\Mo_{[1]}\neq\Ao_\lam$ and $\Mo_{[2]}\neq\Bo_\lam$ for all $\lam\in [0,1]$.
Indeed, an example is provided by the two nonorthogonal sharp spin-1/2
observables in Section \ref{ex:D2}. The motivation of this feature comes from
the fact that we are not making any extra assumption about our approximate
joint measurements, as we optimize over the whole sets $\Mscr(\Xscr\times
\Yscr)$ or $\Mscr(\Xscr;\Bo)$, according to the case at hand. This is the main
difference with the approach e.g.\ of \cite{HSTZ14,HeiMZ16}, where a degree of
compatibility is defined by considering the minimal noise which one needs to
add to $\Ao$ and $\Bo$ in order to make them compatible. It should also be
remarked that the non-optimality of the noisy versions is true also in other
contexts \cite{BusH08}.

\subsection{Connections with preparation uncertainty}\label{sec:PUR}
The entropic incompatibility degree and error/disturbance coefficient are the
non trivial and tight lower bounds of the entropic MURs stated in Section
\ref{sec:MUR}. As we recalled in the Introduction, MURs are different from
PURs, which have been formulated in the information-theoretic framework by
using different types of entropies (Shannon, R\'enyi,\ldots)
\cite{Kra87,MaaU88,KriParthas02,MaaWer,WehW10,KTW14,ColesBTW17}. Here we consider only the
Shannon entropy \eqref{def:H}, and, to facilitate the connections with our indexes, we
introduce the \emph{entropic preparation uncertainty coefficient}
\begin{equation}\label{cprep}
\Iprep(\Ao,\Bo)=
\inf_{\rho\in\Sscr(\Hscr)}\left[H(\Ao^\rho)+H(\Bo^\rho)\right] .
\end{equation}
According to the previous sections, the target observables $\Ao$ and $\Bo$
are general POVMs. With this definition, the lower bound proved in
\cite[Cor.~2.6]{KriParthas02} can be written as
\begin{equation}\label{boundprep}
\Iprep(\Ao,\Bo)\geq-\log \max_{x\in \Xscr,\ y\in\Yscr}\norm{\Ao(x)^{1/2}\Bo(y)^{1/2}}^2.
\end{equation}
When the observables are sharp, this lower bound reduces to the one conjectured
in \cite{Kra87} and proved in \cite{MaaU88}.

Note that the infimum in \eqref{cprep} actually is a minimum, because the two
entropies are continuous in $\rho$. Moreover, the equality $\Iprep(\Ao,\Bo)=0$
is attained if and only if there exist two outcomes $x$ and $y$ such that both
positive operators $\Ao(x)$ and $\Bo(y)$ have at least one common
eigenvector with eigenvalue $1$.

For sharp observables, we immediately deduce that the absence of measurement
uncertainty implies the absence of preparation uncertainty. Indeed,
$\Icomp(\Ao,\Bo)=0$ is the same as $\Ao$ and $\Bo$ being compatible, which in
turn is equivalent to the existence of a whole basis of common eigenvectors
$\{\psi_i : i=1,\ldots,d\}$ for which both distributions
$\langle\psi_i|\Ao(x)\psi_i\rangle$ and $\langle\psi_i|\Bo(y)\psi_i\rangle$
reduce to Kronecker deltas \cite[Cor.~5.3]{LA03}. Therefore, we have the
implication $\Icomp(\Ao,\Bo)=0\implies\Iprep(\Ao,\Bo)=0$. However, the same
relation fails for general POVMs: for any couple of trivial observables $\Ao$
and $\Bo$ such that $\Ao\neq\delta_x\id$ or $\Bo\neq\delta_y\id$, we have
$\Icomp(\Ao,\Bo)=0$ and $\Iprep(\Ao,\Bo)>0$. On the converse direction, the
example of two non commuting sharp observables with a common eigenspace shows
that in general $\Iprep(\Ao,\Bo)=0\centernot\implies\Icomp(\Ao,\Bo)=0$. The
failure of this implication exhibits a striking difference between
preparation and measurement uncertainties: actually, the entropic
incompatibility degree vanishes if and only if the two observables are
compatible (Theorem \ref{prop:2B}, item (\ref{jmeas})), while in the
preparation case nothing similar happens.

Nevertheless, there exists a link between the entropic incompatibility degree
$\Icomp$ and the preparation uncertainty coefficient $\Iprep$. Indeed, let us
consider the trivial uniform bi-observable $\Uo\in \Mscr(\Xscr\times \Yscr)$, with
$\Uo = (u_\Xscr\otimes u_\Yscr)\id$ and $ \Uo_{[1]}
= u_\Xscr\id$, \ $\Uo_{[2]} = u_\Yscr\id$.
By Proposition \ref{prop:propHSrel}, item (\ref{HSrel}), we have
\[
\Srel{\Ao^\rho}{ \Uo^\rho_{[1]}}+\Srel{\Bo^\rho}{\Uo^\rho_{[2]}}=\log \abs\Xscr+\log \abs\Yscr -
H(\Ao^\rho)-H(\Bo^\rho).
\]
By taking the supremum over all states, Definitions \ref{def:D} and
\ref{def:Cinc} give
\[
\Icomp(\Ao,\Bo)\leq \Div{\Ao,\Bo}{\Uo}=\log \abs\Xscr+\log \abs\Yscr-\Iprep(\Ao,\Bo).
\]
The final result is the following tradeoff bound:
\begin{equation}\label{jointbound}
\Icomp(\Ao,\Bo)+\Iprep(\Ao,\Bo)\leq \log \abs\Xscr+\log \abs\Yscr.
\end{equation}

Note that this bound is saturated at least in the trivial case $\Ao=u_\Xscr
\id$, $\Bo=u_\Yscr \id$, for which we have $\Iprep(\Ao,\Bo)= \log
\abs\Xscr+\log \abs\Yscr$ and $\Icomp(\Ao,\Bo)=0$. We also remark that
\eqref{jointbound} is not the trivial sum of the two upper bounds
$\Icomp(\Ao,\Bo)\leq 2$ (Theorem \ref{prop:3}) and $\Iprep(\Ao,\Bo)\leq \log
\abs\Xscr+\log \abs\Yscr$ (following from the definition \eqref{cprep} of
$\Iprep$ and the bound for the Shannon entropy of Proposition \ref{prop:propHSrel}, item (\ref{HSbound})).

\section{Symmetries and uncertainty lower bounds}\label{sec:symm}

In quantum mechanics, many fundamental observables are directly related to symmetry properties of the quantum system at hand. That is, in many concrete
situations there is some symmetry group $G$ acting on both the measurement
outcome space and the set of system states, in such a way that the
two group actions naturally intertwine. The observables that preserve the
symmetry structure are usually called {\em $G$-covariant}.

In the present setting, covariance will help us to find the
incompatibility degree $\Icomp(\Ao,\Bo)$ and characterize the optimal set
$\Mcomp(\Ao,\Bo)$ for a couple of sharp observables $\Ao$ and $\Bo$ sharing
suitable symmetry properties. In Section \ref{sec:cov.gen} below we provide a
general result in this sense, which we then apply to the cases of two spin-1/2
components (Section \ref{ex:D2}) and two observables that are conjugated by the
Fourier transform of a finite field (Section \ref{sec:cov.MUBs}).

\subsection{Symmetries and optimal approximate joint measurments}\label{sec:cov.gen}

We now suppose that the joint outcome space $\Xscr\times\Yscr$ carries the
action of a finite group $G$, acting on the left, so that each $g\in G$ is
associated with a bijective map on the finite set $\Xscr\times\Yscr$. Moreover,
we also assume that there is a projective unitary representation $U$ of $G$ on
$\Hscr$. The following natural left actions are then defined for all $g\in G$:
\begin{enumerate}[-]
\item on $\sh$: $g\rho = U(g)\rho U(g)^*$;
\item on $\Pscr(\Xscr\times\Yscr)$: $gp(x,y) = p(g^{-1}(x,y))$ for all
    $(x,y)\in\Xscr\times\Yscr$;
\item on $\Mscr(\Xscr\times\Yscr)$: $g\Mo(x,y) =
    U(g)\Mo(g^{-1}(x,y))U(g)^*$ for all $(x,y)\in\Xscr\times\Yscr$.
\end{enumerate}
While the two actions on $\sh$ and $\Pscr(\Xscr\times\Yscr)$ have a clear
physical interpretation, the action on $\Mscr(\Xscr\times\Yscr)$ is understood
by means of the fundamental relation
\begin{equation}\label{eq:gMrho}
g(\Mo^\rho) = (g\Mo)^{g\rho} ,
\end{equation}
which asserts that $g\Mo$ is defined in such a way that measuring it on the
transformed state $g\rho$ just gives the translated probability $g(\Mo^\rho)$.
Note that the parenthesis order actually matters in \eqref{eq:gMrho}.

A fixed point $\Mo$ for the action of $G$ on $\Mscr(\Xscr\times\Yscr)$ is a
{\em $G$-covariant observable}, i.e.\ $U(g)\Mo(x,y)U(g)^*=\Mo(g(x,y))$ for all
$(x,y)\in\Xscr\times\Yscr$ and $g\in G$. On the other hand, if
$\Mo\in\Mscr(\Xscr\times\Yscr)$ is any observable, then
\begin{equation}\label{eq:covariantization}
\Mo_G = \frac{1}{|G|} \sum_{g\in G} g\Mo
\end{equation}
is a $G$-covariant element in $\Mscr(\Xscr\times\Yscr)$, which we call the {\em
covariant version} of $\Mo$.

Now we state some sufficient conditions on the observables $\Ao,\Bo$ and the
action of the group $G$ ensuring that the entropic divergence
$\Div{\Ao,\Bo}{\cdot}$ is $G$-invariant, and then we derive their consequences
on the optimal approximate joint measurements of $\Ao$ and $\Bo$.

Note that the relative entropy is always invariant for a group action, that is,
\begin{equation}\label{eq:invS}
\Srel{gp}{gq} = \Srel pq, \qquad \forall p,q\in\Pscr(\Xscr\times\Yscr),\quad g\in G,
\end{equation}
by Proposition \ref{prop:propHSrel}, (\ref{HSinv}). Note also that, for
$p\in\Pscr(\Xscr\times\Yscr)$, the expression $gp_{[i]} = (gp)_{[i]}$ is
unambiguous, as the action of $g$ is defined on $\Pscr(\Xscr\times\Yscr)$ and
not on $\Pscr(\Xscr)$ or $\Pscr(\Yscr)$.

\begin{theorem}\label{prop:invarD^}
Let $\Ao\in\Mscr(\Xscr)$, $\Bo\in\Mscr(\Yscr)$ be the target observables.
Let $G$ be a finite group, acting on $\Xscr\times\Yscr$ and with a projective
unitary representation $U$ on $\Hscr$. Suppose the group $G$ is generated by a
subset $S_G\subseteq G$, such that each $g\in S_G$ satisfies either one
condition between:
\begin{enumerate}[(i)]
\item \label{8i} there exist maps $f_{g,\Xscr}:\Xscr\to\Xscr$ and
    $f_{g,\Yscr}:\Yscr\to\Yscr$ such that, for all $x\in\Xscr$ and
    $y\in\Yscr$,
\begin{enumerate}
\item \label{8ia} $g(x,y) = (f_{g,\Xscr}(x),f_{g,\Yscr}(y))$
\item \label{8ib} $U(g)\Ao(x)U(g)^* = \Ao(f_{g,\Xscr}(x))$ and
    $U(g)\Bo(y)U(g)^* = \Bo(f_{g,\Yscr}(y))$;
\end{enumerate}
\item \label{8ii} there exist maps $f_{g,\Xscr}:\Xscr\to\Yscr$ and
    $f_{g,\Yscr}:\Yscr\to\Xscr$ such that, for all $x\in\Xscr$ and
    $y\in\Yscr$,
\begin{enumerate}
\item \label{8iia} $g(x,y) = (f_{g,\Yscr}(y),f_{g,\Xscr}(x))$
\item \label{8iib} $U(g)\Ao(x)U(g)^* = \Bo(f_{g,\Xscr}(x))$ and
    $U(g)\Bo(y)U(g)^* = \Ao(f_{g,\Yscr}(y))$ .
\end{enumerate}
\end{enumerate}
Then, $\Div{\Ao,\Bo}{g\Mo} = \Div{\Ao,\Bo}{\Mo}$ for all
$\Mo\in\Mscr(\Xscr\times\Yscr)$ and $g\in G$.
\end{theorem}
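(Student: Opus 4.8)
The plan is to reduce the whole statement to a single pointwise identity for the error function, and then exploit the remark following \eqref{eq:Soprod} that the error function is itself a relative entropy, hence invariant under the group action \eqref{eq:invS}.

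First I would reduce to generators. Since $\Mo\mapsto g\Mo$ is a genuine left action of $G$, the equality $\Div{\Ao,\Bo}{g\Mo}=\Div{\Ao,\Bo}{\Mo}$, once known for a fixed $g$ and \emph{all} $\Mo\in\Mscr(\Xscr\times\Yscr)$, propagates automatically to products (for $g=s_1s_2$, apply it with $s_1$ to the observable $s_2\Mo$ and then with $s_2$ to $\Mo$) and to inverses (replace $\Mo$ by $g^{-1}\Mo$). Thus it suffices to treat each generator $g\in S_G$. Next, because $U(g)$ is unitary, $\rho\mapsto g\rho=U(g)\rho U(g)^*$ is a bijection of $\sh$, so $\sup_\rho\sddiv{\Ao,\Bo}{g\Mo}(\rho)=\sup_\rho\sddiv{\Ao,\Bo}{g\Mo}(g\rho)$, and the theorem reduces to the pointwise identity
\[
\sddiv{\Ao,\Bo}{g\Mo}(g\rho)=\sddiv{\Ao,\Bo}{\Mo}(\rho),\qquad\forall\rho\in\sh,\ g\in S_G .
\]

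To prove this, I would write the left-hand side via \eqref{eq:Soprod} as $\Srel{\Ao^{g\rho}\otimes\Bo^{g\rho}}{(g\Mo)^{g\rho}_{[1]}\otimes(g\Mo)^{g\rho}_{[2]}}$, so that the heart of the matter is the pair of equivariance identities
\[
\Ao^{g\rho}\otimes\Bo^{g\rho}=g\big(\Ao^\rho\otimes\Bo^\rho\big),\qquad (g\Mo)^{g\rho}_{[1]}\otimes(g\Mo)^{g\rho}_{[2]}=g\big(\Mo^\rho_{[1]}\otimes\Mo^\rho_{[2]}\big),
\]
after which invariance of the relative entropy under the action, \eqref{eq:invS}, immediately gives $\Srel{g(\cdot)}{g(\cdot)}=\Srel{\cdot}{\cdot}=\sddiv{\Ao,\Bo}{\Mo}(\rho)$. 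For the first identity I would compute $\Ao^{g\rho}(x)=\Tr\{\rho\,U(g)^*\Ao(x)U(g)\}$ and rewrite $U(g)^*\Ao(x)U(g)$ by inverting hypothesis (\ref{8ib}) or (\ref{8iib}): in case (\ref{8i}) this yields $\Ao^\rho\circ f_{g,\Xscr}^{-1}$, while in case (\ref{8ii}) it yields $\Bo^\rho\circ f_{g,\Yscr}^{-1}$, which is exactly the diagonal, respectively the swapped, action of $g$ on a product distribution. For the second identity I would start from the fundamental relation \eqref{eq:gMrho}, $(g\Mo)^{g\rho}=g(\Mo^\rho)$, and take marginals using hypothesis (\ref{8ia}) or (\ref{8iia}) on the outcome action: a diagonal $g$ relabels each $(\Mo^\rho)_{[i]}$ within its own factor, whereas a swapping $g$ exchanges the two marginals before relabelling.

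The step requiring the most care is case (\ref{8ii}), where $f_{g,\Xscr}$ and $f_{g,\Yscr}$ are bijections \emph{between} the two different sets $\Xscr$ and $\Yscr$ and $g$ interchanges the two tensor factors. The key point to verify is that the interchange of the outcome labels in (\ref{8iia}) is precisely compensated by the exchange $\Ao\leftrightarrow\Bo$ in (\ref{8iib}), so that both the target product $\Ao^{g\rho}\otimes\Bo^{g\rho}$ and the marginal product $(g\Mo)^{g\rho}_{[1]}\otimes(g\Mo)^{g\rho}_{[2]}$ come out transformed by one and the same $g$; only then can the single invariance \eqref{eq:invS} be applied to both entries at once. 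Keeping the domains of $f_{g,\Xscr}^{-1}$ and $f_{g,\Yscr}^{-1}$ straight throughout this bookkeeping is the only genuine subtlety, and once the two equivariance identities are established the conclusion is immediate.
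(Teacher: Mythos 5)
Your proof is correct and follows essentially the same route as the paper's: reduction to the generators, the two equivariance identities (the paper's \eqref{eq:a} and \eqref{eq:b}), the product form \eqref{eq:Soprod} of the error function combined with the fundamental relation \eqref{eq:gMrho}, one application of the relative-entropy invariance \eqref{eq:invS}, and finally the supremum over states using $g\sh = \sh$. The only cosmetic differences are that you propagate the \emph{conclusion} from generators to all of $G$ (via the left-action property), whereas the paper propagates the \emph{hypotheses} to products of generators, and that you establish both equivariance identities before applying \eqref{eq:invS} once with $g$ acting on both entries, instead of shifting $g^{-1}$ across the two arguments as in the paper's chain of equalities.
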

\begin{proof}
If two elements $g_1,g_2\in G$ satisfy the above hypotheses, so does their
product $g_1 g_2$. Since $S_G$ generates $G$, we can then assume that $S_G=G$.
In this case, condition (\ref{8i}.a) or (\ref{8ii}.a) easily implies the
relation
\begin{equation}\label{eq:a}
gp_{[1]}\otimes gp_{[2]} = g(p_{[1]}\otimes p_{[2]}), \qquad \forall p\in\Pscr(\Xscr\times\Yscr), \, g\in G .
\end{equation}
On the other hand, by condition (\ref{8i}.b) or (\ref{8ii}.b), we get
\begin{equation}\label{eq:b}
\Ao^{g\rho}\otimes\Bo^{g\rho} = g(\Ao^\rho\otimes\Bo^\rho), \qquad \forall \rho\in\sh, \, g\in G .
\end{equation}
For any $\Mo\in\Mscr(\Xscr\times\Yscr)$, we then have
\begin{align*}
\sddiv{\Ao,\Bo}{g^{-1}\Mo}(\rho) & = \Srel{\Ao^\rho\otimes\Bo^\rho}
{(g^{-1}\Mo)_{[1]}^\rho\otimes(g^{-1}\Mo)_{[2]}^\rho} && \qquad \text{by \eqref{eq:Soprod}} \\
& = \Srel{\Ao^\rho\otimes\Bo^\rho}{g^{-1}(\Mo^{g\rho})_{[1]}\otimes g^{-1}(\Mo^{g\rho})_{[2]}}
&& \qquad \text{by \eqref{eq:gMrho}} \\
& = \Srel{\Ao^\rho\otimes\Bo^\rho}{g^{-1}(\Mo^{g\rho}_{[1]}\otimes \Mo^{g\rho}_{[2]})}
&& \qquad \text{by \eqref{eq:a}} \\
& = \Srel{g(\Ao^\rho\otimes\Bo^\rho)}{\Mo^{g\rho}_{[1]}\otimes \Mo^{g\rho}_{[2]}}
&& \qquad \text{by \eqref{eq:invS}} \\
& = \Srel{\Ao^{g\rho}\otimes\Bo^{g\rho}}{\Mo^{g\rho}_{[1]}\otimes \Mo^{g\rho}_{[2]}}
&& \qquad \text{by \eqref{eq:b}} \\
& = \sddiv{\Ao,\Bo}{\Mo}(g\rho) . &&
\end{align*}
Taking the supremum over $\rho$ and observing that $\sh=g\sh$, it follows that
\\ $\Div{\Ao,\Bo}{g^{-1}\Mo} = \Div{\Ao,\Bo}{\Mo}$. \end{proof}
\begin{remark}
\begin{enumerate}\label{rem:invarD^}
\item The occurrence of either hypothesis (\ref{8i}) or (\ref{8ii}) may depend on the generator $g\in S_G$; however, in both cases $g$ does not mix the $\Xscr$ and $\Yscr$ outcomes together.

\item Conditions (\ref{8i}.a), (\ref{8ii}.a) are hypotheses about the
    action of $G$ on the outcome space $\Xscr\times\Yscr$. Note that each
    one implies that the maps $f_{g,\Xscr}$ and $f_{g,\Yscr}$ are
    bijective. In particular, one can have some generator $g$ satisfying
    (\ref{8ii}.a) only if $|\Xscr|=|\Yscr|$.

\item Conditions (\ref{8i}.b), (\ref{8ii}.b) involve also the observables
    $\Ao$ and $\Bo$. Even if $\Ao$ and $\Bo$ are not compatible, they are
    required to behave as if they were the marginals of a covariant
    bi-observable.
\item \label{it:perm}The symmetries allowed in hypothesis (\ref{8ii}) of
    Theorem \ref{prop:invarD^} essentially are of permutational nature.
    They directly follow from the exchange symmetry of the error function
    \eqref{statedependentdivergence}, in which the approximation errors
    $\Srel{\Ao^\rho}{ \Mo^\rho_{[1]}}$ and $\Srel{\Bo^\rho}{
    \Mo^\rho_{[2]}}$ are equally weighted.
\end{enumerate}
\end{remark}

\begin{corollary}\label{prop:invarI^}
Under the hypotheses of Theorem \ref{prop:invarD^},
\begin{enumerate}[-]
\item the set $\Mcomp(\Ao,\Bo)$ is $G$-invariant;
\item for any $\Mo\in\Mcomp(\Ao,\Bo)$, we have $\Mo_G\in\Mcomp(\Ao,\Bo)$;
\item there exists a $G$-covariant observable in $\Mcomp(\Ao,\Bo)$.
\end{enumerate}
\end{corollary}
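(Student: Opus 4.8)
The plan is to derive all three assertions directly from the $G$-invariance of the entropic divergence established in Theorem \ref{prop:invarD^}, combined with the structural facts about $\Mcomp(\Ao,\Bo)$ recorded in Theorem \ref{prop:2B}, item (\ref{optM}), namely that it is nonempty, convex and compact. The guiding observation is that $\Mcomp(\Ao,\Bo)$ is exactly the set on which the function $\Div{\Ao,\Bo}{\cdot}$ attains its minimum value $\Icomp(\Ao,\Bo)$; therefore any transformation leaving this function unchanged must permute its minimizers among themselves.

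For the first bullet, I would fix $\Mo\in\Mcomp(\Ao,\Bo)$, so that $\Div{\Ao,\Bo}{\Mo}=\Icomp(\Ao,\Bo)$. Since each $g\in G$ acts as a bijection of $\Mscr(\Xscr\times\Yscr)$ onto itself, $g\Mo$ is again an admissible bi-observable, and Theorem \ref{prop:invarD^} gives $\Div{\Ao,\Bo}{g\Mo}=\Div{\Ao,\Bo}{\Mo}=\Icomp(\Ao,\Bo)$. Hence $g\Mo$ is itself a minimizer, i.e.\ $g\Mo\in\Mcomp(\Ao,\Bo)$ for every $g\in G$, which is precisely the $G$-invariance of the set.

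For the second bullet, I would note that the covariant version $\Mo_G=\frac{1}{|G|}\sum_{g\in G} g\Mo$ defined in \eqref{eq:covariantization} is a convex combination of the elements $g\Mo$, each of which lies in $\Mcomp(\Ao,\Bo)$ by the first bullet. As $\Mcomp(\Ao,\Bo)$ is convex by Theorem \ref{prop:2B}, item (\ref{optM}), the average $\Mo_G$ remains in $\Mcomp(\Ao,\Bo)$. The third bullet then follows at once: $\Mcomp(\Ao,\Bo)$ is nonempty by the same item, so one may pick any $\Mo\in\Mcomp(\Ao,\Bo)$; its covariant version $\Mo_G$ belongs to $\Mcomp(\Ao,\Bo)$ by the second bullet and is $G$-covariant by the very construction \eqref{eq:covariantization}, yielding the desired $G$-covariant optimal approximate joint measurement.

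I do not expect a genuine obstacle here, since Theorem \ref{prop:invarD^} carries the entire analytic weight. The only points to handle with a little care are that $g\Mo$ is a legitimate bi-observable (so that it truly competes in the infimum defining $\Icomp$), which is guaranteed because $G$ acts on $\Mscr(\Xscr\times\Yscr)$, and that the minimum defining $\Mcomp(\Ao,\Bo)$ is actually attained together with convexity and nonemptiness of the minimizing set, all furnished by Theorem \ref{prop:2B}, item (\ref{optM}).
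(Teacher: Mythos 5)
Your proposal is correct and follows essentially the same route as the paper's own proof: $G$-invariance of $\Div{\Ao,\Bo}{\cdot}$ from Theorem \ref{prop:invarD^} gives $G$-invariance of the minimizing set, convexity of $\Mcomp(\Ao,\Bo)$ (Theorem \ref{prop:2B}, item (\ref{optM})) places the average $\Mo_G$ inside it, and nonemptiness then yields a $G$-covariant optimal bi-observable. The paper's proof is just a terser version of exactly this argument.
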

\begin{proof}
Since $\Div{\Ao,\Bo}{\cdot}$ is $G$-invariant by Theorem \ref{prop:invarD^},
then the set $\Mcomp(\Ao,\Bo)$ is $G$-invariant. This fact and the convexity of
$\Mcomp(\Ao,\Bo)$ implies that $\Mo_G\in\Mcomp(\Ao,\Bo)$ for all
$\Mo\in\Mcomp(\Ao,\Bo)$. Since the latter set is nonempty by Theorem
\ref{prop:2B}, item (\ref{optM}), it then always contains a $G$-covariant
observable.
\end{proof}

\begin{remark}\label{rem:covariant}
Since the covariance requirement reduces the many degrees of freedom in the
choice of a bi-observable $\Mo\in\Mscr(\Xscr\times\Yscr)$, we expect that the
larger is the symmetry group $G$, the fewer amount of free parameters will be
needed to describe a $G$-covariant element $\Mo$.  This
will be a big help in the computation of $\Icomp(\Ao,\Bo)$, as Corollary
\ref{prop:invarI^} allows to minimize $\Div{\Ao,\Bo}{\cdot}$ just on the set of
$G$-covariant bi-observables. More precisely, under the hypotheses of Theorem
\ref{prop:invarD^},
\begin{equation*}
\Icomp(\Ao,\Bo)=\min_{\substack{\Mo\in\Mscr(\Xscr\times\Yscr)\\
\Mo\text{ $G$-covariant}}}\max_{\substack{\rho\in\Sscr(\Hscr)\\ \rho\text{ pure}}}
\left\{\Srel{\Ao^\rho}{\Mo_{[1]}^\rho} + \Srel{\Bo^\rho}{\Mo_{[2]}^\rho}\right\},
\end{equation*}
where the minimum has to be computed only with respect to the parameters
describing a $G$-covariant bi-observable $\Mo$. In particular, it is only the
dependence of the marginals $\Mo_{[1]}$ and $\Mo_{[2]}$ on such parameters that
comes into play. Of course, solving this double optimization problem yields the
value of $\Icomp(\Ao,\Bo)$ and all the covariant optimal joint measurement of
$\Ao$ and $\Bo$, but not the whole optimal set $\Mcomp(\Ao,\Bo)$.
\end{remark}

In the cases of two othogonal spin-1/2 components (Section \ref{sec:2spins}) and two Fourier conjugate observables (Section \ref{sec:cov.MUBs}), covariance will reduce the number of
parameters to just a single one.

If $\Bo$ is not sharp, the two sets $\Mcomp(\Ao,\Bo)$ and $\Mad(\Ao,\Bo)$ may
be different, and we need a specific corollary for $\Mad(\Ao,\Bo)$. Indeed,
stronger hypotheses are required to ensure that the sequential measurement set $\Mscr(\Xscr;\Bo)$  is $G$-invariant.

\begin{corollary}\label{prop:invarIad^}
Under the hypotheses of Theorem \ref{prop:invarD^}, and supposing in addition
that all the generators $g\in S_G$ enjoy only condition (\ref{8i}) of that theorem,
\begin{enumerate}[-]
\item the set $\Mscr(\Xscr;\Bo)$ is $G$-invariant;
\item the set $\Mad(\Ao,\Bo)$ is $G$-invariant;
\item for any $\Mo\in\Mad(\Ao,\Bo)$, we have $\Mo_G\in\Mad(\Ao,\Bo)$;
\item there exists a $G$-covariant observable in $\Mad(\Ao,\Bo)$.
\end{enumerate}
\end{corollary}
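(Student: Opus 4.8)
The plan is to follow the proof of Corollary \ref{prop:invarI^} essentially line by line; the only genuinely new ingredient is the first bullet, namely the $G$-invariance of the sequential-measurement set $\Mscr(\Xscr;\Bo)$. Once this is in hand, the remaining three bullets are immediate. Since $\Div{\Ao,\Bo}{\cdot}$ is $G$-invariant by Theorem \ref{prop:invarD^} and its minimization domain $\Mscr(\Xscr;\Bo)$ is now also $G$-invariant, the argmin set $\Mad(\Ao,\Bo)$ is $G$-invariant. Recalling from Theorem \ref{prop:2B}, item (\ref{optM}), that both $\Mscr(\Xscr;\Bo)$ and $\Mad(\Ao,\Bo)$ are convex and that $\Mad(\Ao,\Bo)\neq\emptyset$, for any $\Mo\in\Mad(\Ao,\Bo)$ the covariant version $\Mo_G=\frac{1}{|G|}\sum_{g\in G} g\Mo$ is a convex combination of the elements $g\Mo$, all of which lie in $\Mad(\Ao,\Bo)$ by invariance; hence $\Mo_G\in\Mad(\Ao,\Bo)$, and applying this to any member of the nonempty set $\Mad(\Ao,\Bo)$ yields a $G$-covariant optimal sequential measurement.

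Exactly as in the proof of Theorem \ref{prop:invarD^}, I would first reduce to the case $S_G=G$. This uses that condition (\ref{8i}) \emph{alone} is closed under products: if $g_1,g_2$ both satisfy (\ref{8i}), then so does $g_1 g_2$, with $f_{g_1g_2,\Xscr}=f_{g_1,\Xscr}\circ f_{g_2,\Xscr}$ and $f_{g_1g_2,\Yscr}=f_{g_1,\Yscr}\circ f_{g_2,\Yscr}$ (the projective phase in $U(g_1g_2)$ cancels under conjugation). Thus every $g\in G$ may be taken to satisfy (\ref{8i}), and in particular $g^{-1}(x,y)=(f_{g,\Xscr}^{-1}(x),f_{g,\Yscr}^{-1}(y))$.

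The core of the argument is then an explicit construction proving $g\Mo\in\Mscr(\Xscr;\Bo)$ whenever $\Mo\in\Mscr(\Xscr;\Bo)$. Writing $\Mo=\Jcal^*(\Bo)$ with Kraus decomposition $\Jcal_x[\rho]=\sum_\alpha J^\alpha_x\rho J^{\alpha*}_x$, I would introduce the instrument
\begin{equation*}
\Jcal'_x[\rho]=U(g)\,\Jcal_{f_{g,\Xscr}^{-1}(x)}\!\big[U(g)^*\rho\,U(g)\big]\,U(g)^* ,
\end{equation*}
that is, with Kraus operators $K^\alpha_x=U(g)J^\alpha_{f_{g,\Xscr}^{-1}(x)}U(g)^*$. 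Its normalization $\sum_x\Jcal^{\prime *}_x[\id]=\id$ holds because $x\mapsto f_{g,\Xscr}^{-1}(x)$ permutes $\Xscr$ and $U(g)$ is unitary, so $\Jcal'\in\Jscr(\Xscr)$. Finally I would check $g\Mo=\Jcal^{\prime *}(\Bo)$: starting from $g\Mo(x,y)=U(g)\Jcal^*_{f_{g,\Xscr}^{-1}(x)}[\Bo(f_{g,\Yscr}^{-1}(y))]U(g)^*$ and rewriting $\Bo(f_{g,\Yscr}^{-1}(y))=U(g)^*\Bo(y)U(g)$ — precisely condition (\ref{8i}.b) for $\Bo$ — the unitaries recombine to give $\sum_\alpha K^{\alpha*}_x\Bo(y)K^\alpha_x=\Jcal^{\prime *}_x[\Bo(y)]$, as needed.

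The main obstacle, and the sole reason for the extra hypothesis, lies exactly in this first step: condition (\ref{8i}) keeps $\Bo$ fixed up to a relabelling of $\Yscr$, thereby preserving its role as the second, exactly measured observable. Had a generator satisfied condition (\ref{8ii}) instead, it would swap the $\Xscr$ and $\Yscr$ outcomes and send $\Bo$ to $\Ao$, so that $g\Mo$ would be a sequential measurement of an approximate $\Bo$ followed by $\Ao$ — an element of $\Mscr(\Yscr;\Ao)$ rather than of $\Mscr(\Xscr;\Bo)$ — and the invariance of $\Mscr(\Xscr;\Bo)$ would break down. This is the sequential counterpart of the asymmetry noted in Remark \ref{rem:invarD^}, item (\ref{it:perm}), and clarifies why only $\Icomp$, and not $\Iad$, is allowed to inherit the permutation symmetries of hypothesis (\ref{8ii}).
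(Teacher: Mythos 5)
Your proof is correct and follows essentially the same route as the paper's: you establish the $G$-invariance of $\Mscr(\Xscr;\Bo)$ by conjugating the instrument and relabelling its outcome index, i.e.\ $\Jcal'_x[\rho]=U(g)\,\Jcal_{f_{g,\Xscr}^{-1}(x)}[U(g)^*\rho\,U(g)]\,U(g)^*$, and then use condition (\ref{8i}.b) in the form $\Bo(f_{g,\Yscr}^{-1}(y))=U(g)^*\Bo(y)U(g)$ to identify $g\Mo=\Jcal'^*(\Bo)$, after which the remaining bullets follow exactly as in Corollary \ref{prop:invarI^}. Your additional details — the closure of condition (\ref{8i}) under products (justifying the reduction to $S_G=G$), the explicit Kraus operators with their normalization, and the closing remark on why condition (\ref{8ii}) must be excluded — are all accurate refinements of steps the paper leaves implicit.
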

\begin{proof}
We know that $\Div{\Ao,\Bo}{\cdot}$ is $G$-invariant by Theorem
\ref{prop:invarD^}, and so we only have to prove that $\Mscr(\Xscr;\Bo)$ is
$G$-invariant; then the subsequent claims follow as in Corollary
\ref{prop:invarI^}. Since we can assume $S_G=G$, any element $g\in G$ maps a
sequential measurement $\Mo=\Jcal^*(\Bo)$ to another sequential measurement
$\Jcal'^*(\Bo)$, due to condition (\ref{8i}) of Theorem \ref{prop:invarD^}:
\begin{multline*}
g\Mo(x,y) =  U(g)\Mo\Big(g^{-1}(x,y)\Big)U(g)^* =
U(g)\Mo\Big(f_{g,\Xscr}^{-1}(x),
f_{g,\Yscr}^{-1}(y)\Big)U(g)^* \\
{}=  U(g)\Jcal^*_{f_{g,\Xscr}^{-1}(x)}\Big[\Bo(f_{g,\Yscr}^{-1}(y))\Big]U(g)^* =
U(g)\Jcal^*_{f_{g,\Xscr}^{-1}(x)}\Big[U(g)^*\Bo(y)U(g)\Big]U(g)^*=:\Jcal'^*(\Bo)(x,y).
\end{multline*}
\end{proof}

\begin{remark}
Corollary \ref{prop:invarIad^} does not admit elements $g$ satisfying condition
(\ref{8ii}) of Theorem \ref{prop:invarD^} because this hypothesis alone can not
guarantee the $G$-invariance of the set $\Mscr(\Xscr;\Bo)$. Of course, it works
for a sharp $\Bo$, but it could fail, for example, for a trivial $\Bo$. Indeed,
take $\Xscr=\Yscr$ and $\Ao=\Bo=\Uo_\Xscr$; then
$\Mscr(\Xscr;\Bo)=\{\Mo\in\Mscr(\Xscr\times\Yscr) : \Mo(x,y)=\Mo_1(x)
u_\Xscr(y),\ \forall x,y, \text{ for some } \Mo_1 \in\Mscr(\Xscr)\}$, and
$\Mo_{[2]}(y)=\Bo(y)$ has rank $d$ for every $\Mo\in\Mscr(\Xscr;\Bo)$ and
$y\in\Yscr$. Nevertheless, if $g$ satisfies (\ref{8ii}.a), then (\ref{8ii}.b)
is obvious, but $g$ could send a sequential measurement $\Mo$ outside
$\Mscr(\Xscr;\Bo)$. Indeed,
$(g\Mo)_{[2]}(y)=U(g)\Mo_1\Big(f_{g,\Xscr}^{-1}(y)\Big)U(g)^*$ has rank equal
to the rank of $\Mo_1\Big(f_{g,\Xscr}^{-1}(y)\Big)$, which can be chosen
smaller than $d$.
\end{remark}

\subsection{Two spin-1/2 components} \label{ex:D2}

As a first application of Theorem \ref{prop:invarD^} and its corollaries, we take as target observables two spin-1/2 components along the directions
defined by two unit vectors $\vec  a$ and $\vec b$ in $\Rbb^3$. They are
represented by the sharp observables
\begin{equation}\label{ABspin}
\Ao(x) = \frac{1}{2}
\left(\id+x\,\vec {a} \cdot \vec {\sigma}\right),  \quad \Bo(y)
= \frac{1}{2} \left(\id+ y\,\vec {b} \cdot \vec {\sigma}\right),
\end{equation}
where $\vec {\sigma} = (\sigma_1\,,\sigma_2\,,\sigma_3)$ is the vector of the
three Pauli matrices on $\Hscr=\Cbb^2$, and $\Xscr=\Yscr=\{-1,+1\}$.
Let $\alpha\in
[0,\pi]$ be the angle formed by $\vec {a}$ and $\vec {b}$; by item (\ref{cinv}) of Theorem \ref{prop:2B}, the coefficient $\Icomp(\Ao,\Bo)$ does not depend on
the choice of the values of the outcomes, and this allows us to take $\alpha\in
[0,\pi/2]$. Indeed, when $\alpha>\pi/2$, it is enough to change $y\to -y$ and
$\vec {b} \to -\vec {b}$ to recover the previous case.
Without loss of generality, we take the two spin directions in the $\vec i\vec j$-plane
and choose the $\vec  i$- and $\vec  j$-axes in such a way that the bisector of
the angle formed by $\vec {a}$ and $\vec {b}$ coincides with the bisector $\vec
n$ of the first quadrant; $\vec m$ is the bisector of the second quadrant.
This choice is illustrated in Figure \ref{abfig},
where $\alpha\in \left[0,\pi/2\right]$,  \ $ a_1^{\,2}+a_2^{\,2}=1$, and
\begin{equation}\label{a+b}
\begin{split}
&\vec a = a_1\vec i + a_2 \vec j,  \qquad\vec b =
a_2\vec i + a_1\vec j,  \qquad\vec
n=\frac{\vec i+\vec j}{\sqrt 2} ,\qquad \vec  m=\frac{\vec  j-\vec i}{\sqrt 2},
\\
&a_1= \sqrt{\frac{1+\sin \alpha}2}\in \left[\frac
1{\sqrt 2}\,,\, 1\right],\qquad a_2=\frac{\cos \alpha}{\sqrt{2(1+\sin
\alpha)}}\in \left[0,\, \frac 1{\sqrt 2} \right] .
\end{split}
\end{equation}
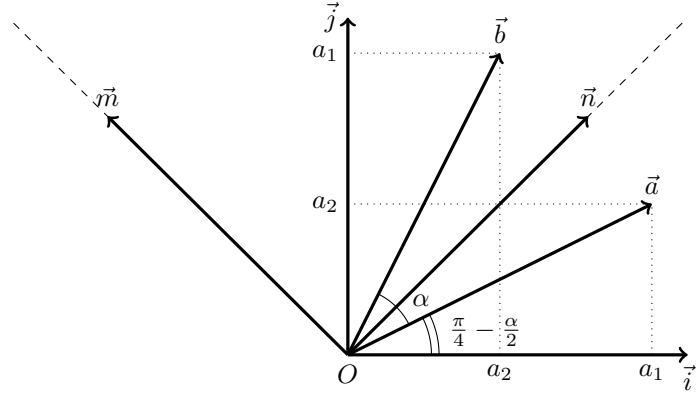
\begin{figure}[h]
\centering
\begin{tikzpicture}[domain=-4:4]
\draw[->, very thick](0,0)node[anchor=north]{$O$}--(4.472,0)node[anchor=north]{$\vec {i}$};
\draw[-, dashed](0,0)--(4.472,4.472);
\draw[->, very thick](0,0)--(3.162,3.162)node[anchor=south]{$\vec {n}$};
\draw[-, dashed](0,0)--(-4.472,4.472);
\draw[->, very thick](0,0)--(-3.162,3.162)node[anchor=south]{$\vec {m}$};
\draw[->, very thick](0,0)--(0,4.472)node[anchor=east]{$\vec {j}$};
\draw[->, very thick](0,0)--(4,2)node[anchor=south]{$\vec {a}$};
\draw[->, very thick](0,0)--(2,4)node[anchor=south]{$\vec {b}$};
\centerarc[](0,0)(26.565:63.435:0.9);
\draw[thick](0.72,0.72)node[anchor=west]{$\alpha$};
\centerarc[](0,0)(0:26.565:1.1);
\centerarc[](0,0)(0:26.565:1.2);
\draw[thick](1.2,0.30)node[anchor=west]{$\frac{\pi}{4}-\frac{\alpha}{2}$};
\draw[-,dotted](4,0)node[anchor=north]{$a_1$} --(4,2);
\draw[-,dotted](0,2)node[anchor=east]{$a_2$}--(4,2);
\draw[-,dotted](2,0)node[anchor=north]{$a_2$} --(2,4);
\draw[-,dotted](0,4)node[anchor=east]{$a_1$}--(2,4);
\end{tikzpicture}
\caption{The unit vectors $\vec  a$ and $\vec  b$ characterizing the target spin-1/2 observables \eqref{ABspin}.}\label{abfig}
\end{figure}

In the next part, we will see that the compatible observables optimally approximating the two target spins \eqref{ABspin} are noisy versions of another two spin-1/2 components; however, in general their directions may be different from the original $\vec{a}$ and $\vec{b}$. For this reason, we need to introduce the family of observables $\Ao_{\vec {c}},\Bo_{\vec {c}}\in\Mscr(\{+1,-1\})$, with
\begin{equation}\label{eq:marginstorte}
\Ao_{\vec {c}}(x) = \frac{1}{2}\left[\id + x\left(c_1 \sigma_1
+ c_2 \sigma_2\right)\right],
\quad \Bo_{\vec {c}}(y)= \frac{1}{2} \left[\id + y\left(c_2 \sigma_1
+ c_1 \sigma_2\right)\right],
\end{equation}
where $\vec  c = c_1\vec {i} + c_2\vec {j}$, $c_i\in \Rbb$. Note that the components of $\vec  c$ appear in $\Ao_{\vec  c}$ and $\Bo_{\vec c}$ in the reverse order; moreover, $\Ao=\Ao_{\vec {a}}$ and $\Bo=\Bo_{\vec {a}}$. Formula \eqref{eq:marginstorte} defines two observables if and only if $\abs{\vec {c}}\leq 1$, that is, $\vec{c}$ belongs to the disk
\begin{equation}\label{diskC}
C=\big\{c_1\vec {i}+c_2\vec {j} : c_1^2+c_2^2\leq 1\big\}.
\end{equation}
Note that, for $\abs{\vec{c}} = 1$, the observable $\Ao_{\vec{c}}$ is sharp, and it is the spin-1/2 component along the direction $\vec{c}$; on the other hand, for $\abs{\vec{c}} \in (0,1)$, $\Ao_{\vec {c}}$ is a noisy version of $\Ao_{\vec{c}/\abs{\vec{c}}}$ with noise intensity $1-\lam = 1-\abs{\vec{c}}$ (cf.~\eqref{eq:defOlam}). Analogue considerations hold for $\Bo_{\vec{c}}$.

\subsubsection{Entropic incompatibility degree and optimal measurements}\label{sec:2spins}
\ When the angle between the spin directions $\vec {a}$ and $\vec {b}$ is $\alpha=\pi/2$, the target observables become the two orthogonal spin-1/2 components along the $\vec i$- and $\vec j$-axes:
\begin{equation}\label{def:XY}
\Ao(x)=\Xo(x) = \frac{1}{2}\,(\id + x\sigma_1), \qquad \Bo(y)=\Yo(y) = \frac{1}{2}\,(\id + y\sigma_2).
\end{equation}
In Appendix \ref{app:orth}, we use Theorem \ref{prop:invarD^} and the many rotational symmetries of these observables to drastically simplify the problem of finding both the value of $\Icomp(\Xo,\Yo)$ and the explicit expression of a bi-observable in $\Mcomp(\Xo,\Yo)$. Remarkably, it also turns out that  $\Mcomp(\Xo,\Yo)$ is a singleton set. Indeed, the following theorem is proved.

\begin{theorem}\label{teo:incoORTHSPIN}
Let $\Xo$ and $\Yo$ be the two orthogonal spin-1/2 components \eqref{def:XY}. Then, there is a unique optimal approximate joint measurement of $\Xo$ and $\Yo$, that is the bi-observable
\begin{equation}\label{orthcov}
\Mo_0(x,y) = \frac{1}{4}\left(\id + \frac x {\sqrt2}\,\sigma_1
+ \frac y {\sqrt2}\,\sigma_2\right) ,
\end{equation}
i.e.\  $\Mcomp(\Xo,\Yo) = \{\Mo_0\}$. If $\rho_e$ is the projection on any eigenvector of $\sigma_1$ or $\sigma_2$, then
\begin{equation}\label{corth}
\Icomp(\Xo,\Yo)=\sddiv{\Xo,\Yo}{\Mo_0}(\rho_{e}) =\log \left[2\left(2-\sqrt 2\right)\right] \simeq 0.228447\,.
\end{equation}
\end{theorem}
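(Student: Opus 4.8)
The plan is to use the rich symmetry of the orthogonal pair to collapse the minimization over all bi-observables to a one-parameter family, and then to pin down the value, the optimizer and its uniqueness by direct estimates. First I would introduce the finite group $G$ generated by: conjugation by $\sigma_3$ together with $(x,y)\mapsto(-x,-y)$; conjugation by $\sigma_1$ together with $(x,y)\mapsto(x,-y)$; and conjugation by the unitary $W=-\tfrac{\rmi}{\sqrt2}(\sigma_1+\sigma_2)$ (which swaps $\sigma_1\leftrightarrow\sigma_2$) together with $(x,y)\mapsto(y,x)$. The first two generators satisfy hypothesis (\ref{8i}) and the third hypothesis (\ref{8ii}) of Theorem \ref{prop:invarD^}, so $\Div{\Xo,\Yo}{\cdot}$ is $G$-invariant. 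Imposing $G$-covariance on a generic $\Mo$ forces its $\id$-part to be the constant $\tfrac14$, kills its $\sigma_3$-component, and makes its $\sigma_1$- and $\sigma_2$-components proportional to $x$ and $y$; hence the covariant bi-observables are exactly
\begin{equation*}
\Mo_\lambda(x,y)=\tfrac14\left(\id+\lambda\,x\,\sigma_1+\lambda\,y\,\sigma_2\right),\qquad \lambda\in[-1/\sqrt2,\,1/\sqrt2],
\end{equation*}
the range of $\lambda$ being dictated by positivity; their marginals are the equally noisy spins $\tfrac12(\id+\lambda x\sigma_1)$ and $\tfrac12(\id+\lambda y\sigma_2)$, and the candidate of the statement is $\Mo_0=\Mo_{1/\sqrt2}$. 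By Corollary \ref{prop:invarI^} the infimum defining $\Icomp(\Xo,\Yo)$ is attained on a covariant element, so $\Icomp(\Xo,\Yo)=\min_\lambda\Div{\Xo,\Yo}{\Mo_\lambda}$.

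Next I would compute the error function on pure states. By Theorem \ref{prop:2A}(\ref{worstrho}) it suffices to use pure $\rho$, and writing its Bloch vector as $(r_1,r_2,r_3)$ one finds $\sddiv{\Xo,\Yo}{\Mo_\lambda}(\rho)=\varphi_\lambda(r_1)+\varphi_\lambda(r_2)$ with $\varphi_\lambda(r)=\tfrac{1+r}2\log\tfrac{1+r}{1+\lambda r}+\tfrac{1-r}2\log\tfrac{1-r}{1-\lambda r}$; this $\varphi_\lambda$ is even, vanishes at $0$, and is convex in $r$ by joint convexity of the relative entropy (Proposition \ref{prop:propHSrel}(\ref{HSconvex})). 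Evaluating at the axis state $\rho_e$ of Bloch vector $(1,0,0)$ gives $\sddiv{\Xo,\Yo}{\Mo_\lambda}(\rho_e)=\varphi_\lambda(1)=\log\frac{2}{1+\lambda}$, which at $\lambda=1/\sqrt2$ equals $\log[2(2-\sqrt2)]$ after rationalizing. Since $\Div{\Xo,\Yo}{\Mo_\lambda}\ge\varphi_\lambda(1)=\log\frac2{1+\lambda}\ge\log[2(2-\sqrt2)]$ for all $\lambda\le1/\sqrt2$, the reduction of the previous paragraph yields the lower bound $\Icomp(\Xo,\Yo)\ge\log[2(2-\sqrt2)]$, the last inequality being an equality only at $\lambda=1/\sqrt2$; in particular $\lambda=1/\sqrt2$ is the unique optimal parameter within the covariant family.

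For the matching upper bound I must prove that $\rho_e$ is a global maximizer of $\sddiv{\Xo,\Yo}{\Mo_0}$, i.e.\ that $\varphi_{1/\sqrt2}(r_1)+\varphi_{1/\sqrt2}(r_2)\le\log[2(2-\sqrt2)]$ whenever $r_1^2+r_2^2\le1$. As $\varphi_{1/\sqrt2}$ is convex and even, hence increasing in $|r_i|$, the maximum is attained on the circle $r_1^2+r_2^2=1$; parametrizing $r_1=\sqrt u$, $r_2=\sqrt{1-u}$ reduces the claim to showing that $u\mapsto\varphi_{1/\sqrt2}(\sqrt u)+\varphi_{1/\sqrt2}(\sqrt{1-u})$ is maximized at $u\in\{0,1\}$, which are precisely the axis states. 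This in turn follows once $w\mapsto\varphi_{1/\sqrt2}(\sqrt w)$ is shown to be convex on $[0,1]$, equivalently $r\,\varphi_{1/\sqrt2}''(r)\ge\varphi_{1/\sqrt2}'(r)$ for $r\in(0,1)$. I expect this single-variable estimate, obtained by differentiating the binary relative entropy explicitly, to be the one genuinely computational step and the main obstacle of the proof. Granting it, $\Div{\Xo,\Yo}{\Mo_0}=\log[2(2-\sqrt2)]$, which combined with the lower bound gives $\Icomp(\Xo,\Yo)=\log[2(2-\sqrt2)]=\sddiv{\Xo,\Yo}{\Mo_0}(\rho_e)$ and $\Mo_0\in\Mcomp(\Xo,\Yo)$.

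It remains to prove $\Mcomp(\Xo,\Yo)=\{\Mo_0\}$. Let $\Mo\in\Mcomp(\Xo,\Yo)$. By Corollary \ref{prop:invarI^} its covariant average $\Mo_G$ is again optimal and covariant, so it equals the unique optimal covariant member, namely $\Mo_0$. Each operator $\Mo_0(x,y)=\tfrac14(\id+\tfrac1{\sqrt2}(x\sigma_1+y\sigma_2))$ is rank one, and it is the convex combination $\tfrac1{|G|}\sum_g(g\Mo)(x,y)$ of positive operators; hence every summand, in particular $\Mo(x,y)$, must be proportional to $\Mo_0(x,y)$, say $\Mo(x,y)=\mu_{xy}\Mo_0(x,y)$ with $\mu_{xy}\ge0$. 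The normalization $\sum_{x,y}\mu_{xy}\Mo_0(x,y)=\id$, together with the orthogonality relations among the four diagonal Bloch directions carrying the $\Mo_0(x,y)$, forces $\mu_{++}=\mu_{--}$, $\mu_{+-}=\mu_{-+}$ and $\mu_{++}+\mu_{+-}=2$. Setting $\delta=(\mu_{++}-\mu_{+-})/(4\sqrt2)$, evaluation at $\rho_e$ gives $\sddiv{\Xo,\Yo}{\Mo}(\rho_e)=\varphi_{1/\sqrt2}(1)-\tfrac12\log(1-4\delta^2)$; since $\Mo$ is optimal this cannot exceed $\Icomp(\Xo,\Yo)=\varphi_{1/\sqrt2}(1)$, which forces $\delta=0$, hence $\mu_{xy}\equiv1$ and $\Mo=\Mo_0$.
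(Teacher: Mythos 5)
Your strategy is essentially the paper's own: reduce the minimization to the one-parameter $D_4$-covariant family via Theorem \ref{prop:invarD^} and Corollary \ref{prop:invarI^}, identify $\lambda=1/\sqrt2$ as the optimal parameter, evaluate the error function at the axis states, and derive uniqueness from the covariant average combined with the rank-one structure of $\Mo_0(x,y)$ and a final evaluation at $\rho_e$. Those parts of your argument that you actually carry out are correct: the covariant family and its positivity range match the paper's computation; your lower bound $\Div{\Xo,\Yo}{\Mo_\lambda}\geq\varphi_\lambda(1)=\log\frac{2}{1+\lambda}$ pins $\lambda=1/\sqrt2$ directly (a small shortcut compared with the paper, which routes the orthogonal case through the nonorthogonal Theorem \ref{teo:incoNONORTHSPIN} and a monotonicity lemma); and your uniqueness argument, including the constraints $\mu_{++}=\mu_{--}$, $\mu_{+-}=\mu_{-+}$, $\mu_{++}+\mu_{+-}=2$ and the evaluation at $\rho_e$, is the same as the paper's.

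There is, however, one genuine gap, and you name it yourself: the matching upper bound $\Div{\Xo,\Yo}{\Mo_0}\leq\log\left[2\left(2-\sqrt2\right)\right]$ rests entirely on the convexity of $w\mapsto\varphi_{1/\sqrt2}(\sqrt w)$ on $[0,1]$ (equivalently $r\,\varphi_{1/\sqrt2}''(r)\geq\varphi_{1/\sqrt2}'(r)$), which you ``grant'' rather than prove. Without it you have no control on $\sup_\rho\sddiv{\Xo,\Yo}{\Mo_0}(\rho)$: a priori the supremum could be attained off the axes and strictly exceed $\varphi_{1/\sqrt2}(1)$, in which case you obtain neither the value \eqref{corth}, nor $\Mo_0\in\Mcomp(\Xo,\Yo)$, nor uniqueness (your last step explicitly uses $\Icomp(\Xo,\Yo)=\varphi_{1/\sqrt2}(1)$). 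The claim is true, and the paper's device closes exactly this hole: the integral representation $\tilde s(v)=\varphi_{1/\sqrt2}(v)=\frac{1}{2\ln2}\int_{1/\sqrt2}^1\frac{2v^2(1-\mu)}{1-\mu^2v^2}\,\rmd\mu$, from which the paper shows, by differentiating $\tilde s(\cos\phi)+\tilde s(\sin\phi)$ in $\phi$, that the maxima occur precisely at $\phi\in\{0,\pi/2,\pi,3\pi/2\}$. The same representation finishes your route in one line: substituting $w=v^2$ gives $\varphi_{1/\sqrt2}(\sqrt w)=\frac{1}{2\ln2}\int_{1/\sqrt2}^1\frac{2w(1-\mu)}{1-\mu^2w}\,\rmd\mu$, and each integrand $w\mapsto w/(1-\mu^2 w)$ has second derivative $2\mu^2/(1-\mu^2w)^3\geq0$, so the integral is convex in $w$. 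Until you supply this (or an equivalent single-variable estimate), the central quantitative assertion of the theorem remains unproved.
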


Note that $\Mo_0(x,y)$ is a rank-one operator for all $(x,y)\in\Xscr\times\Yscr$, and its marginals
\[
\Mo_{0\,[1]}(x) = \frac{1}{2}\left( \id + \frac x {\sqrt2}\,\sigma_1\right) , \qquad \Mo_{0\,[2]}(y) = \frac{1}{2}\left( \id + \frac y {\sqrt2}\,\sigma_2\right)
\]
turn out to be the noisy versions $\Xo_{1/\sqrt{2}}$, $\Yo_{1/\sqrt{2}}$ of the target observables $\Xo$, $\Yo$ (cf.~\eqref{eq:defOlam}).

When the two spin directions $\vec {a}$ and $\vec {b}$ are not orthogonal, the system loses the $180^\circ$ rotational symmetries around the $\vec i$- and $\vec j$-axes. According to Remark \ref{rem:covariant}, this makes the evaluation of $\Icomp(\Ao,\Bo)$ a more difficult task. The best we can do is to express $\Icomp(\Ao,\Bo)$ as the solution of a maximization/minimization (minimax) problem for an explicit function of two parameters. The analysis of the symmetries of two nonorthogonal spin-1/2 components, and the consequent proof of the next theorem are given in Appendix \ref{ex:D4}.

\begin{theorem}\label{teo:incoNONORTHSPIN}
Let $\Ao$ and $\Bo$ be the spin-1/2 components \eqref{ABspin} with the angle $\alpha\in[0,\pi/2]$. For all $\phi\in [0,2\pi)$, $\gamma\in [-1,1]$ and $x,y\in\{-1,+1\}$, define
\begin{gather}
\rho(\phi) = \frac{1}{2}\left(\id + \cos\phi\, \sigma_1 + \sin\phi\, \sigma_2\right) , \qquad \vec{c}(\gamma) = \frac{\vec{i} + \gamma \vec{j}}{\sqrt{2}} , \label{eq:rho(phi)c(gamma)} \\
\Mo_\gamma(x,y) = \frac{1}{4}\left[ \left(1 + \gamma xy\right)\id + \frac{1}{\sqrt{2}} \left(x\sigma_1 + y\sigma_2\right) + \frac{\gamma}{\sqrt{2}} \left(y\sigma_1 + x\sigma_2\right)\right] . \label{eq:optispinobs}
\end{gather}
Then, $\Mo_\gamma\in\Mscr(\Xscr\times\Yscr)$, and we have
\begin{gather}\label{eq:optispin2}
\Icomp(\Ao,\Bo)  = \min_{\gamma\in [-1,1]} \max_{\phi\in [0,2\pi)} \sddiv{\Ao,\Bo}{\Mo_\gamma}(\rho(\phi)),
\\ \label{eq:dopo_optispin2}
\sddiv{\Ao,\Bo}{\Mo_\gamma}(\rho) = \Srel{\Ao^{\rho}}{\Ao_{\vec{c}(\gamma)}^{\rho}} + \Srel{\Bo^{\rho}}{\Bo_{\vec{c}(\gamma)}^{\rho}}.
\end{gather}
Moreover, $\gamma$ solves the minimization problem \eqref{eq:optispin2} if and only if $\Mo_\gamma\in\Mcomp(\Ao,\Bo)$.
\end{theorem}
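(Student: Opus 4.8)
The plan is to exploit the residual symmetry of a nonorthogonal spin pair to force an optimal approximate joint measurement into covariant form, to identify the covariant observables with the family $\Mo_\gamma$, to reduce the worst state to the equator, and finally to push the outer minimization onto the $\Mo_\gamma$. With $\vec a$ and $\vec b$ placed symmetrically about the bisector $\vec n=(\vec i+\vec j)/\sqrt2$ as in Figure~\ref{abfig}, the relevant group is the Klein four-group $G=\{e,g_1,g_2,g_3\}$ of proper Bloch rotations: $g_1$ is the rotation by $\pi$ about $\vec n$, implementing $\sigma_1\leftrightarrow\sigma_2$, $\sigma_3\to-\sigma_3$ and exchanging $\vec a\leftrightarrow\vec b$; $g_3$ is the rotation by $\pi$ about the $\sigma_3$-axis, implementing $\sigma_1\to-\sigma_1$, $\sigma_2\to-\sigma_2$; and $g_2=g_1g_3$. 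One checks that $g_1$ obeys hypothesis~(\ref{8ii}) of Theorem~\ref{prop:invarD^} (a swap of $\Ao$ and $\Bo$ with identity outcome maps, so $g_1(x,y)=(y,x)$), while $g_3$ obeys hypothesis~(\ref{8i}) (with $x\mapsto-x$, $y\mapsto-y$). Hence $\Div{\Ao,\Bo}{\cdot}$ is $G$-invariant, and by Corollary~\ref{prop:invarI^} the infimum defining $\Icomp(\Ao,\Bo)$ is attained on the $G$-covariant bi-observables.

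Next I would solve the covariance equations $U(g)\Mo(g^{-1}(x,y))U(g)^*=\Mo(x,y)$. Writing $\Mo(x,y)=\tfrac14\big(t(x,y)\id+b_1(x,y)\sigma_1+b_2(x,y)\sigma_2+b_3(x,y)\sigma_3\big)$, the constraints from $g_1$ and $g_3$ force $b_3\equiv0$, $t(x,y)=1+\gamma_t\,xy$ for one scalar $\gamma_t$, and $b_2(x,y)=b_1(y,x)$ with $b_1(-x,-y)=-b_1(x,y)$; summing over one outcome then gives $\Mo_{[1]}=\Ao_{\vec c'}$ and $\Mo_{[2]}=\Bo_{\vec c'}$ for $\vec c'=\tfrac{p+q}2\vec i+\tfrac{p-q}2\vec j$, where $p=b_1(+,+)$ and $q=b_1(+,-)$. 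Since the entropic divergence depends only on the marginals, it is a function of $\vec c'$ alone. Imposing positivity, each value $\Mo(x,y)$ is rank-one exactly when $\gamma_t$ is tied to $(p,q)$ so that $c_1'=1/\sqrt2$ and $c_2'=\gamma/\sqrt2$; these extremal covariant bi-observables are precisely the $\Mo_\gamma$, and a direct check of $\sum_{x,y}\Mo_\gamma(x,y)=\id$ together with $\Mo_\gamma(x,y)\geq0$ for $\gamma\in[-1,1]$ gives $\Mo_\gamma\in\Mscr(\Xscr\times\Yscr)$. Computing the marginals of $\Mo_\gamma$ directly yields $\Mo_{\gamma\,[1]}=\Ao_{\vec c(\gamma)}$ and $\Mo_{\gamma\,[2]}=\Bo_{\vec c(\gamma)}$, which is exactly~\eqref{eq:dopo_optispin2}.

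I would then reduce the supremum over states to the equator. Because $\Ao$, $\Bo$ and all $\Ao_{\vec c'}$, $\Bo_{\vec c'}$ have Bloch vectors in the $\sigma_1\sigma_2$-plane, the probabilities $\Ao^\rho$, $\Bo^\rho$, $\Mo_{[i]}^\rho$ of a pure state $\rho=\tfrac12(\id+\vec r\cdot\vec\sigma)$ depend only on $(r_1,r_2)$. Thus $\sddiv{\Ao,\Bo}{\Mo}$ factors through the affine map $(r_1,r_2)\mapsto\tfrac12(\id+r_1\sigma_1+r_2\sigma_2)$, and, being convex by Theorem~\ref{prop:2A}, item~(\ref{slsc}), it attains its maximum on the circle $r_1^2+r_2^2=1$, i.e.\ at some $\rho(\phi)$. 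Combined with Theorem~\ref{prop:2A}, item~(\ref{worstrho}), this gives $\Div{\Ao,\Bo}{\Mo}=\max_\phi\sddiv{\Ao,\Bo}{\Mo}(\rho(\phi))$, so the divergence of each $\Mo_\gamma$ is precisely the inner maximum appearing in~\eqref{eq:optispin2}.

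The last and hardest step is to restrict the outer minimization to the $\Mo_\gamma$. Writing $D(\vec c')=\max_\phi\big(\Srel{\Ao^{\rho(\phi)}}{\Ao_{\vec c'}^{\rho(\phi)}}+\Srel{\Bo^{\rho(\phi)}}{\Bo_{\vec c'}^{\rho(\phi)}}\big)$ for the common divergence of any covariant bi-observable with marginals $\Ao_{\vec c'},\Bo_{\vec c'}$, the admissible $\vec c'$ are those for which $\Ao_{\vec c'},\Bo_{\vec c'}$ are jointly measurable, namely the square $\max(|c_1'|,|c_2'|)\leq1/\sqrt2$; the swap symmetry gives $D(c_1',c_2')=D(c_2',c_1')$, and the segment $c_1'=1/\sqrt2$ is swept out by $\vec c(\gamma)$. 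I expect the main obstacle to be a radial monotonicity lemma: $D(t\vec c')$ is nonincreasing in $t\geq1$ as long as $t\vec c'$ stays in the square. Granting it, any admissible $\vec c'$ (taken with $|c_1'|\geq|c_2'|$ after the swap) may be scaled out to $c_1'=1/\sqrt2$ without increasing $D$, whence $\Icomp(\Ao,\Bo)=\min_\gamma\max_\phi\sddiv{\Ao,\Bo}{\Mo_\gamma}(\rho(\phi))$, which is~\eqref{eq:optispin2}. The difficulty is that monotonicity fails pointwise in $\rho$ (sharpening overshoots at some states), so it has to be proved at the worst-case state, using that $w\mapsto s(\tfrac{1+m}2,\tfrac{1+w}2)+s(\tfrac{1-m}2,\tfrac{1-w}2)$ has derivative $(w-m)\log\rme/(1-w^2)$ together with the first-order stationarity in $\phi$ at the maximizer. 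Finally, the equivalence ``$\gamma$ solves~\eqref{eq:optispin2} iff $\Mo_\gamma\in\Mcomp(\Ao,\Bo)$'' is immediate: by the equator reduction $\Div{\Ao,\Bo}{\Mo_\gamma}=\max_\phi\sddiv{\Ao,\Bo}{\Mo_\gamma}(\rho(\phi))$, so $\Mo_\gamma\in\Mcomp(\Ao,\Bo)$ means exactly $\Div{\Ao,\Bo}{\Mo_\gamma}=\Icomp(\Ao,\Bo)$, i.e.\ that $\gamma$ attains the outer minimum in~\eqref{eq:optispin2}.
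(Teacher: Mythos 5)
Your overall architecture — reduction to covariant bi-observables via Theorem \ref{prop:invarD^} and Corollary \ref{prop:invarI^} (your Klein four-group is the same group $D_2$ as the paper's, just with generators $R_{\vec n}(\pi)$ and $R_{\vec k}(\pi)$ instead of $R_{\vec n}(\pi)$ and $R_{\vec m}(\pi)$), the parametrization of covariant bi-observables with marginals $\Ao_{\vec c'},\Bo_{\vec c'}$, the compatibility square \eqref{squareQ}, and the equator reduction for the worst state — matches the paper's proof. The genuine gap is in the decisive last step, where you restrict the minimization of $D(\vec c')$ from the square to the edge $c_1'=1/\sqrt2$. Your ``radial monotonicity lemma'' ($D(t\vec c')$ nonincreasing in $t\geq1$ along rays from the origin) is not merely unproven: it is false. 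Take the direction $\vec d=(\vec i-\vec j)/\sqrt2$, which lies in your cone $|c_1'|\geq|c_2'|$, $c_1'>0$. For $0<t<1$ the marginals $\Ao_{t\vec d},\Bo_{t\vec d}$ are full-rank, so $D(t\vec d)<+\infty$ by Theorem \ref{prop:2A}, item (iii); but at $t=1$ the point $\vec d$ is a vertex of $Q$, where the marginals are the sharp spin components along $\pm\vec m$-type directions, different from $\vec a$ and $\vec b$ (which lie in the first quadrant), so the kernel inclusions fail and $D(\vec d)=+\infty$. Hence $D$ strictly increases along this ray, for every $\alpha\in[0,\pi/2]$. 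In addition, your swap symmetry $D(c_1',c_2')=D(c_2',c_1')$ only reflects about the diagonal $c_1'=c_2'$, so points with $c_1'<0$ cannot be brought into the right cone; radially scaling them terminates on the left edge $c_1'=-1/\sqrt2$, which is not swept by $\vec c(\gamma)$ and where $D$ is large (there is no $c_1'\to-c_1'$ symmetry, since it would relabel the approximating observable but not the target).

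The paper's proof fixes the direction of the monotonicity argument: it moves along segments toward $\vec a$, not along rays from the origin. Since $\vec c\mapsto\Ao^\rho_{\vec c},\Bo^\rho_{\vec c}$ are affine and $S$ is jointly convex, $D$ is convex on the disk $C$, and $D(\vec a)=0$; Lemma \ref{lem:nuovo} then shows $D$ is finite and strictly decreasing along the segment from any $\vec c\in Q\setminus V$ to $\vec a$. Because $a_1=\sqrt{(1+\sin\alpha)/2}\geq1/\sqrt2$ and $0\leq a_2\leq1/\sqrt2$ by \eqref{a+b}, every such segment exits $Q$ through the right edge, which gives exactly the reduction you need; the monotone decrease holds toward the zero of $D$, and cannot be replaced by decrease away from the origin. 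You should substitute this argument for your lemma (your envelope-type derivative computation at the maximizing state cannot rescue a statement that is false). A minor further point: your claim that positivity forces each $\Mo(x,y)$ to be rank one ``exactly when'' $c_1'=1/\sqrt2$, $c_2'=\gamma/\sqrt2$ is incorrect — the rank-one covariant bi-observables are those with $\gamma_t=2c_1'c_2'$ and $|c_1'|=1/\sqrt2$ \emph{or} $|c_2'|=1/\sqrt2$, i.e.\ any edge of $Q$ — but this is inessential, since extremality plays no role in locating the minimizers; what identifies the right edge is the position of $\vec a$, not the rank of the effects.
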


In Section \ref{sez:nuova}, we provide a numerical evaluation of the entropic incompatibility degree \eqref{eq:optispin2} for some angles $\alpha\in [0,\pi/2]$. Moreover, using the family of approximate joint measurements in \eqref{eq:optispinobs}, we analytically find a lower bound for $\Icomp(\Ao,\Bo)$.
Note that, for $\alpha\in (0,\pi/2)$, it is not clear if there is a unique $\gamma$ solving \eqref{eq:optispin2}, and if the set $\Mcomp(\Ao,\Bo)$ is only made up of the corresponding bi-observables $\Mo_\gamma$ (see also Remark \ref{rem:covgamma} in Appendix \ref{ex:D4}).

The noisy spin-1/2 components $\Ao_{\vec{c}(\gamma)}$ and $\Bo_{\vec{c}(\gamma)}$ appearing in \eqref{eq:dopo_optispin2} are the two marginals of the bi-observable $\Mo_\gamma$ in \eqref{eq:optispinobs}. When $\Mo_\gamma$ is optimal, we stress that for $\alpha \neq \pi/2$ they may not be noisy versions of the target observables $\Ao$ and $\Bo$. Indeed, in Table \ref{tab} below and the subsequent discussion, we numerically show this for the case $\alpha = \pi/4$.

It is worth noticing that every bi-observable \eqref{orthcov} or \eqref{eq:optispinobs} can be rewritten as a \emph{mixture} (convex combination) of two  sharp joint measurements of compatible spin components,
along the bisector $\vec n$ in the case of the first bi-observable, and along
the bisector $\vec m$ for the other one. More precisely, we introduce the sharp
bi-observables
\begin{equation}\label{eq:M+-}
\begin{aligned}
\Mo_+ (x,y) & = \left[\frac{1}{2}(\id+x\vec {n}\cdot\vec {\sigma})\right]
\left[\frac{1}{2}(\id+y\vec {n}\cdot\vec {\sigma})\right] \equiv \Ao_{\vec n}(x) \Bo_{\vec n}(y), \\
\Mo_- (x,y) & = \left[\frac{1}{2}(\id-x\vec {m}\cdot\vec {\sigma})\right]
\left[\frac{1}{2}(\id+y\vec {m}\cdot\vec {\sigma})\right] \equiv \Ao_{-\vec m}(x) \Bo_{-\vec m}(y) .
\end{aligned}
\end{equation}
Then, we have
\begin{equation}\label{gamma,pm}
\Mo_0 = \frac{1}{2}(\Mo_+ + \Mo_-) , \qquad \Mo_\gamma = \frac{1+\gamma}{2}\, \Mo_+ + \frac{1-\gamma}{2}\, \Mo_-.
\end{equation}
In terms of $\Mo_0$, the bi-observable $\Mo_\gamma$ can be expressed also as the mixture
\[
\Mo_\gamma = \begin{cases}
\gamma\Mo_++ (1-\gamma)\Mo_0 & \quad \text{if $\gamma \geq 0$}, \\
\abs{\gamma}\Mo_- +(1-\abs{\gamma})\Mo_0 & \quad \text{if $\gamma \leq0$} .
\end{cases}
\]

\subsubsection{Numerical and analytical results for nonorthogonal components}\label{sez:nuova}
In the case of two arbitrarily oriented spin components, the
minimax problem \eqref{eq:optispin2},
giving $\Icomp$ and $\gamma$ for the optimal bi-observable $\Mo_\gamma$, is hard to be solved analytically. Nevertheless, the double optimization over the angle $\phi$ and the parameter $\gamma$ can be tackled numerically, and the resulting $\Icomp(\Ao,\Bo)$
for $100$ equally distant values $\alpha$ in the interval $[0,\pi/2]$ are
plotted in Figure \ref{mod}.
\begin{figure}[h]
\centering
\includegraphics*[scale=0.85]{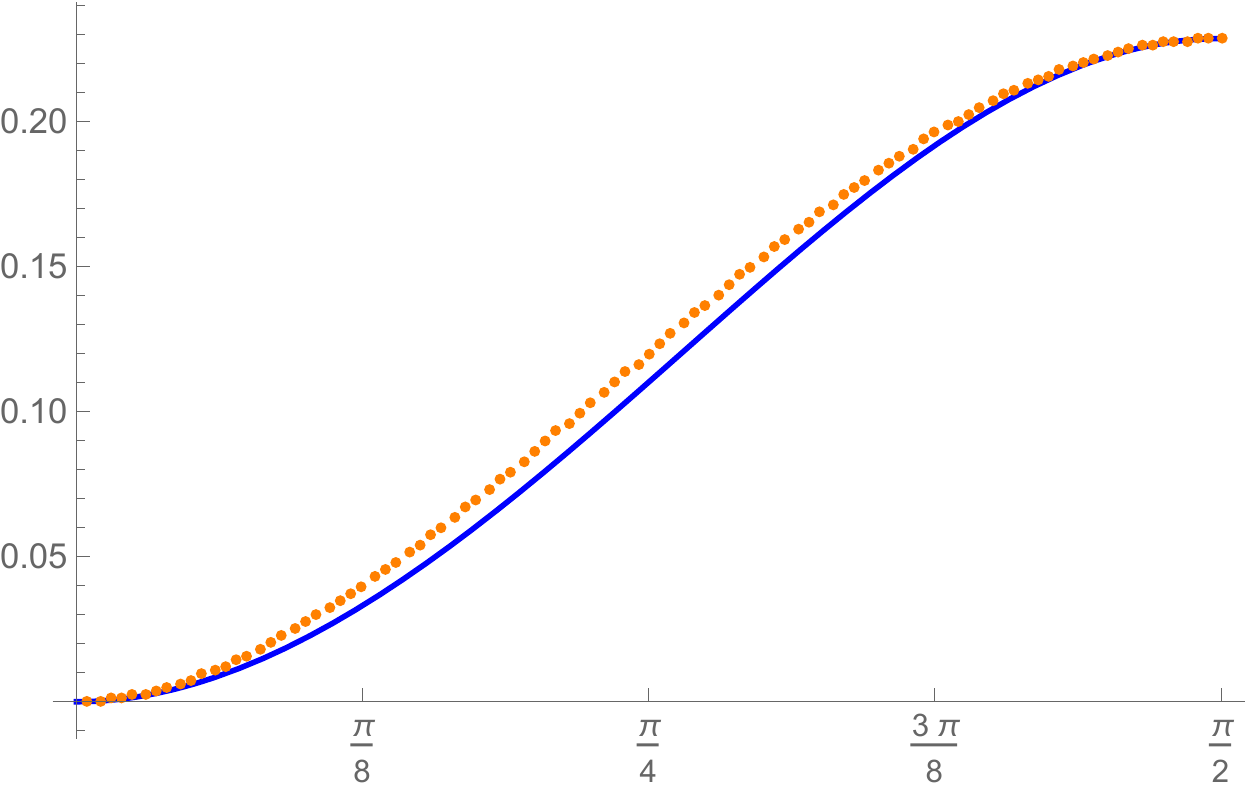}
\caption{Dots: numerical evaluations of $\Icomp(\Ao,\Bo)$ as a function of $\alpha$.
Continuous line:
the analytical lower bound $LB(\alpha)$ in \eqref{s:c2min}.}\label{mod}
\end{figure}

A good analytical lower bound for $\Icomp(\Ao,\Bo)$ can be found by fixing a trial state $\rho(\phi)$, considering the bi-observables $\Mo_\gamma$ of
\eqref{eq:optispinobs}, and then minimizing the error
function $\sddiv{\Ao,\Bo}{\Mo_{\gamma}}(\rho(\phi))$ with respect to $\gamma\in [-1,1]$. Indeed, equation \eqref{eq:optispin2} yields the inequality $\Icomp(\Ao,\Bo)\geq\min_{\gamma\in [-1,1]}\sddiv{\Ao,\Bo}{\Mo_{\gamma}}(\rho(\phi))$ for all $\phi\in [0,2\pi)$. A convenient choice for $\phi$, suggested by the results in the case of two orthogonal components, is to take $\phi\in\{\pi/4\pm\alpha/2,\,5\pi/4\pm\alpha/2\}$, so that the corresponding state $\rho(\phi)$ is any eigenprojection of $\vec a\cdot\vec\sigma$ or $\vec b\cdot \vec\sigma$; say we take the eigenprojection $\rho_e = \rho(\pi/4-\alpha/2)$ of $\vec a\cdot \vec\sigma$  with positive eigenvalue. Then, we get
\begin{equation}\label{lbspin}
\Icomp(\Ao,\Bo)\geq \min_{\gamma \in [-1,1]} \sddiv{\Ao,\Bo}{\Mo_{\gamma}}(\rho_e)=: LB(\alpha) .
\end{equation}
In Appendix \ref{sec:lb}, the explicit expression of $\sddiv{\Ao,\Bo}{\Mo_\gamma}(\rho_e)$ is given in
\eqref{s:c2}, its minimum over $\gamma$ is computed and, for $\alpha\neq \pi/2$, it is found at the
point
\begin{equation}\label{c2min}
\gamma =\frac{\sqrt{2}\ell - a_2}{a_1},
\end{equation}
where
\begin{equation}\label{ellmin}
\ell=\frac{1}{2\sqrt 2\,
a_2}\left(\sqrt{u^2+8(1+u)a_2^{\,2}}-u\right),
\end{equation}
\begin{equation}\label{def:u}
u=\left(a_1+\frac 1 {\sqrt 2}\right)\frac{a_1^{\,2} -a_2^{\,2}}{\sqrt 2}
=\left(1+\sqrt{1+\sin\alpha}\right)\frac{\sin\alpha}2\,.
\end{equation}
In particular, the value \eqref{c2min} for $\gamma$, together with the fact that the bi-observable $\Mo_\gamma$ has marginals $\Mo_{\gamma\,[1]} = \Ao_{\vec{c}(\gamma)}$ and $\Mo_{\gamma\,[2]} = \Bo_{\vec{c}(\gamma)}$, show that the marginals of the bi-observable giving the lower
bound \eqref{lbspin} are not  noisy versions of the target observables $\Ao$ and $\Bo$; indeed, $\vec{c}(\gamma)\not\propto\vec{a}$ in this case. Finally, the lower bound turns out to be
\begin{equation}\label{s:c2min}
LB(\alpha)=- \log {w} + \frac 12
\left(1+ \cos\alpha\right)\log \frac {1+ \cos\alpha} {1+ \ell}
+ \frac 12 \left(1-\cos\alpha\right)\log \frac {1- \cos\alpha} {1-\ell}\,,
\end{equation}
with
\begin{equation}
w=\frac 12+\frac{\sqrt{u^2+8(1+u)a_2^{\,2}}}{4\sqrt 2 \, a_1}+
\frac{\sin\alpha}{8}\left(\frac 3 {\sqrt 2 \,a_1}-1\right).
\end{equation}
The plot of $LB(\alpha)$ is the continuous line in Figure \ref{mod}.

For $\alpha=0$, the target observables are compatible and $\Icomp(\Ao,\Bo)=0$. For $\alpha\to 0$ the previous formulae give $u=0$, $\ell=1$, $\vec{c}(\gamma) = \vec{a} \equiv \vec{n}$, and one can check that also the lower bound
\eqref{s:c2min} vanishes, as it must be.

For two orthogonal components, i.e.\ $\alpha=\pi/2$, the expression
\eqref{s:c2min} gives the exact value \eqref{corth} of the entropic
incompatibility degree, and it is not only a lower bound. This value can be
computed by going to the limit $\alpha\to \pi/2$ in \eqref{s:c2min}, or directly by Remark \ref{rem:app} in Appendix \ref{sec:lb}.

For $\alpha\in (0,\pi/2)$,  Figure \ref{mod} shows that the analytical lower
bound \eqref{s:c2min} is not so far from the numerical value.

We now compare our optimal approximate joint measurements with other proposals coming from different approaches. Of course, every approximate joint
measurement $\Mo$ that is optimal with respect to some other criterium will
have a divergence from the target observables $(\Ao,\Bo)$ larger or equal than
$\Icomp(\Ao,\Bo)$. We stress that the other two  proposals we will consider
yield optimal bi-observables of the form $\Mo_\gamma$, in which however the
parameter $\gamma$ is different from ours.

We have seen that, when $\alpha=\pi/2$, the incompatibility degree of $\Ao$ and
$\Bo$, as well as their unique optimal approximate joint measurement $\Mo_0$,
can be evaluated analytically. In this special case, it turns out that $\Mo_0$
is optimal also with respect to the other criteria we are going to consider in
this section. However, as already said, this is not true for general $\alpha$. In order to show
it, we fix the angle $\alpha=\pi/4$, and compare the results of the different
criteria in Table \ref{tab}. We also add a $LB$ column summarizing the
parameters for the analytical lower bound \eqref{lbspin}. The rows provide: (1)
the parameter $\gamma$ characterizing the measurement $\Mo_\gamma$; (2) the angle characterizing the pure state $\rho(\phi)$ at which $\sddiv{\Ao,\Bo}{\Mo_\gamma}$ is computed, that is the trial angle $\pi/4-\alpha/2$ in the first column, and the angle maximizing $\sddiv{\Ao,\Bo}{\Mo_\gamma}(\rho(\phi))$ in the other ones; (3) the value of $\sddiv{\Ao,\Bo}{\Mo_\gamma}( \rho(\phi))$ for the parameters chosen in (1) and (2),
which gives $LB(\pi/4)$ in the first column and the entropic divergence
$\Div{\Ao,\Bo}{\Mo_\gamma}$ in the other ones.
\begin{table}[h]
\caption{Incompatibility degree and its bounds for $\alpha=\pi/4$.}\label{tab}
\centering
\begin{tabular}{|c||c|c|c|c|}
\hline
criterium& $LB$ & $\Icomp$ &
BLW& NV \\ \hline \hline
measurement: $\gamma\simeq$ & 0.795559 & 0.743999 & 0.541195 & 0.414213
\\ \hline
state: $\phi\simeq$ &  0.392699  &  0.282743 &0.391128 & 0.416889
\\ \hline
 value: $\sddiv{\Ao,\Bo}{\Mo_\gamma}( \rho(\phi))\simeq$& 0.110081 & 0.120035 &
0.160886 & 0.212079
\\ \hline
\end{tabular}
\end{table}
The description of the columns is as follows.

$LB$: The choice of the parameters is the one described in the computation of
the analytical lower bound for $\Icomp$. The parameter $\gamma$ comes from
\eqref{c2min}, the angle $\phi=\pi/8$ corresponds to the trial state  $\rho_e =  \rho(\pi/8)$
(i.e.\ the eigenprojection of $\vec a \cdot \vec \sigma$ for
$\alpha=\pi/4$), and the corresponding value of
$\sddiv{\Ao,\Bo}{\Mo_\gamma}( \rho_e)$ is the lower bound $LB(\pi/4)$.

$\Icomp$: The parameters are chosen following the relative entropy approach to
MURs.  They are the numerical solution of the minimax problem \eqref{eq:optispin2}. Thus,  the value of
$\sddiv{\Ao,\Bo}{\Mo_\gamma}( \rho(\phi))$ is the one found numerically for
$\Icomp(\Ao,\Bo)$, i.e.\ the dot at $\alpha = \pi/4$ in Figure \ref{mod};
$\gamma$ is the corresponding minimum point giving the optimal approximate joint
measurement $\Mo_\gamma$ of $\Ao$ and $\Bo$; the angle $\phi$ corresponds to
the state at which the error function $\sddiv{\Ao,\Bo}{\Mo_\gamma}$ attains its
maximum.

BLW: As discussed  in Section \ref{sec:MUR}, in
\cite{Wer04,BLW14a,BLW14b} a different approach is proposed.
In particular, its application to the case of two spin-1/2
components is given in \cite{BusLW14} (see also \cite{BusH08}, where the same final results are obtained in a slightly different context). There, the authors find a strictly positive
lower bound for the sum $d(\Ao,\Mo_{[1]})^2+d(\Bo,\Mo_{[2]})^2$, which holds for all
approximate joint measurements $\Mo$. Moreover, they find a
couple of compatible observables $(\Ao_{\vec{c}},\Bo_{\vec{c}})$ saturating the
lower bound, and thus optimally approximating the target observables $(\Ao,\Bo)$; this couple is given by a vector $\vec{c}$ yielding compatible $\Ao_{\vec{c}}$ and $\Bo_{\vec{c}}$, and lying as close as possible to the target direction
$\vec{a}$. Referring to Figure \ref{c+q} in Appendix \ref{ex:D4}, this amounts
to requiring that $\vec{c}$ is the orthogonal projection of $\vec{a}$ on the
right edge of the square $Q$ in the $\vec{i}\vec{j}$-plane; such a square is the
region of the plane where the approximating observables $\Ao_{\vec{c}}$ and
$\Bo_{\vec{c}}$ are compatible (see Proposition \ref{prop:cov}, item
(\ref{itemQ}), in Appendix \ref{ex:D4}). Using the parametrization
$\vec{c}(\gamma)$ given in \eqref{eq:rho(phi)c(gamma)} for the right edge of $Q$, we
see that this approach fixes $\gamma = \sqrt{2} a_2$.
The entropic divergence of the corresponding bi-observable $\Mo_\gamma$ from
$(\Ao,\Bo)$ and the angle of the state $\rho(\phi)$ at which it
is attained are the content of the BLW column.

NV: At the end of Section \ref{sec:noise}, we briefly discussed the proposal of
\cite{HeiMZ16,HSTZ14} to use approximating joint measurements whose marginals are
noisy versions (NV) of the two target observables. In this approach, one approximates the target observables by means of a compatible couple $(\Ao_{\vec{c}},\Bo_{\vec{c}})$ with $\vec{c}\parallel\vec{a}$. Still making reference to Figure \ref{c+q} in the appendix, the best choice is then picking $\vec{c}$ as large as possible; in this way, $\vec{c}$ lies where the right edge of the compatibility square $Q$ intersects the line joining $\vec{a}$ and the origin. With our parametrization $\vec{c}(\gamma)$ of the edge, this implies $\gamma = a_2/a_1$. The results for
this choice (together with the corresponding maximizing state) are reported in the last column.

\subsection{Two conjugate observables in prime power dimension}\label{sec:cov.MUBs}
We now consider two complementary observables in prime power dimension,
realized by a couple of MUBs that are conjugated by the Fourier transform of a
finite field. In general, the construction of a maximal set of MUBs in a prime power
dimensional Hilbert space by using finite fields is well known since Wootters
and Fields' seminal paper \cite{WoFi89}; see also \cite[Sect.\ 2]{DuEnBeZy10}
for a review, and \cite{BaBoRoVa02,Ap09,CaScTo15} for a group theoretic
perspective on the topic.

Let $\F$ be a finite field with characteristic $p$. We
refer to \cite[Sect.\ V.5]{LanAlg} for the basic notions on finite fields. Here
we only recall that $p$ is a prime number, and $\F$ has cardinality $|\F|=p^n$
for some positive integer $n$. We need also the field trace \ $\tr: \F\to\Zbb_p$ defined by  $\tr{x} = \sum_{k=0}^{n-1} x^{p^k}$ (see \cite[Sect.\ VI.5]{LanAlg} for its definition and properties).

We consider the Hilbert
space $\Hscr = \ell^2(\F)$, with dimension $d=p^n$, and we let our target observables be the two sharp rank-one observables $\Qo$ and $\Po$ with outcome spaces $\Xscr = \Yscr = \F$, given by
\begin{equation}\label{eq:defcoj}
\Qo(x) = \kb{\delta_x}{\delta_x} , \qquad \Po(y) = \kb{\omega_y}{\omega_y} , \qquad \forall x,y\in\F .
\end{equation}
In this formula, $\delta_x$ is the delta function at $x$, and
\begin{equation}\label{eq:Fourier}
\omega_y(z) = \frac{1}{\sqrt{d}}\,\rme^{\frac{2\pi \rmi}{p}\,\tr yz} \equiv (F^*\delta_y)(z) \quad \text{with} \quad
F\phi(z) = \frac{1}{\sqrt{d}} \sum_{t\in\F} \rme^{-\frac{2\pi \rmi}{p}\,\tr zt} \phi(t).
\end{equation}
Since $|\ip{\delta_x}{\omega_y}| = 1/\sqrt{d}$ for all $x$ and $y$, the two orthonormal bases $\{\delta_x\}_{x\in\F}$ and $\{\omega_y\}_{y\in\F}$ satisfy the MUB condition. In particular, as a consequence of the bound in \cite{MaaU88},
their preparation uncertainty coefficient \eqref{cprep} is
\begin{equation}\label{eq:MaaUff}
\Iprep(\Qo,\Po) = \log d .
\end{equation}

In \eqref{eq:Fourier}, the operator $F:\Hscr\to\Hscr$ is the unitary \emph{discrete Fourier transform} over the field $\F$. The observables $\Qo$ and $\Po$ are  then an example of \emph{Fourier conjugate MUBs}, as $\Po(y) = F^*\Qo(y) F$ for all $y\in\F$.

The definitions \eqref{eq:defcoj} and \eqref{eq:Fourier} should be
compared with the analogous ones for MUBs that are conjugated by means of the
Fourier transform over the cyclic ring $\Zbb_d$, see e.g.~\cite{CHT12}. In the
latter case, the Hilbert space is $\Hscr = \ell^2(\Zbb_d)$, and
the operator $F$ in \eqref{eq:Fourier} is replaced by
\begin{equation}\label{eq:Fourier_brutta}
\mathcal{F}\phi(z) = \frac{1}{\sqrt{d}} \sum_{t\in\Zbb} \rme^{-\frac{2\pi \rmi}{d}\,zt} \phi(t)
\end{equation}
(cf.~\cite[Eq.\ (4)]{CHT12}; note that no field trace appears in this formula).
The two definitions are clearly the same if $\F$ coincides with the cyclic
field $\Zbb_p$ (i.e.\ $n=1$ and so $d=p$), but they are essentially different in
general. Indeed, as observed in \cite[Sect.\ 5.3]{DuEnBeZy10}, they are
inequivalent already for $d=2^2$.

The following theorem is the main result of this section.
\begin{theorem}\label{teo:incoMUB}
For the two sharp observables $\Qo$ and $\Po$ defined in \eqref{eq:defcoj}, we have
\begin{equation}\label{eq:I(Q,P)}
\log\frac{2\sqrt{d}}{\sqrt{d}+1} \leq \Icomp(\Qo,\Po)  = \max_{\substack{\rho\in\Sscr(\Hscr)\\
    \rho\ \mathrm{pure}}} \left[\Srel{\Qo^\rho}{\Qo^\rho_{\lam_0}} + \Srel{\Po^\rho}{\Po^\rho_{\lam_0}}\right]
\leq 2\log\frac{2(d+1)}{d+3}\,,
\end{equation}
where $\Qo_{\lam_0} = \lam_0\Qo + (1-\lam_0)\Uo_\F$ and $\Po_{\lam_0} = \lam_0\Po + (1-\lam_0)\Uo_\F$ are the uniformly noisy versions of the observables $\Qo$ and $\Po$ with noise intensity
\begin{equation}\label{lambm}
1-\lam_0 = \frac{\sqrt{d}}{2(\sqrt{d}+1)} \,.
\end{equation}
An optimal approximate joint measurement $\Mo\in\Mcomp(\Ao,\Bo)$ is given by
\begin{equation}\label{eq:hatI(Q,P)2}
\Mo_0(x,y) = \frac{1}{2(d+\sqrt{d})} \kb{\psi_{x,y}}{\psi_{x,y}} \quad \text{with} \quad \psi_{x,y} = \delta_x + \rme^{-\frac{2\pi\rmi}{p}\,\tr xy} F \delta_{-y}.
\end{equation}
If $p\neq 2$, then $\Mo_0$ is the unique optimal approximate joint measurement of $\Qo$ and $\Po$, i.e.\  $\Mcomp(\Qo,\Po) = \{\Mo_0\}$.
\end{theorem}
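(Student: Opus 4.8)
The plan is to exploit the rich symmetry of the Fourier-conjugate pair $(\Qo,\Po)$ to collapse the double optimization defining $\Icomp(\Qo,\Po)$ to a one-parameter problem. First I would introduce the group $G$ generated by the Weyl operators $W_a\phi(z)=\phi(z-a)$ and $V_b\phi(z)=\rme^{\frac{2\pi\rmi}{p}\tr bz}\phi(z)$ $(a,b\in\F)$, the multiplications $M_s\phi(z)=\phi(s^{-1}z)$ $(s\in\F^*)$, and the Fourier transform $F$ of \eqref{eq:Fourier}. A direct check gives $W_a\Qo(x)W_a^*=\Qo(x+a)$, $V_b\Po(y)V_b^*=\Po(y+b)$, $W_a\Po(y)W_a^*=\Po(y)$, $V_b\Qo(x)V_b^*=\Qo(x)$, $M_s\Qo(x)M_s^*=\Qo(sx)$, $M_s\Po(y)M_s^*=\Po(s^{-1}y)$, together with $F\Qo(x)F^*=\Po(-x)$ and $F\Po(y)F^*=\Qo(y)$. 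Hence $W_a,V_b,M_s$ satisfy hypothesis (\ref{8i}) of Theorem \ref{prop:invarD^}, while $F$ satisfies hypothesis (\ref{8ii}) (it realizes the permutation $(x,y)\mapsto(y,-x)$ that exchanges the two marginals). By Theorem \ref{prop:invarD^} the divergence $\Div{\Qo,\Po}{\cdot}$ is $G$-invariant, so Corollary \ref{prop:invarI^} produces a $G$-covariant optimal bi-observable, and I may minimize over $G$-covariant $\Mo$ only.

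Next I would determine the covariant bi-observables. Covariance under the Weyl subgroup forces the seed form $\Mo(x,y)=\frac1d\,W_xV_y\,K\,V_y^*W_x^*$ with $K\geq0$ and $\tr K=1$; covariance under $M_s$ and $F$ then translates, using $FW_aF^*=V_{-a}$, $FV_bF^*=W_b$ and $FM_sF^*=M_{s^{-1}}$, into $M_sKM_s^*=K$ for all $s$ and $FKF^*=K$. Pinching in the modulation variable yields $\Mo_{[1]}(x)=W_x\,\mathrm{diag}(K)\,W_x^*$, so $M_s$-invariance makes the diagonal of $K$ constant off the origin and collapses $\Mo_{[1]}$ to a uniformly noisy version $\Qo_\lam$; the relation $F\Mo_{[2]}(y)F^*=\Mo_{[1]}(y)$ forces $\Mo_{[2]}=\Po_\lam$ with the \emph{same} $\lam$. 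Since $\Div{\Qo,\Po}{\cdot}$ depends only on the marginals, the problem reduces to minimizing $f(\lam):=\max_{\rho}\big[\Srel{\Qo^\rho}{\Qo_\lam^\rho}+\Srel{\Po^\rho}{\Po_\lam^\rho}\big]$ over the admissible $\lam$.

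I would then solve this one-dimensional problem. Being the marginals of a genuine bi-observable, $\Qo_\lam$ and $\Po_\lam$ must be compatible, and the maximal noise parameter for which two Fourier-conjugate MUBs remain compatible is exactly the $\lam_0$ of \eqref{lambm} (this is the complete solution of \cite{HSTZ14}, and can also be read off positivity of the seed $K$, which forbids $K$ from being simultaneously too peaked in the two mutually unbiased bases). Moreover $f$ is strictly decreasing: for fixed $\rho$ each summand $\Srel{p}{\lam p+(1-\lam)u}$ has derivative $-\tfrac{\log\rme}{1-\lam}\big(\sum_x p(x)^2/r(x)-1\big)\le0$ with $r=\lam p+(1-\lam)u$, negative by Jensen. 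Hence the minimum is attained at the boundary $\lam=\lam_0$ and is realized by the rank-one-seed POVM $\Mo_0$ of \eqref{eq:hatI(Q,P)2}, whose fiducial $\psi_{0,0}\propto\delta_0+\omega_0$ is fixed by every $M_s$ and by $F$, and whose marginals I would verify to equal $\Qo_{\lam_0},\Po_{\lam_0}$. This gives the central equality $\Icomp(\Qo,\Po)=f(\lam_0)$. The lower bound follows by evaluating the maximand at a $\Qo$-eigenstate $\rho_e$, where the $\Po$-term vanishes and the $\Qo$-term equals $\log\frac{1}{\lam_0+(1-\lam_0)/d}=\log\frac{2\sqrt d}{\sqrt d+1}$; the upper bound is the instance $k=3$ of Theorem \ref{prop:3}, valid because $\Qo,\Po$ are sharp and rank-one with $|\Xscr|=|\Yscr|=d$.

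The delicate part is uniqueness for $p\neq2$. I would argue in two stages: first that any optimizer already has marginals exactly $(\Qo_{\lam_0},\Po_{\lam_0})$ — using strict monotonicity of $f$ together with the fact that $G$-averaging cannot raise the noise level, so every optimizer sits at $\lam=\lam_0$ — and second that at the compatibility threshold the joint measurement of $\Qo_{\lam_0}$ and $\Po_{\lam_0}$ is unique. This second step is the main obstacle: it amounts to showing that the admissible seed $K$ at $\lam=\lam_0$ is extremal and equals $\kb{\psi_{0,0}}{\psi_{0,0}}$, which I expect to establish through a dual certificate for the seed semidefinite program, whose optimal face reduces to a single point precisely when $-1\neq1$ in $\F$. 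For $p=2$ the coincidence $\delta_{-y}=\delta_y$ produces an extra degeneracy enlarging that face, so uniqueness genuinely fails; these characteristic-dependent computations are naturally deferred to Appendix \ref{app:MUB}.
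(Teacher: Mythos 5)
Your first three paragraphs follow essentially the paper's own route: the same symmetry group (your generators $W_a$, $V_b$, $M_s$, $F$ generate the paper's $G=H\rtimes V$), the same reduction via Theorem \ref{prop:invarD^} and Corollary \ref{prop:invarI^} to $G$-covariant bi-observables, the same structural analysis (covariant $\Mo$ is a phase-space observable whose seed is invariant under dilations and Fourier, hence has marginals $\Qo_\lam$, $\Po_\lam$ with one common $\lam$), and the same endgame: monotonicity in $\lam$, the compatibility threshold $\lam_0$ imported from the literature (the paper cites the extension of \cite[Prop.~5]{CHT12} to $\F$ rather than \cite{HSTZ14}, which is immaterial), evaluation at a $\Qo$-eigenstate for the lower bound, and Theorem \ref{prop:3} with $k=3$ for the upper bound. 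Your Jensen-type derivative computation is a correct substitute for the paper's convexity argument (Lemma \ref{lem:nuovo}); just note that the seed's positivity only forces $\lam\geq -1/(d-1)$, not $\lam\geq 0$, so monotonicity is also needed on the negative part of the admissible interval, which your derivative formula does cover.

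The genuine gap is the uniqueness argument for $p\neq 2$. Your stage 1 --- ``any optimizer already has marginals exactly $(\Qo_{\lam_0},\Po_{\lam_0})$'' --- is precisely the hard point, and your justification (strict monotonicity of $f$ plus ``$G$-averaging cannot raise the noise level'') does not address it: a \emph{non-covariant} optimizer need not have marginals in the one-parameter family at all, so ``sitting at $\lam=\lam_0$'' is undefined for it, and averaging only constrains the marginals of $\Mo_G$, not those of $\Mo$. Moreover, your stage 2 (uniqueness of the joint measurement of $\Qo_{\lam_0}$ and $\Po_{\lam_0}$) is invoked by the paper with no restriction on $p$; so if stages 1 and 2 held as you state them, uniqueness of the optimizer would follow for \emph{all} $p$ --- contradicting the paper's explicit remark that the case $d=2^n$, $n\geq 2$, is open. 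Your placement of the $p$-dependence is also wrong: what breaks at $p=2$ is not the threshold uniqueness of the joint measurement, nor (as far as is known) the conclusion itself --- indeed for $d=2$ uniqueness \emph{holds} by Theorem \ref{teo:incoORTHSPIN} even though $p=2$ --- but the paper's proof technique.

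That technique is what your proposal is missing. For any optimizer $\Mo$, the covariant average satisfies $\Mo_G=\Mo_0$ (unique covariant optimizer), and \eqref{eq:covariantization} gives the operator domination $\Mo(x,y)\leq |G|\,\Mo_0(x,y)$; since $\Mo_0(x,y)$ is rank one, this forces $\Mo(x,y)=f(x,y)\Mo_0(x,y)$ for a nonnegative function $f$; the normalization $\sum_{x,y}\bigl(f(x,y)-1\bigr)\Mo_0(x,y)=0$ then yields $f\equiv 1$ precisely because the $d^2$ operators $\Mo_0(x,y)$ are linearly independent, i.e.\ $\Mo_0$ is informationally complete. This is where $p\neq 2$ enters: the Weyl transform $\hat\tau_*(\vec u)\propto \delta_0(u_1)+\delta_0(u_2)+d^{-1/2}\bigl(\rme^{-\frac{2\pi\rmi}{p}\tr{u_1u_2}}+1\bigr)$ is nonvanishing for all $\vec u$ if and only if $p$ is odd. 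Your ``dual certificate for the seed SDP'' idea concerns uniqueness of the \emph{covariant} seed at the threshold, which is a different (and, via the cited compatibility result, already settled) question; it does not bridge the passage from covariant optimizers to arbitrary ones.
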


As in the case of the two spin-1/2 components, the proof of this theorem relies on a detailed study of the symmetries of the pair of observables $(\Qo,\Po)$, and a subsequent application of Theorem \ref{prop:invarD^}. The symmetries and the proof of the theorem are given in Appendix \ref{app:MUB}. Here we  briefly comment on its statements and provide a simple example.

\begin{remark}\label{rem:uniqMUBS}
\begin{enumerate}
\item Since $\Qo$ and $\Po$ are sharp, the inequality \eqref{eq:I(Q,P)} also gives a bound for the index $\Iad(\Qo,\Po) = \Icomp(\Qo,\Po)$.
\item The two bounds in \eqref{eq:I(Q,P)} are not asymptotically optimal for $d\to\infty$, as the lower bound tends to $1$ while the upper bound goes to $2$.
\item The value in \eqref{lambm} is the minimal noise intensity making the two uniformly noisy observables $\Qo_{\lam_0}$ and $\Po_{\lam_0}$ compatible \cite[Prop.\ 5 and Ex.\ 1]{CHT12}.

\item In the terminology of \cite{BGL97,BLPY16}, the bi-observable in \eqref{eq:hatI(Q,P)2} is the {\em covariant phase-space observable} generated by the state $[\sqrt{d}/(2\sqrt{d}+2)] \kb{\psi_{0,0}}{\psi_{0,0}} = d\Mo_0(0,0)$ (see \eqref{Mtau} and the discussion below it for further details on covariant phase-space observables).
\item Our choice of using the field $\F$ instead of the ring $\Zbb_d$ in defining the Fourier operator in \eqref{eq:Fourier}, and the consequent restriction to only prime power dimensional systems, comes from the fact that the resulting MUBs \eqref{eq:defcoj} share dilational symmetries that are not present in the $\mathcal{F}$-conjugate ones. These extra symmetries drastically reduce the number of parameters to be optimized for finding an element of $\Mcomp(\Ao,\Bo)$ (see Remark \ref{rem:Fourier2}.\ref{rem:Fourier2.symmetry} for further details).
\item \label{it:uniqMUBS} The uniqueness property of the optimal approximate joint measurement $\Mo_0$ in odd prime power dimensions should be compared with the measurement uncertainty region for two qudit observables found in \cite[Sect.\ 5.3]{Wer16}. In particular, we remark that there is a whole family of covariant phase-space observables saturating the uncertainty bound of \cite[Eq.\ (38)]{Wer16}. Our optimal bi-observable $\Mo_0$ just corresponds to one of them, that is, the one generated by the state $\rho= d\Mo_0(0,0)$.
\item When $d=2^n$ with $n\geq 2$, it is not clear whether or not the set $\Mcomp(\Qo,\Po)$ is made up of a unique bi-observable. However, in the simplest case $d=2$ we have already shown that  $\Mcomp(\Qo,\Po) = \{\Mo_0\}$ (see Theorem \ref{teo:incoORTHSPIN}).
\end{enumerate}
\end{remark}

\begin{example}[Two orthogonal spin-1/2 components]
Let us consider as target observables the two sharp spin-1/2 components
$\Xo,\Yo\in\Mscr(\{+1,-1\})$ associated with the first two Pauli matrices,
defined in \eqref{def:XY}. This is the easiest example of two Fourier conjugate MUBs. To see this, take the cyclic field
$\F=\Zbb_2\equiv\{0,1\}$, corresponding to the choice $d=p=2$, $n=1$, $\tr
x=x$, and identify the observables $\Qo(x)=\Xo\left((-1)^x\right)$ and
$\Po(y)=\Yo\left((-1)^y\right)$ ($x,y=0,1$) by setting
$\sigma_1= |\delta_0\rangle \langle \delta_0|- |\delta_1\rangle \langle \delta_1|$,
and
$\sigma_2= |\delta_0\rangle \langle \delta_1|+ |\delta_1\rangle \langle \delta_1|$.
With this identification, the discrete Fourier transform becomes
$
F=\left(\sigma_1+\sigma_2\right)/\sqrt 2\equiv \rmi \exp\{-\rmi \pi
\vec  n \cdot \vec  \sigma/2\}$.
We have already found in \eqref{orthcov} the optimal joint observable of $\Xo$ and
$\Yo$, together with the value of the entropic incompatibility degree. These are precisely the bi-obser\-vable and the lower bound found in Theorem \ref{teo:incoMUB}.
\end{example}

\section{Entropic measurement uncertainty relations for $n$ observables}\label{sec:manyobs}

Uncertainty relations have been studied also in the case of more than two observables, see e.g.\ \cite{AAHB16,ColesBTW17,WehW10} for the case of entropic PURs. Both our entropic coefficients \eqref{def:inc} and \eqref{def:ed} (and the related MURs) can be generalized to the case of $n>2$ target observables.
However, in the case of
$\Iad(\Ao_1,\ldots,\Ao_n)$ an order of observation has to be fixed, and one
needs to point out the subset of the observables for which imprecise
measurements are allowed (the analogues of the observable $\Ao$ in the binary
case of $\Iad(\Ao,\Bo)$) from those observables that are kept fixed and get
disturbed by the other measurements (similar to $\Bo$ in $\Iad(\Ao,\Bo)$).
Thus, different definitions of $\Iad$ are possible in the $n$-observable case.
This leads us to generalize only the entropic incompatibility degree
$\Icomp(\Ao_1,\ldots,\Ao_n)$, whose definition is straightforward and gives a
lower bound for $\Iad$, independently of its possible definitions.

\subsection{Entropic incompatibility degree and MURs}

Let $\Ao_1,\ldots,\Ao_n$ be $n$ fixed observables with outcome sets
$\Xscr_1,\ldots,\Xscr_n$, respectively. As usual, we assume that all the sets
$\Xscr_i$ are finite. The observables with outcomes in the product set
$\Xscr_{1\cdots n}=\Xscr_1\times\cdots\times\Xscr_n$ are called
multi-observables, and we use the notation $\Mscr(\Xscr_{1\cdots n})$ for the
set of all such observables. If $\Mo\in\Mscr(\Xscr_{1\cdots n})$, its {\em
$i$-th marginal observable} is the element $\Mo_{[i]}\in\Mscr(\Xscr_i)$, with
\[
\Mo_{[i]}(x) = \sum_{x_j\in\Xscr_j : j\neq i} \Mo(x_1,\ldots,x_{i-1},x,x_{i+1},\ldots,x_n) .
\]
The notion of compatibility straightforwardly extends to the case of $n$
observables.

As in the $n=2$ case, we regard any $\Mo\in\Mscr(\Xscr_{1\cdots n})$ as an approximate joint measurement of $\Ao_1,\ldots$, $\Ao_n$. For all $\rho\in\Sscr(\Hscr)$, the total amount of information loss in the distribution approximations $\Ao_i^\rho\simeq\Mo_{[i]}^\rho$, $i=1,\ldots,n$, is the sum of the respective relative entropies. Then, we have the following generalization of Definitions \ref{def:erf}, \ref{def:D} and \ref{def:Cinc}.

\begin{definition}
For any multi-observable \ $\Mo\in\Mscr(\Xscr_{1\cdots n})$, \ the \emph{error function} of the approximation \\ $(\Ao_1,\ldots, \Ao_n)\simeq (\Mo_{[1]},\ldots,\Mo_{[n]})$ is the state-dependent quantity
\begin{equation}\label{statedependentdivergencen}
\sddiv{\Ao_1,\ldots,\Ao_n}{\Mo}(\rho)=\sum_{i=1}^n \Srel{\Ao^\rho_i}{\Mo^\rho_{[i]}}.
\end{equation}

The \emph{entropic divergence} of $\Mo\in\Mscr(\Xscr_{1\cdots n})$ from
$(\Ao_1,\ldots,\Ao_n)$ is
\begin{equation}
\Div{\Ao_1,\ldots,\Ao_n}{\Mo} = \sup_{\rho\in\Sscr(\Hscr)}
\sddiv{\Ao_1,\ldots,\Ao_n}{\Mo}(\rho).
\end{equation}

The \emph{entropic incompatibility degree} of the observables
$\Ao_1,\ldots,\Ao_n$ is
\begin{equation}\label{eq:multi_cinc}
\Icomp\big(\Ao_1,\ldots,\Ao_n\big)=\inf_{\Mo\in\Mscr(\Xscr_{1\cdots n})}
\Div{\Ao_1,\ldots,\Ao_n}{\Mo} .
\end{equation}
\end{definition}

We still denote by
$$
\Mcomp(\Ao_1,\ldots,\Ao_n) = \argmin_{\Mo\in\Mscr(\Xscr_{1\cdots n})}
\Div{\Ao_1,\ldots,\Ao_n}{\Mo}
$$
the set of the \emph{optimal approximate joint measurements} of
$\Ao_1,\ldots,\Ao_n$. As in the case with $n=2$, the optimality of a
multi-observable $\Mo$ depends only on its marginals $\Mo_{[i]}$, since the
entropic divergence itself depends only on such marginals.

We have the
following extension of Theorems \ref{prop:2A}, \ref{prop:2B} and \ref{prop:3}.
\begin{theorem}\label{prop:2n}
Let $\Ao_i\in\Mscr(\Xscr_i)$, $i=1,\ldots,n$, be the target observables. The
error function, entropic divergence and incompatibility degree satisfy the following
properties.
\begin{enumerate}[(i)]
\item \label{slscn} The function $\sddiv{\Ao_1,\ldots,\Ao_n}{\Mo}
    :\Sscr(\Hscr)\to[0,+\infty]$ is convex and LSC, $\forall \Mo\in
    \Mscr(\Xscr_{1\cdots n})$.

\item \label{Dlscn} The function $\Div{\Ao_1,\ldots,\Ao_n}{\cdot} :
    \Mscr(\Xscr_{1\cdots n})\to [0,+\infty]$ is convex and LSC.

\item \label{Dfiniten} For any $\Mo\in\Mscr(\Xscr_{1\cdots n})$, the following three statements are equivalent:
\begin{enumerate}[(a)]
\item $\Div{\Ao_1,\ldots,\Ao_n}{\Mo}<+\infty$,
\item
    $\operatorname{ker}\Mo_{[i]}(x)\subseteq\operatorname{ker}\Ao_i(x)\quad
    \forall x, \ \forall i$,
\item $\sddiv{\Ao_1,\ldots,\Ao_n}{\Mo}$ is bounded and continuous.
\end{enumerate}

\item \label{worstrhon}
    $\displaystyle\Div{\Ao_1,\ldots,\Ao_n}{\Mo}=\max_{\rho\in\Sscr(\Hscr), \
    \rho\ \mathrm{pure}} \sddiv{\Ao_1,\ldots,\Ao_n}{\Mo}(\rho)$,
    where the maximum can be any value in the extended interval $[0,+\infty]$.

\item \label{cinvn} The quantities $\sddiv{\Ao,\Bo}{\Mo}(\rho)$, $\Div{\Ao_1,\ldots,\Ao_n}{\Mo}$ and $\Icomp(\Ao_1,\ldots,\Ao_n)$ are invariant under an overall unitary conjugation of the state $\rho$ and the observables $\Ao_1,\ldots,\Ao_n$ and $\Mo$, and they do not depend on the labelling of the outcomes in $\Xscr_1,\ldots,\Xscr_n$.

\item \label{cinvn2} $\Icomp(\Ao_{\sigma(1)},\ldots,\Ao_{\sigma(n)}) =
    \Icomp(\Ao_1,\ldots,\Ao_n)$ for any permutation $\sigma$ of the index
    set $\{1,\ldots,n\}$.

\item \label{cboundn} The entropic incompatibility coefficient
    $\Icomp(\Ao_1,\ldots,\Ao_n)$ is always finite, and it satisfies the bounds
\begin{equation}\label{eq:trivboundCprepn}
\Icomp(\Ao_1,\ldots,\Ao_n) \leq \sum_{i=1}^n\log|\Xscr_i| - \inf_{\rho\in\Sscr(\Hscr)} \sum_{i=1}^n H(\Ao_i^\rho),
\end{equation}
\begin{multline}\label{eq:bound3n}
\Icomp(\Ao_1,\ldots,\Ao_n) \leq
\sum_{i=1}^n \log \frac{n(d+1)}{d+n+(n-1)\min_{x\in\Xscr_i} \Tr\{\Ao_i (x)\}} \\
{}\leq n\log \frac{n(d+1)}{d+n}\leq n\log n .
\end{multline}

\item \label{optMn} The set $\Mcomp(\Ao_1,\ldots,\Ao_n)$ is a nonempty
    convex compact subset of $\Mscr(\Xscr_{1\cdots n})$.

\item \label{jmeasn} $\Icomp(\Ao_1,\ldots,\Ao_n)=0$ if and only if the
    observables $\Ao_1,\ldots,\Ao_n$ are compatible, and in this case
    $\Mcomp(\Ao_1,\ldots,\Ao_n)$ is the set of all the joint measurements
    of $\Ao_1,\ldots,\Ao_n$.

\item \label{cnondecn} If $\Ao_{n+1}\in\Mscr(\Xscr_{n+1})$ is another
    observable, then we have $\Icomp(\Ao_1,\ldots,\Ao_{n+1}) \geq
    \Icomp(\Ao_1,\ldots,\Ao_n)$.
\end{enumerate}
\end{theorem}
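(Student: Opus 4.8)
The plan is to exploit the fact that the error function \eqref{statedependentdivergencen} is a finite sum of $n$ relative-entropy terms $\Srel{\Ao_i^\rho}{\Mo_{[i]}^\rho}$, so that almost every item reproduces the corresponding proof in the binary case with ``two'' replaced by ``$n$''. Concretely, items (\ref{slscn}), (\ref{Dlscn}), (\ref{Dfiniten}) and (\ref{worstrhon}) are proved exactly as items (\ref{slsc})--(\ref{worstrho}) of Theorem \ref{prop:2A}: convexity and lower semicontinuity of each summand follow from Proposition \ref{prop:propHSrel} (items (\ref{HSconvex}), (\ref{HScont})) together with the affinity and continuity of $\rho\mapsto\Xo^\rho$ and $\Mo\mapsto\Mo_{[i]}^\rho$, and the finite sum inherits both properties via \cite[Prop.~1.5.12]{Ped89}. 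For (\ref{Dfiniten}) I would run the same cyclic chain (a)$\Rightarrow$(b)$\Rightarrow$(c)$\Rightarrow$(a), now producing a witness $\rho$ for any index $i$ with $\ker\Mo_{[i]}(x)\nsubseteq\ker\Ao_i(x)$ and invoking Lemma \ref{lem:conts} termwise; item (\ref{worstrhon}) then follows from convexity plus compactness. The invariance statement (\ref{cinvn}) is the $n$-fold version of items (\ref{Sddivinv})--(\ref{Dinv}) of Theorem \ref{prop:2A} and (\ref{cinv}) of Theorem \ref{prop:2B}, while items (\ref{optMn}) and (\ref{jmeasn}) repeat verbatim the arguments for (\ref{optM}) and (\ref{jmeas}), using that $\Div{\Ao_1,\ldots,\Ao_n}{\cdot}$ is convex, LSC and proper on the compact set $\Mscr(\Xscr_{1\cdots n})$ (the latter because of the finiteness established in (\ref{cboundn})).

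The two genuinely new combinatorial items are (\ref{cinvn2}) and (\ref{cnondecn}). For the permutation invariance (\ref{cinvn2}), given a permutation $\sigma$ and a multi-observable $\Mo$, I would set $\Mo^\sigma(x_1,\ldots,x_n)=\Mo(x_{\sigma^{-1}(1)},\ldots,x_{\sigma^{-1}(n)})$; this is a relabelling of the product outcome space that sets up a bijection of $\Mscr(\Xscr_{1\cdots n})$ with $(\Mo^\sigma)_{[\sigma(i)]}=\Mo_{[i]}$, and since the sum in \eqref{statedependentdivergencen} is insensitive to the order of its terms, $\Div{\Ao_{\sigma(1)},\ldots,\Ao_{\sigma(n)}}{\Mo^\sigma}=\Div{\Ao_1,\ldots,\Ao_n}{\Mo}$; taking infima gives the claim. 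For the monotonicity (\ref{cnondecn}), I would take any $\Mo'\in\Mscr(\Xscr_{1\cdots(n+1)})$ and marginalize out the last outcome, $\Mo:=\Mo'_{[1\cdots n]}$, so that $\Mo_{[i]}=\Mo'_{[i]}$ for $i\leq n$; since every relative entropy is nonnegative (Proposition \ref{prop:propHSrel}, item (\ref{HSbound})), dropping the $(n+1)$-th term gives $\sddiv{\Ao_1,\ldots,\Ao_n}{\Mo}(\rho)\leq\sddiv{\Ao_1,\ldots,\Ao_{n+1}}{\Mo'}(\rho)$ for every $\rho$, hence $\Icomp(\Ao_1,\ldots,\Ao_n)\leq\Div{\Ao_1,\ldots,\Ao_{n+1}}{\Mo'}$ for every $\Mo'$, and the infimum over $\Mo'$ yields the inequality.

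The main effort is item (\ref{cboundn}), and I expect the cloning bound \eqref{eq:bound3n} to be the principal obstacle. The first bound \eqref{eq:trivboundCprepn} is immediate: evaluating $\Div{\Ao_1,\ldots,\Ao_n}{\cdot}$ on the trivial uniform multi-observable $\Uo=(u_{\Xscr_1}\otimes\cdots\otimes u_{\Xscr_n})\id$, whose marginals are $\Uo_{[i]}=u_{\Xscr_i}\id$, and using Proposition \ref{prop:propHSrel} item (\ref{HSrel}) termwise, gives $\sum_i(\log|\Xscr_i|-H(\Ao_i^\rho))$, whose supremum over $\rho$ is the right-hand side. For \eqref{eq:bound3n} I would replace the optimal approximate $2$-cloner of Section \ref{sec:noise} by the optimal symmetric $1\to n$ cloner $\Phi(\rho)=c_n\,S_n(\rho\otimes\id^{\otimes(n-1)})S_n$, with $S_n$ the projection onto $\mathrm{Sym}(\Hscr^{\otimes n})$ and $c_n$ fixed by trace preservation, and measure $\Ao_1\otimes\cdots\otimes\Ao_n$ on $\Phi(\rho)$, i.e.\ take $\Mo_{\rm cl}=\Phi^*(\Ao_1\otimes\cdots\otimes\Ao_n)$. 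The representation-theoretic fact I need is that each single-copy marginal of $\Phi$ is the depolarizing channel with shrinking factor $\lam_{\rm cl}=(d+n)/\big(n(d+1)\big)$, so that $\Mo_{{\rm cl}\,[i]}=\Ao_{i,\lam_{\rm cl}}$ is the uniformly noisy version of $\Ao_i$; this generalizes $\lam_{\rm cl}=(d+2)/\big(2(d+1)\big)$ used for $n=2$ and is the delicate point, requiring the symmetric-subspace computation of the partial trace of $S_n(\rho\otimes\id^{\otimes(n-1)})S_n$. Granting this, the $n$-term analogue of \eqref{eq:bound1} (obtained by applying \eqref{eq:bAlb} to each summand) together with $d\,\Ao_i^{\rho_0}(x)=\Tr\{\Ao_i(x)\}$ gives the first inequality in \eqref{eq:bound3n}; the remaining two inequalities follow from $\min_x\Tr\{\Ao_i(x)\}\geq0$ and $(d+1)\leq(d+n)$, and the finiteness of $\Icomp$ is then a corollary.
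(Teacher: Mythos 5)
Your proposal is correct and follows essentially the same route as the paper: the paper likewise treats items (\ref{slscn})--(\ref{cinvn2}), (\ref{optMn}) and (\ref{jmeasn}) as straightforward extensions of Theorems \ref{prop:2A} and \ref{prop:2B}, proves \eqref{eq:trivboundCprepn} by evaluating the divergence on the uniform multi-observable, obtains \eqref{eq:bound3n} from the optimal $1\to n$ symmetric cloner whose marginals are the noisy versions $\Ao_{i\,\lam_{\rm cl}}$ with $\lam_{\rm cl}=(d+n)/\big(n(d+1)\big)$ (the symmetric-subspace fact you flag is exactly what the paper delegates to the Keyl--Werner and Werner references), and proves item (\ref{cnondecn}) by marginalizing out the last outcome and dropping the nonnegative extra relative-entropy term. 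The only cosmetic difference is that you spell out the relabelling bijection behind the permutation invariance (\ref{cinvn2}), which the paper subsumes under the ``straightforward extensions'' (up to an immaterial $\sigma\leftrightarrow\sigma^{-1}$ bookkeeping in your marginal identity).
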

\begin{proof}
The proofs of items (\ref{slscn})--(\ref{cinvn2}), (\ref{optMn}) and
(\ref{jmeasn}) are straightforward extensions of the analogous ones for two
observables.

In  item (\ref{cboundn}), the upper bound \eqref{eq:trivboundCprepn}  follows by evaluating the entropic divergence of the uniform observable $\Uo =
(u_{\Xscr_1}\otimes\cdots\otimes u_{\Xscr_n})\,\id$ from $(\Ao_1,\ldots,\Ao_n)$:
\begin{align*}
\Icomp(\Ao_1,\ldots,\Ao_n) & \leq \Div{\Ao_1,\ldots,\Ao_n}{\Uo} = \sup_{\rho\in\Sscr(\Hscr)}
\sum_{i=1}^n \Srel{\Ao^\rho_i}{u_{\Xscr_i}} \\
& = \sup_{\rho\in\Sscr(\Hscr)}
\sum_{i=1}^n \left[\log|\Xscr_i| - H(\Ao^\rho_i)\right] \qquad
\text{by Proposition \ref{prop:propHSrel}, item (\ref{HSrel});}
\end{align*}
this yields \eqref{eq:trivboundCprepn}.

The upper bound \eqref{eq:bound3n} follows by using an
approximate cloning argument, just as in the case of only two observables. Indeed, the optimal approximate $n$-cloning channel is the map
$$
\Phi : \sh\to\Scal(\Hscr^{\otimes n}), \qquad \Phi(\rho) = \frac{d!n!}{(d+n-1)!}
S_n (\rho\otimes\id^{\otimes (n-1)}) S_n ,
$$
where $S_n$ is the orthogonal projection of $\Hscr^{\otimes n}$ onto its
symmetric subspace ${\rm Sym}(\Hscr^{\otimes n})$ \cite{KW99}. Evaluating the
marginals of the multi-observable $\Mo_{\rm cl} =
\Phi^*(\Ao_1\otimes\cdots\otimes\Ao_n)$, we obtain the noisy versions
$$
\Mo_{{\rm cl}\,[i]} = \Ao_{i\,\lam_{\rm cl}}, \qquad \text{where} \qquad \lam_{\rm cl} = \frac{d+n}{n(d+1)}
$$
(see \cite{Wer98}). Since $\Icomp(\Ao_1,\ldots,\Ao_n) \leq \Div{\Ao_1,\ldots,\Ao_n}{\Mo_{\rm cl}}$, a computation similar to the one for obtaining the bound \eqref{eq:bound3} in Section \ref{sec:noise} then yields the bounds \eqref{eq:bound3n}.

Finally, in order to prove item (\ref{cnondecn}), take any
$\Mo'\in\Mscr(\Xscr_1\times\cdots\times\Xscr_{n+1})$, and let
$$
\Mo(x_1,\ldots,x_n) = \sum_{x\in\Xscr_{n+1}}
\Mo'(x_1,\ldots,x_n,x) .
$$
We have $\Mo'_{[i]} = \Mo_{[i]}$ for all $i=1,\ldots,n$, hence
\begin{multline*}
\Icomp(\Ao_1,\ldots,\Ao_n)  \leq \Div{\Ao_1,\ldots,\Ao_n}{\Mo} =
\sup_\rho \sum_{i=1}^n \Srel{\Ao_i^\rho}{\Mo^\rho_{[i]}}\\ {} \leq \sup_\rho \sum_{i=1}^{n+1}
\Srel{\Ao_i^\rho}{\Mo^{\prime\rho}_{[i]}}
= \Div{\Ao_1,\ldots,\Ao_{n+1}}{\Mo'}.
\end{multline*}
Item (\ref{cnondecn}) then follows by taking the infimum over $\Mo'$.
\end{proof}

The monotonicity property (\ref{cnondecn}), which is specific of the many
observable case, is another desirable feature for an incompatibility
coefficient: the amount of incompatibility cannot decrease when an extra
observable is added.

\begin{remark}[MURs]
Theorem \ref{prop:2n} gives the following extension of the entropic MURs \eqref{MUR1} and \eqref{MUR2}:
\begin{equation}
\Div{\Ao_1,\ldots,\Ao_n}{\Mo} \geq \Icomp(\Ao_1,\ldots,\Ao_n), \qquad \forall \Mo \in \Mscr(\Xscr_{1\cdots n}),
\end{equation}
\begin{equation}
\forall \Mo \in \Mscr(\Xscr_{1\cdots n}), \ \ \exists\rho\in \Sscr(\Hscr) : \ \  \sum_{i=1}^n \Srel{\Ao_i^\rho}{\Mo^\rho_{[i]}}\geq\Icomp(\Ao_1,\ldots,\Ao_n).
\end{equation}
\end{remark}

Finally, suppose the product space $\Xscr_{1\cdots n}$ carries the action of a
finite symmetry group $G$, which also acts on the quantum system Hilbert space
$\Hscr$ by means of a projective unitary representation $U$. These actions then
extend to the set of states $\sh$, the set of probabilities
$\Pscr(\Xscr_{1\cdots n})$ and the set of multi-observables
$\Mscr(\Xscr_{1\cdots n})$ exactly as in Section \ref{sec:cov.gen}. Similarly,
for any $\Mo\in\Mscr(\Xscr_{1\cdots n})$, we can define its covariant version
$\Mo_G$. Then, the content of Theorem \ref{prop:invarD^} and Corollary
\ref{prop:invarI^} can be translated to the case of $n$ observables as
follows.

\begin{theorem}\label{prop:invarD+I^n}
Let $\Ao_i\in\Mscr(\Xscr_i)$, $i=1,\ldots,n$, be the target observables.
Suppose the finite group $G$ acts on both the output space $\Xscr_{1\cdots n}$
and the index set $\{1,\ldots,n\}$, and it also acts with a projective unitary
representation $U$ on $\Hscr$. Moreover, assume that $G$ is generated by a
subset $S_G\subseteq G$ such that, for every $g\in S_G$ and
$i\in\{1,\ldots,n\}$, there exists a bijective map $f_{g,i} :
\Xscr_i\to\Xscr_{gi}$ for which
\begin{enumerate}[(a)]
\item $g(x_1,\ldots,x_n)_{gi} = f_{g,i}(x_i)$  for all
    $(x_1,\ldots,x_n)\in\Xscr_{1\cdots n}$,
\item  $U_g\Ao_i(x_i)U_g^* = \Ao_{gi}(f_{g,i}(x_i))$ for all
    $x_i\in\Xscr_i$.
\end{enumerate}
Then,
\begin{enumerate}[-]
\item $\Div{\Ao_1,\ldots,\Ao_n}{g\Mo} = \Div{\Ao_1,\ldots,\Ao_n}{\Mo}$ for
    all $\Mo\in\Mscr(\Xscr_{1\cdots n})$ and $g\in G$;
\item the set $\Mcomp(\Ao_1,\ldots,\Ao_n)$ is $G$-invariant;

\item for any $\Mo\in\Mcomp(\Ao_1,\ldots,\Ao_n)$, we have
    $\Mo_G\in\Mcomp(\Ao_1,\ldots,\Ao_n)$;
\item there exists a $G$-covariant observable in
    $\Mcomp(\Ao_1,\ldots,\Ao_n)$.
\end{enumerate}
\end{theorem}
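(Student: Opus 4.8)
The plan is to follow verbatim the structure of the proof of Theorem \ref{prop:invarD^} together with Corollary \ref{prop:invarI^}, the only genuinely new feature being that $G$ now also permutes the tensor factors of the product of marginals through the index action $i\mapsto gi$; thus the two factorization identities \eqref{eq:a} and \eqref{eq:b} must be rephrased to keep track of this permutation. First I would reduce to the case $S_G=G$. The hypotheses (a)--(b) are closed under composition: if $g_1,g_2$ satisfy them, then so does $g_1g_2$, with index action $(g_1g_2)i=g_1(g_2i)$ and bijections $f_{g_1g_2,i}=f_{g_1,g_2i}\circ f_{g_2,i}$ (the projective phases cancel in the conjugation $U_g\,\cdot\,U_g^*$). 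Hence every $g\in G$ inherits the hypotheses and it suffices to prove the divergence invariance for generators.

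The two identities I would establish are, for all $p\in\Pscr(\Xscr_{1\cdots n})$, $\rho\in\sh$ and $g\in G$,
\[
\bigotimes_{i=1}^n (gp)_{[i]} = g\Big(\bigotimes_{i=1}^n p_{[i]}\Big), \qquad \bigotimes_{i=1}^n \Ao_i^{g\rho} = g\Big(\bigotimes_{i=1}^n \Ao_i^\rho\Big),
\]
the $n$-observable analogues of \eqref{eq:a} and \eqref{eq:b}. Both reduce to a reindexing of the product. Condition (a) yields $(g^{-1}x)_i=f_{g,i}^{-1}(x_{gi})$, so that marginalising $gp$ over all but the $j$-th coordinate equals marginalising $p$ over all but the $(g^{-1}j)$-th one; relabelling the product by $i\mapsto gi$ then gives the first identity. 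For the second, I would solve condition (b) for $\Ao_i$, namely $U_g^*\Ao_i(x_i)U_g=\Ao_{g^{-1}i}\big(f_{g,g^{-1}i}^{-1}(x_i)\big)$, insert it into $\Ao_i^{g\rho}(x_i)=\Tr\{\rho\,U_g^*\Ao_i(x_i)U_g\}$, and again reindex by $i\mapsto gi$. With these in hand, the chain of equalities
\[
\sddiv{\Ao_1,\ldots,\Ao_n}{g^{-1}\Mo}(\rho) = \sddiv{\Ao_1,\ldots,\Ao_n}{\Mo}(g\rho)
\]
runs exactly as in Theorem \ref{prop:invarD^}: rewrite the error function as the single relative entropy $\Srel{\bigotimes_i\Ao_i^\rho}{\bigotimes_i\Mo_{[i]}^\rho}$ via Proposition \ref{prop:propHSrel}, item (\ref{HSprod}); push $g^{-1}$ through the marginals using \eqref{eq:gMrho} and the first identity; apply the relabelling invariance \eqref{eq:invS} of $S$; and finish with the second identity. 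Taking $\sup_\rho$ and using $\sh=g\sh$ yields $\Div{\Ao_1,\ldots,\Ao_n}{g^{-1}\Mo}=\Div{\Ao_1,\ldots,\Ao_n}{\Mo}$.

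The three remaining bullet points are then immediate and parallel Corollary \ref{prop:invarI^}: the $G$-invariance of $\Div{\Ao_1,\ldots,\Ao_n}{\cdot}$ makes $\Mcomp(\Ao_1,\ldots,\Ao_n)$ a $G$-invariant set; since this set is convex by Theorem \ref{prop:2n}, item (\ref{optMn}), averaging any optimal $\Mo$ over its orbit shows $\Mo_G\in\Mcomp(\Ao_1,\ldots,\Ao_n)$; and since the set is nonempty (same item), it contains a $G$-covariant element, namely any such $\Mo_G$.

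The only real obstacle is the bookkeeping in the two factorization identities: because $G$ simultaneously relabels outcomes inside each $\Xscr_i$ (via $f_{g,i}$) and permutes the factors (via $i\mapsto gi$), one must verify that the factor $\Ao_i$ consistently lands in slot $gi$ and that the inverse maps $f_{g,g^{-1}i}^{-1}$ appear in the correct places. Once the substitution $i\mapsto gi$ is carried out uniformly, everything collapses to the $n=2$ computation, and no new analytic input beyond Proposition \ref{prop:propHSrel}, item (\ref{HSprod}), and the basic relations \eqref{eq:gMrho}, \eqref{eq:invS} is needed.
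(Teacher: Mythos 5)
Your proposal is correct and follows essentially the same route as the paper's own proof: reduction to $S_G=G$ via closure of hypotheses (a)--(b) under composition, the two permuted-product identities (the paper's \eqref{eq:invarDn1} and \eqref{eq:invarDn2}), the relative-entropy chain of Theorem \ref{prop:invarD^} using \eqref{eq:gMrho} and \eqref{eq:invS}, and the convexity/averaging argument of Corollary \ref{prop:invarI^} applied to $\Mcomp(\Ao_1,\ldots,\Ao_n)$. The only cosmetic difference is that you write the reindexed bijections as $f_{g,g^{-1}i}^{-1}$ where the paper uses $f_{g^{-1},i}$; these coincide by your own composition rule $f_{g_1g_2,i}=f_{g_1,g_2i}\circ f_{g_2,i}$.
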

\begin{proof}
As in the proof of Theorem \ref{prop:invarD^}, it is not restrictive to assume
that $S_G = G$. For all $p\in\Pscr(\Xscr_{1\cdots n})$, condition (a) implies
\begin{align*}
gp_{[i]} (x_i) & = \sum_{\substack{x_j\in\Xscr_j\\ \text{s.t.~} j\neq i}} gp(x_1,\ldots,x_n) = \sum_{\substack{x_{gj}\in\Xscr_{gj}\\ \text{s.t.~} gj\neq i}} p(f_{g^{-1},g1}(x_{g1}),\ldots,f_{g^{-1},gn}(x_{gn}))
\\
& = \sum_{\substack{y_j\in\Xscr_j\\ \text{s.t.~} gj\neq i}} p(y_1,\ldots,y_n) \qquad \qquad \text{where $y_j = f_{g^{-1},gj}(x_{gj})$}
\\
& = \sum_{\substack{y_j\in\Xscr_j\\ \text{s.t.~} j\neq g^{-1}i}} p(y_1,\ldots,y_n) = p_{[g^{-1}i]} (y_{g^{-1}i}) \\
& = p_{[g^{-1}i]}(f_{g^{-1},i}(x_i)) ,
\end{align*}
and hence
\begin{multline*}
(gp_{[1]}\otimes\cdots\otimes gp_{[n]})(x_1,\ldots,x_n) =
\prod_{i=1}^n gp_{[i]}(x_i) = \prod_{i=1}^n p_{[g^{-1}i]}(f_{g^{-1},i}(x_i)) \\ {}=
\prod_{i=1}^n p_{[i]}(f_{g^{-1},gi}(x_{gi}))  = g(p_{[1]}\otimes\cdots\otimes p_{[n]})(x_1,\ldots,x_n) .
\end{multline*}
Therefore,
\begin{equation}\label{eq:invarDn1}
gp_{[1]}\otimes\cdots\otimes gp_{[n]} = g(p_{[1]}\otimes\cdots\otimes p_{[n]}) .
\end{equation}
On the other hand, by condition (b) we have $ \Ao^{g\rho}_i (x_i) =
\Ao^\rho_{g^{-1}i} (f_{g^{-1},i}(x_i)) $, and then
\begin{align*}
& (\Ao_1^{g\rho}\otimes\cdots\otimes\Ao_n^{g\rho})(x_1,\ldots,x_n) =
\prod_{i=1}^n \Ao^\rho_{g^{-1}i} (f_{g^{-1},i}(x_i)) =
\prod_{i=1}^n \Ao^\rho_i (f_{g^{-1},gi}(x_{gi})) \\
& \qquad \qquad \qquad \qquad = g(\Ao_1^\rho\otimes\cdots\otimes\Ao_n^\rho)(x_1,\ldots,x_n) ,
\end{align*}
that is,
\begin{equation}\label{eq:invarDn2}
\Ao_1^{g\rho}\otimes\cdots\otimes\Ao_n^{g\rho} = g(\Ao_1^\rho\otimes\cdots\otimes\Ao_n^\rho) .
\end{equation}
Having established \eqref{eq:invarDn1} and \eqref{eq:invarDn2}, the proof of the equality
$\Div{\Ao_1,\ldots,\Ao_n}{g\Mo} = \Div{\Ao_1,\ldots,\Ao_n}{\Mo}$ follows along
the same lines of the proof of Theorem \ref{prop:invarD^}. The remaining
statements are then proved as in Corollary \ref{prop:invarI^}. \end{proof}

In the next section we will use Theorem \ref{prop:invarD+I^n} to solve the case
of $n=3$ orthogonal target spin-1/2 components. This is the basic example of a maximal set of $d+1$ MUBs in a $d$-dimensional Hilbert space. It is an open problem whether
similar arguments lead to find the incompatibility index
$\Icomp(\Qo_1,\ldots,\Qo_{d+1})$ of a maximal set of $d+1$ MUBs
$\Qo_1,\ldots,\Qo_{d+1}$ whenever such a set of MUBs is known to exist, that is,
for all prime powers $d$.

\subsection{Three orthogonal spin-1/2 components}\label{ex:3spin}
Let the target observables $\Ao_1$, $\Ao_2$, $\Ao_3$ be
three mutually orthogonal spin-1/2 components, that is, the sharp observables
$\Xo$, $\Yo$, $\Zo\in\Mscr(\{+1,-1\})$ associated with the three Pauli
matrices; the observables $\Xo,\Yo$ are given in \eqref{def:XY}, and
\begin{equation}\label{eq:Z}
\Zo(z) = \frac{1}{2}\left(\id+z\sigma_3\right), \qquad \forall z\in\Zscr=\{+1,-1\} .
\end{equation}
Then, we have the following three-spin version of Theorem \ref{teo:incoORTHSPIN}.
\begin{theorem}\label{teo:incoORTH333SPIN}
Let $\Xo$, $\Yo$ and $\Zo$ be the three orthogonal spin-1/2 components \eqref{def:XY}, \eqref{eq:Z}. Then, for the following two tri-observables $\Mo_0, \Mo_1 \in \Mscr(\Xscr\times\Yscr\times\Zscr)$
\begin{gather}
\Mo_0(x,y,z) = \frac{1}{8} \left[\id+\frac{1}{\sqrt 3} (x\sigma_1+y\sigma_2+z\sigma_3)\right], \label{eq:M0x3}\\
\begin{gathered}
\Mo_1(1,1,-1)=2\Mo_0(1,1,-1),\quad
\Mo_1(1,-1,1)=2\Mo_0(1,-1,1),\\
\Mo_1(-1,1,1)=2\Mo_0(-1,1,1),\quad
\Mo_1(-1,-1,-1)=2\Mo_0(-1,-1,-1),\\
\Mo_1(x,y,z)=0 \quad \text{ otherwise,}
\end{gathered}\label{eq:tildeM0x3}
\end{gather}
we have $\Mo_0,\Mo_1\in\Mcomp(\Xo,\Yo,\Zo)$.
If $\rho_e$ is the projection on any eigenvector of $\sigma_1$, $\sigma_2$ or $\sigma_3$, then, for $i=0,1$,
\begin{equation}\label{corth333}
\Icomp(\Xo,\Yo,\Zo)=\sddiv{\Xo,\Yo,\Zo}{\Mo_i}(\rho_{e})=\log \left(3-\sqrt 3\right) \simeq 0.342497\,.
\end{equation}
\end{theorem}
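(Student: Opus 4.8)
The plan is to exploit the large symmetry group of the three orthogonal spins to reduce $\Icomp$ to a one-parameter minimax, via the covariance result of Theorem \ref{prop:invarD+I^n}. First I would introduce the finite group $G$ generated by three kinds of unitaries: the conjugations by $\sigma_1,\sigma_2,\sigma_3$ (each fixing one axis and reversing the sign of the outcomes on the other two, so $gi=i$), a rotation cyclically permuting $\sigma_1\to\sigma_2\to\sigma_3$ (permuting the index set $\{1,2,3\}$ cyclically, with trivial relabelling), and the rotation $U_{12}=(\sigma_1+\sigma_2)/\sqrt2$, for which $U_{12}\sigma_1U_{12}=\sigma_2$, $U_{12}\sigma_2U_{12}=\sigma_1$, $U_{12}\sigma_3U_{12}=-\sigma_3$ (transposing indices $1,2$, fixing $3$, and acting on outcomes by $(x,y,z)\mapsto(y,x,-z)$). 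A direct check shows each generator satisfies conditions (a)--(b) of Theorem \ref{prop:invarD+I^n}, so $\Div{\Xo,\Yo,\Zo}{\cdot}$ is $G$-invariant and $\Mcomp(\Xo,\Yo,\Zo)$ contains a $G$-covariant tri-observable; hence it suffices to minimize the divergence over covariant $\Mo$.

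Next I would determine the general $G$-covariant tri-observable $\Mo(x,y,z)=\frac18[a_{xyz}\id+\vec v_{xyz}\cdot\vec\sigma]$. The sign-flip generators force $a_{xyz}$ and $\vec v_{xyz}$ to depend on the outcome only through the parity $xyz$ and to have the structure $\vec v_{xyz}=h_{\pm}\,(x,y,z)$; the cyclic generator makes the three components share a common $h_\pm$; and the parity-reversing transposition $U_{12}$ identifies the two parity classes, giving $a_+=a_-$ and $h_+=h_-$. With the normalization $\sum_{x,y,z}\Mo(x,y,z)=\id$ this leaves the single one-parameter family
\begin{equation*}
\Mo_h(x,y,z)=\frac18\Big[\id+h\big(x\sigma_1+y\sigma_2+z\sigma_3\big)\Big],\qquad |h|\le\tfrac1{\sqrt3},
\end{equation*}
whose marginals are the equally noisy observables $\Mo_{h,[1]}(x)=\frac12(\id+hx\sigma_1)$, and cyclically for the other two (cf.~\eqref{eq:defOlam}); up to a relabelling I may take $h\in[0,1/\sqrt3]$. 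For a pure state with Bloch vector $\vec r$, $\|\vec r\|=1$, a short computation gives $\sddiv{\Xo,\Yo,\Zo}{\Mo_h}(\rho)=\sum_{i=1}^3 D_h(r_i)$, where $D_h(t)=\frac{1+t}2\log\frac{1+t}{1+ht}+\frac{1-t}2\log\frac{1-t}{1-ht}$. By Theorem \ref{prop:2n}, item (\ref{worstrhon}), computing $\Icomp$ thus reduces to the minimax $\min_{h\in[0,1/\sqrt3]}\max_{\|\vec r\|=1}\sum_i D_h(r_i)$.

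For the lower bound I would note that the coordinate-axis state $\vec r=(1,0,0)$ is always admissible, so $\max_{\vec r}\sum_i D_h(r_i)\ge D_h(1)=\log\frac2{1+h}$; since this is decreasing in $h$, its minimum over $[0,1/\sqrt3]$ is $\log\frac2{1+1/\sqrt3}=\log(3-\sqrt3)$, giving $\Icomp\ge\log(3-\sqrt3)$. For the matching upper bound I would evaluate the inner maximum at the candidate optimal value $h=1/\sqrt3$: writing $s_i=r_i^2$, the constraint becomes $\sum_i s_i=1$, $s_i\ge0$, and (using that $D_h$ is even) $\sum_iD_h(r_i)=\sum_i\tilde D(s_i)$ with $\tilde D(s)=D_h(\sqrt s)$. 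The key technical step---and the main obstacle---is to show that $\tilde D$ is convex on $[0,1]$ for $h=1/\sqrt3$; granting this, the maximum over the simplex is attained at a vertex, i.e.~at a coordinate axis, so $\max_{\vec r}\sum_iD_{1/\sqrt3}(r_i)=D_{1/\sqrt3}(1)=\log(3-\sqrt3)$. This yields $\Icomp\le\log(3-\sqrt3)$, hence equality, with optimal $h=1/\sqrt3$ (that is, $\Mo_0$) and the maximum attained at the eigenvector states $\rho_e$, exactly as claimed in \eqref{corth333}.

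Finally, to see that $\Mo_1$ is also optimal I would compute its marginals directly: summing \eqref{eq:tildeM0x3} over the two relevant outcomes in each axis, the noise cancels the off-axis Pauli terms and one finds $\Mo_{1,[i]}=\Mo_{0,[i]}$ for $i=1,2,3$. Since the entropic divergence depends only on the marginals, $\Div{\Xo,\Yo,\Zo}{\Mo_1}=\Div{\Xo,\Yo,\Zo}{\Mo_0}=\Icomp(\Xo,\Yo,\Zo)$, so $\Mo_1\in\Mcomp(\Xo,\Yo,\Zo)$ and $\sddiv{\Xo,\Yo,\Zo}{\Mo_1}(\rho_e)$ equals the common value $\log(3-\sqrt3)$ as well. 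The only genuinely laborious point is the convexity of $s\mapsto D_{1/\sqrt3}(\sqrt s)$, a single-variable estimate best relegated to the appendix; everything else is forced by the symmetry reduction together with the fact that $\Mo_0$ and $\Mo_1$ share their marginals.
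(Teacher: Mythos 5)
Your skeleton coincides with the paper's: you use the same order-24 octahedral group (just with different generators --- the $180^\circ$ axis rotations, the cyclic vertex rotation, and the edge rotation $U_{12}$ generate exactly the group $O$ used in Appendix \ref{App:B3}), the same reduction via Theorem \ref{prop:invarD+I^n} to the one-parameter covariant family $\Mo_h(x,y,z)=\frac18\bigl[\id+h(x\sigma_1+y\sigma_2+z\sigma_3)\bigr]$, $|h|\le 1/\sqrt3$, and the same equal-marginals argument to transfer optimality from $\Mo_0$ to $\Mo_1$. Your lower bound is a correct and even simpler alternative to the paper's route: the paper fixes the optimal parameter $c=1/\sqrt3$ by a convexity/monotonicity argument (Lemma \ref{lem:nuovo} applied to $D(c)$ extended to $c\in[-1/\sqrt3,1]$ with $D(1)=0$), whereas you get $\Icomp(\Xo,\Yo,\Zo)\ge\log(3-\sqrt3)$ directly from $\Div{\Xo,\Yo,\Zo}{\Mo_h}\ge D_h(1)=\log\frac{2}{1+h}$. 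One quibble: your restriction to $h\ge0$ ``up to a relabelling'' is not actually justified --- no unitary conjugates all three Pauli matrices into their negatives, and relabelling invariance (Theorem \ref{prop:2n}, item (v)) would relabel the targets $\Xo,\Yo,\Zo$ as well --- but this is harmless, since the bound $D_h(1)\ge\log(3-\sqrt3)$ holds on the whole interval $[-1/\sqrt3,1/\sqrt3]$ and your upper bound only uses $h=1/\sqrt3$.

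The genuine gap is the upper bound. Everything hinges on the identity $\max_{\|\vec r\|=1}\sum_i D_{1/\sqrt3}(r_i)=D_{1/\sqrt3}(1)$, i.e.\ that the worst state for $\Mo_0$ is an eigenstate of some $\sigma_i$; you reduce this to the convexity of $\tilde D(s)=D_{1/\sqrt3}(\sqrt s)$ on $[0,1]$ and then explicitly ``grant'' that convexity instead of proving it. This is precisely the analytic core of the theorem, and it is the step on which the paper spends its effort: there, the function $\tilde s=D_{1/\sqrt3}$ is given the integral representation $\tilde s(v)=\frac1{2\ln2}\int_{1/\sqrt3}^1\frac{2v^2(1-\lambda)}{1-\lambda^2v^2}\,\rmd\lambda$, and a sign analysis of the angular derivatives of $\tilde s(\cos\phi\sin\theta)+\tilde s(\sin\phi\sin\theta)+\tilde s(\cos\theta)$ shows the maximum sits at the eigenprojections. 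Without your convexity claim, the proposal proves only the inequality $\Icomp(\Xo,\Yo,\Zo)\ge\log(3-\sqrt3)$ and, in particular, does not establish that $\Mo_0,\Mo_1\in\Mcomp(\Xo,\Yo,\Zo)$. The good news is that the claim is true and the paper's own integral representation closes it in one line: substituting $v^2=s$ gives $\tilde D(s)=\frac1{2\ln2}\int_{1/\sqrt3}^1\frac{2s(1-\lambda)}{1-\lambda^2s}\,\rmd\lambda$, and for each $\lambda\in[1/\sqrt3,1)$ the integrand $s\mapsto\frac{s(1-\lambda)}{1-\lambda^2s}=\frac{1-\lambda}{\lambda^2}\left(\frac1{1-\lambda^2s}-1\right)$ is convex on $[0,1]$, so $\tilde D$ is convex and the maximum over the simplex $\{s_i\ge0,\ s_1+s_2+s_3=1\}$ is attained at a vertex. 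With that supplied, your argument is complete; as written, it defers exactly the hard part.
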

The description of the symmetry group of $\Xo$, $\Yo$ and $\Zo$, and the consequent application of Theorem \ref{prop:invarD+I^n} yielding the proof of Theorem \ref{teo:incoORTH333SPIN}, are provided in Appendix \ref{App:B3}.

Note that, differently from the case with only $n=2$ spins, for $n=3$ orthogonal spin-1/2 components there is not a unique optimal approximate joint measurement. It should be also remarked that, although $\Mo_0 \neq \Mo_1$, both optimal approximate joint measurements given in \eqref{eq:M0x3}, \eqref{eq:tildeM0x3} have the same marginals, so that $\sddiv{\Xo,\Yo,\Zo}{\Mo_0}=\sddiv{\Xo,\Yo,\Zo}{\Mo_1}$. Indeed, they are the equally noisy observbles
\begin{equation}\label{eq:noisy3margins}
\Mo_{0[1]} = \Mo_{1[1]} = \Xo_{\frac{1}{\sqrt{3}}} , \qquad \Mo_{0[2]} = \Mo_{1[2]} = \Yo_{\frac{1}{\sqrt{3}}} , \qquad \Mo_{0[3]} = \Mo_{1[3]} = \Zo_{\frac{1}{\sqrt{3}}} .
\end{equation}
The construction of the tri-observable $\Mo_1$ is taken from \cite[Sect.~VI]{HeJiReZi10}. It is an open question whether  $\left(\Xo_{1/\sqrt{3}}, \Yo_{1/\sqrt{3}}, \Zo_{1/\sqrt{3}}\right)$ is the unique triple of compatible observables optimally approximating $(\Xo, \Yo, \Zo)$; see also Remark \ref{rem:B4} in Appendix \ref{App:B3} for further comments.

\section{Conclusions }\label{sec:CO}

We have formulated and proved entropic MURs for discrete observables in a finite-dimensional Hilbert space. In doing so, we have considered target observables $\Ao$ and $\Bo$ described by general POVMs, not only sharp observables.
Our formulation employs the relative entropy to quantify the total amount of information that is lost when $\Ao$ and $\Bo$ are approximated with the marginals $\Mo_{[1]}$ and $\Mo_{[2]}$ of a bi-observable $\Mo$. Such an information loss is the state-dependent error \eqref{eq:Soprod}; maximizing it over all states $\rho$ and then minimizing the result over the bi-observables $\Mo$, we have derived our MURs \eqref{MUR2}: for every approximating bi-observable $\Mo$, there is always a state $\rho$ such that the total information loss of the approximation $(\Ao^\rho,\Bo^\rho) \simeq (\Mo^\rho_{[1]},\Mo^\rho_{[2]})$ is not less than a minimal threshold $c(\Ao,\Bo)$, independent of $\Mo$ and strictly positive when the target observables are incompatible.

Minimizing over different sets of bi-observables yields MURs with different meanings. If $\Mo$ varies over all the POVMs on the product set of the $\Ao$- and $\Bo$-outcomes, then the resulting index $\Icomp(\Ao,\Bo)$ is the minimal error potentially affecting all possible approximate joint measurements of $\Ao$ and $\Bo$. On the other hand, if $\Mo$ is only allowed to vary over the subset of all the sequential measurements of an approximation of $\Ao$ followed by $\Bo$, we obtain the minimal information loss $\Iad(\Ao,\Bo)$ due to the error/disturbance tradeoff. We have proved that the two indexes remarkably coincide when the second observable $\Bo$ is sharp; we have also given explicit examples, involving general POVMs, where the two indexes actually differ.

The two coefficients $\Icomp$ and $\Iad$ play a double role: on the one side, they are the lower bounds of our entropic MURs, as just described above; on the other side, they also properly quantify the degree of (total or sequential) incompatibility of the target observables. The latter interpretation is justified since $\Icomp$ and $\Iad$ only depend on $\Ao$ and $\Bo$, as well as by the remarkable properties of the two indexes (Theorem \ref{prop:2B}). In particular, the existence of an index allows to establish whether a couple of observables is more or less incompatible than another one. For instance, the incompatibility degree of two spin-1/2 components grows by increasing the angle between their directions, as naturally expected (see Figure \ref{mod}).

Due to the double optimization in the definitions of $\Icomp(\Ao,\Bo)$ and $\Iad(\Ao,\Bo)$, it is not easy to explicitly compute them and their corresponding optimal approximate joint measurements. Anyway, in Theorem \ref{prop:invarD^} we have shown how one can use general symmetry arguments in order to simplify the problem. We have then applied this method to two spin-1/2 components (Theorems \ref{teo:incoORTHSPIN} and \ref{teo:incoNONORTHSPIN}), and two Fourier conjugate MUBs in prime power dimension (Theorem \ref{teo:incoMUB}).

A peculiar feature of our MURs is that in several cases there is actually a unique optimal approximate
joint measurement. Indeed, in the two spin and MUB examples, we have uniqueness for all the cases in which we have managed to completely characterize the sets $\Mcomp$ and $\Mad$ of the optimally approximating joint measurements.
We conjecture that this is still true also for the two nonorthogonal spin-1/2 components, and the Fourier conjugated MUBs in even prime power dimensions, for which up to now we have only partial results.

One nice aspect of our approach is that it naturally and easily generalizes to more than two observables. We have done this extension only for the entropic incompatibility degree $\Icomp$ (Theorems \ref{prop:2n} and \ref{prop:invarD+I^n}), as the multi-observable interpretation of $\Iad$ is less transparent. As an application, we have computed the index $\Icomp$ for three orthogonal spin-1/2 components $\Xo$, $\Yo$ and $\Zo$; although in this case there is still a unique covariant approximate joint measurement, the main difference with the two spin case is that $\Mcomp(\Xo,\Yo,\Zo)$ is not a singleton set now.

Many problems still remain open, as it is not clear how to analytically or at least numerically compute $\Icomp$ and $\Iad$ and the corresponding optimal approximate joint measurements for an arbitrary couple of target observables. Explicit results would be desirable for physically relevant observables other than those considered in Sections \ref{ex:D2}, \ref{sec:cov.MUBs} and \ref{ex:3spin} (e.g.\ two or more spin-$s$ components with $s>1/2$, two or more MUBs in arbitrary dimensions and possibly not Fourier conjugate, etc.). A possible generalization is to include also systems in presence of ``quantum memories''; indeed, this extension has recently been studied in the case of  entropic PURs \cite{Berta+,Frank,ColesBTW17}. More importantly, the theory we have developed is restricted to discrete observables in a finite-dimen\-sional Hilbert space. The bound $\Icomp(\Ao_1,\ldots,\Ao_n)\leq n\log n$ appearing in \eqref{eq:bound3} and \eqref{eq:bound3n}, which is independent of the number of the outcomes and the dimension of $\Hscr$, suggests that it would be possible to generalize the theory to arbitrary observables in a separable Hilbert space. However, this is not a straightforward extension; indeed, the first results on position and momentum \cite{BGT17} already show that the error function \eqref{statedependentdivergence} needs to be restricted to only particular classes of states,  in order to avoid $\Icomp=+\infty$, merely due to classical effects.

\appendix

\section{Examples of compatible but not sequentially compatible observables}\label{app:HW}

\paragraph{First example from \cite{HeiW10}} Apart from an exchange of $\Ao$
and $\Bo$ and some explicit computations, this example is taken from
\cite[Sect.\ III.C, and the end of Sect.\ III.A] {HeiW10}. With $\Hscr=\Cbb^3$,
$\Xscr=\{1,2\}$ and $\Yscr=\{1,\ldots,5\}$, the two target observables are
defined by
\[
\Ao(1)=\frac 12 \begin{pmatrix} 2&0&0 \\ 0&0&0 \\ 0&0& 1\end{pmatrix}, \qquad
\Ao(2)=\frac 12 \begin{pmatrix} 0&0&0 \\ 0&2&0 \\ 0&0& 1\end{pmatrix};
\]
\[
\Bo(1)=\frac 14 \begin{pmatrix} 2&0&-\sqrt 2 \\ 0&0&0 \\ -\sqrt 2&0& 1\end{pmatrix}, \quad
\Bo(2)=\frac 1{10} \begin{pmatrix} 0&0&0 \\ 0&1&-2 \\ 0&-2& 4\end{pmatrix}, \quad
\Bo(3)=\frac 12 \begin{pmatrix} 0&0&0 \\ 0&1&0 \\ 0&0& 0\end{pmatrix},
\]
\[
\Bo(4)=\frac 1{10} \begin{pmatrix} 0&0&0\\ 0&4&2 \\ 0 &2& 1\end{pmatrix}, \qquad
\Bo(5)=\frac 14 \begin{pmatrix} 2&0&\sqrt 2\\ 0&0&0 \\ \sqrt 2&0& 1\end{pmatrix}.
\]
These two observables are compatible, and one can check that a joint observable is
\[
\Mo(1,1)=\Bo(1), \quad \Mo(1,5)=\Bo(5), \quad \Mo(2,2)=\Bo(2), \quad \Mo(2,3)=\Bo(3), \]
\[
\Mo(2,4)=\Bo(4),
\\ \qquad
\Mo(1,2)=\Mo(1,3)=\Mo(1,4)=\Mo(2,1)=\Mo(2,5)=0.
\]
This implies
$\Icomp(\Ao,\Bo)=0$. Moreover, in \cite[Sect.\ III.C]{HeiW10} it is proved
that: (1) there exists an instrument implementing $\Bo$ which does not disturb
$\Ao$; (2) any instrument implementing $\Ao$ disturbs $\Bo$. By Theorem \ref{prop:2B}, item
(\ref{smeas}), this implies $\Iad(\Bo,\Ao)=0$ and
$\Iad(\Ao,\Bo)>0$.

\paragraph{Second example  from \cite{HeiW10}}
This is the first example of \cite[Sect.\ III.A]{HeiW10}, which we report in
the particular case in which the noise parameters are fixed and equal; let us call them
$\lambda$, with $\lambda\in \left(\frac 12\,,\, \frac 23\right]$. The
observables are two-valued $(\Xscr=\Yscr=\{1,2\})$, and they are built up by using
two noncommuting orthogonal projections $P$ and $Q$: \ $[P,Q]\neq 0$.  The
joint observable $\Mo$ and its marginals are given by
\[
\Mo(1,1)=(1-\lambda)\id, \qquad \Mo(1,2)=(2\lambda-1 )P,
\qquad \Mo(2,1)=(2\lambda-1 )Q,
\]
\[
\Mo(2,2)=\left(1-\frac 32\,\lambda\right)(P+Q)+\frac\lambda 2 \left(\id -P+\id -Q\right),
\]
\[
\Ao(1)=
\Mo_{[1]}(1)=\lambda P+(1-\lambda)(\id-P),
\quad
\Ao(2)=
\Mo_{[1]}(2)=\lambda (\id-P)+(1-\lambda)P,
\]
\[
\Bo(1)=
\Mo_{[2]}(1)=\lambda Q+(1-\lambda)(\id-Q), \quad
\Bo(2)=
\Mo_{[2]}(2)=\lambda (\id-Q)+(1-\lambda)Q.
\]
The observables $\Ao$ and $\Bo$ are compatible by construction, and so
$\Icomp(\Ao,\Bo)=0$. In \cite{HeiW10}, it is proved that there does not exist
any instrument implementing $\Ao$ which does not disturb $\Bo$; it follows that
$\Iad(\Ao,\Bo)>0$ and, by exchanging $P$ and $Q$, $\Iad(\Bo,\Ao)>0$.

\section{Symmetries and proofs for target spin-1/2 components}\label{app:spin}

In this appendix, we describe the symmetry groups for two arbitrary and three orthogonal spin-1/2 components. Then, by using Theorem \ref{prop:invarD^}, we prove our main Theorems \ref{teo:incoNONORTHSPIN} (Appendix \ref{ex:D4}), \ref{teo:incoORTHSPIN} (Appendix \ref{app:orth}) and \ref{teo:incoORTH333SPIN} (Appendix \ref{App:B3}), and we provide the missing calculations in Section \ref{sez:nuova}. Since the proof of Theorem \ref{teo:incoORTHSPIN} follows from Theorem \ref{teo:incoNONORTHSPIN} with the angle $\alpha = \pi/2$, here we prefer to reverse the order of the two proofs.

\subsection{Incompatibility degree and optimal measurements for two spin-1/2 components}\label{ex:D4}

In this section, $\Ao$ and $\Bo$ are the spin-1/2 components defined in \eqref{ABspin}, with directions spanning an arbitrary angle $\alpha\in [0,\pi/2]$; the respective outcome spaces are $\Xscr = \Yscr = \{-1,+1\}$. The symmetry group of $\Ao$ and $\Bo$ is the order $4$ dihedral group $D_2\subset SO(3)$ generated by the
rotations $S_{D_2}=\{R_{\vec  n}(\pi),\,R_{\vec  m}(\pi)\}$, i.e.\ the
$180^\circ$ rotations around  the bisectors $\vec n$ and $\vec  m$
of the first two quadrants (see Figure \ref{abfig}). Here and in the following, our reference for the discrete subgroups of the
rotation group is \cite[pp.~77--79]{Weyl}. The natural action of the group
$D_2$ on the outcome space $\Xscr\times\Yscr$ is given by
\begin{subequations}\label{2rot}
\begin{equation}\label{action1}
R_{\vec  n}(\pi)\, (x,y) = (y,x), \quad R_{\vec  m}(\pi)\, (x,y) =
(-y,-x), \qquad \forall(x,y)\in\Xscr\times\Yscr .
\end{equation}
We then see that condition (\ref{8ii}.a) of Theorem \ref{prop:invarD^} is
satisfied for all $g\in S_{D_2}$. As the representation $U$ of $D_2$ on $\Cbb^2$, we take the restriction of the usual spin-1/2 projective representation of $SO(3)$; this gives
\begin{equation}\label{actionU}
U\big(R_{\vec
n}(\pi)\big)=\rme^{-\rmi\pi\, {\vec  n}\cdot \vec  \sigma/2}\equiv
-\rmi\, {\vec  n}\cdot \vec  \sigma, \quad U\big(R_{\vec
m}(\pi)\big)=\rme^{-\rmi\pi\, {\vec  m}\cdot \vec  \sigma/2}\equiv
-\rmi\, {\vec  m}\cdot \vec  \sigma.
\end{equation}
\end{subequations}
It is easy to see that the observables $\Ao$ and $\Bo$ satisfy the relations
\begin{equation}\label{Un}
\begin{split}
& U\big(R_{\vec  n}(\pi)\big) \Ao(x) U\big(R_{\vec  n}(\pi)\big)^*
= \Bo(x), \quad   U\big(R_{\vec  m}(\pi)\big)\Ao(x) U\big(R_{\vec  m}(\pi)\big)^*
= \Bo(-x),
\\
&  U\big(R_{\vec  n}(\pi)\big) \Bo(y) U\big(R_{\vec  n}(\pi)\big)^*
= \Ao(y),  \quad  U\big(R_{\vec  m}(\pi)\big) \Bo(y) U\big(R_{\vec  m}(\pi)\big)^*
= \Ao(-y).
\end{split}
\end{equation}
This implies that also condition (\ref{8ii}.b) of Theorem \ref{prop:invarD^} is
fulfilled for all $g\in S_{D_2}$. Then, because of Remark \ref{rem:covariant},
in order to find $\Icomp(\Ao,\Bo)$, we are led to study the most general form
of a $D_2$-covariant bi-observable and its marginals.

\begin{proposition}\label{prop:cov}
Let the dihedral group $D_2$ act on $\Xscr\times\Yscr$ and $\Hscr$ as in
\eqref{2rot}. Then, the following facts hold.
\begin{enumerate}[(i)]
\item \label{itemM} The most general $D_2$-covariant bi-observable on $\Xscr\times\Yscr$
    is
\begin{equation}\label{gencov}
\Mo(x,y)
= \frac{1}{4}\left[ \left(1 + \gamma xy\right)\id + \left(c_1 x +c_2 y\right)\sigma_1
+ \left(c_2 x +c_1 y\right)\sigma_2\right],
\end{equation}
with $\gamma\in\Rbb$ and $\vec{c} = c_1\vec{i} + c_2\vec{j} \in \Rbb^2$ such that
\begin{equation}\label{2spineq}
\sqrt 2 \abs{c_1+c_2}-1\leq \gamma\leq 1-\sqrt 2\abs{c_1-c_2}.
\end{equation}
The marginals of $\Mo$ are $\Mo_{[1]} = \Ao_{\vec {c}}$ and $\Mo_{[2]} =
\Bo_{\vec {c}}$, with $\Ao_{\vec {c}}$, $\Bo_{\vec {c}}$ defined in \eqref{eq:marginstorte}.

\item \label{itemQ} Equation \eqref{eq:marginstorte} defines the marginals of a $D_2$-covariant bi-observable on $\Xscr\times\Yscr$ if and only if the vector $\vec{c}$
belongs to the square
\begin{equation}\label{squareQ}
Q=\{c_1\vec {i}+c_2\vec {j} : |c_1|\leq 1/\sqrt{2},\ |c_2|\leq
1/\sqrt{2}\}.
\end{equation}
\end{enumerate}
\end{proposition}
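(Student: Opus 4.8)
The plan is to exploit that $\Hscr = \Cbb^2$, so every value $\Mo(x,y)$ of a bi-observable is a $2\times 2$ Hermitian matrix and can be expanded in the Pauli basis as $\Mo(x,y) = \frac14\big[a(x,y)\id + b_1(x,y)\sigma_1 + b_2(x,y)\sigma_2 + b_3(x,y)\sigma_3\big]$, with real coefficient functions $a,b_1,b_2,b_3$ on the four-point set $\Xscr\times\Yscr = \{-1,+1\}^2$. Since $D_2$ is generated by $S_{D_2} = \{R_{\vec n}(\pi), R_{\vec m}(\pi)\}$, covariance $U(g)\Mo(x,y)U(g)^* = \Mo(g(x,y))$ needs to be imposed only on these two generators. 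First I would record how conjugation by the two unitaries \eqref{actionU} acts on the Pauli matrices: being $180^\circ$ rotations about $\vec n$ and $\vec m$, they send $(\sigma_1,\sigma_2,\sigma_3)$ to $(\sigma_2,\sigma_1,-\sigma_3)$ and to $(-\sigma_2,-\sigma_1,-\sigma_3)$ respectively, while the outcome maps are the swap $(x,y)\mapsto(y,x)$ and $(x,y)\mapsto(-y,-x)$ from \eqref{action1}.

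Next I would turn covariance into linear constraints on the coefficient functions and solve them. Expanding each of $a,b_1,b_2,b_3$ in the basis $\{1,x,y,xy\}$ of functions on $\{-1,+1\}^2$ makes this a small, purely linear computation. The generator $R_{\vec n}(\pi)$ forces $a$ to be symmetric, $b_3$ antisymmetric under the swap, and interchanges the $\sigma_1$- and $\sigma_2$-coefficients; the generator $R_{\vec m}(\pi)$ adds the reflection/sign constraints. Together they kill the linear part of $a$ (leaving $a(x,y) = 1 + \gamma xy$ once the normalization $\sum_{x,y}\Mo(x,y) = \id$ fixes the constant to $1$), force $b_3 \equiv 0$, and impose that $b_1$ be odd with $b_2(x,y) = b_1(y,x)$, whence $b_1(x,y) = c_1 x + c_2 y$ and $b_2(x,y) = c_2 x + c_1 y$. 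This is exactly \eqref{gencov}. Computing $\Mo_{[1]}(x) = \sum_y \Mo(x,y)$ and $\Mo_{[2]}(y) = \sum_x \Mo(x,y)$, the terms odd in the summed variable drop out and one reads off $\Mo_{[1]} = \Ao_{\vec c}$ and $\Mo_{[2]} = \Bo_{\vec c}$ as in \eqref{eq:marginstorte}.

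The remaining point is positivity, which yields \eqref{2spineq}. Writing $\Mo(x,y) = \frac14\big(a(x,y)\id + \vec v(x,y)\cdot\vec\sigma\big)$ with $\vec v = (b_1,b_2,0)$, the eigenvalues are $\frac14\big(a(x,y) \pm \abs{\vec v(x,y)}\big)$, so $\Mo(x,y)\geq 0$ iff $a(x,y)\geq \abs{\vec v(x,y)}$. Using $x^2 = y^2 = 1$, a short computation gives $\abs{\vec v(x,y)}^2 = 2(c_1^2 + c_2^2) + 4 c_1 c_2\, xy$, which equals $2(c_1+c_2)^2$ when $xy = +1$ and $2(c_1-c_2)^2$ when $xy = -1$; since $a = 1+\gamma$ and $a = 1-\gamma$ in these two cases, the positivity conditions become precisely $\gamma \geq \sqrt 2\,\abs{c_1+c_2} - 1$ and $\gamma \leq 1 - \sqrt 2\,\abs{c_1-c_2}$, i.e.\ \eqref{2spineq}. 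This establishes item (\ref{itemM}).

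For item (\ref{itemQ}), $\Ao_{\vec c}$ and $\Bo_{\vec c}$ are the marginals of some $D_2$-covariant bi-observable iff the $\gamma$-interval in \eqref{2spineq} is nonempty, i.e.\ $\sqrt2\,\abs{c_1+c_2} - 1 \leq 1 - \sqrt2\,\abs{c_1-c_2}$, which rearranges to $\abs{c_1+c_2} + \abs{c_1-c_2} \leq \sqrt 2$. The elementary identity $\abs{c_1+c_2} + \abs{c_1-c_2} = 2\max(\abs{c_1},\abs{c_2})$ then turns this into $\max(\abs{c_1},\abs{c_2}) \leq 1/\sqrt2$, which is exactly $\vec c \in Q$ from \eqref{squareQ}. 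The only real care needed anywhere is bookkeeping: getting the sign conventions right for the two $\pi$-rotations acting on the Pauli matrices, and tracking the linear covariance constraints without error; there is no conceptual obstacle beyond this.
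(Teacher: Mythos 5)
Your proof is correct and follows essentially the same route as the paper's: expansion of $\Mo$ in the sixteen-element basis formed by the functions $\{1,x,y,xy\}$ and the Pauli operators, reduction of the coefficients by imposing covariance under the two generators $R_{\vec n}(\pi)$ and $R_{\vec m}(\pi)$, positivity of each $\Mo(x,y)$ via its eigenvalues (equivalent to the paper's diagonal-plus-determinant check), and for item (ii) the nonemptiness of the $\gamma$-interval in \eqref{2spineq}, which you reduce to $\vec c\in Q$ by the identity $\abs{c_1+c_2}+\abs{c_1-c_2}=2\max(\abs{c_1},\abs{c_2})$ that the paper leaves implicit.
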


\begin{proof}
(i) The set $\Mscr(\Xscr\times\Yscr)$ is a subset of the linear space
$\Lscr(\Cbb^2)^{\Xscr\times\Yscr}=\Cbb^{\Xscr\times\Yscr}\otimes\Lscr(\Cbb^2)$,
where the set of the 16 products between one of the functions $1,\, x, \,y,
\,xy$ and one of the operators $\id, \sigma_1, \sigma_2, \sigma_3$ provides a
basis of linearly independent elements. Then, the most general
bi-observable on $\Xscr\times\Yscr$ is a linear combination of such products;
it is easy to see that the covariance under the rotation $R_{\vec{n}}(\pi) R_{\vec{m}}(\pi)$ implies the vanishing of the coefficients of the products
$x\id$, $y\id$, $xy\sigma_1$, $xy\sigma_2$, $1\sigma_1$, $1\sigma_2$,
$x\sigma_3$, $y\sigma_3$. By taking into account also the normalization and
selfadjointness conditions, we are left with
\begin{equation*}
\Mo(x,y)
= \frac{1}{4}\left[ \left(1 + \gamma xy \right)\id + \left(c_1 x +c_2 y\right)\sigma_1
+ \left(c'_1 x +c'_2 y\right)\sigma_2+ \left(c_3+c_4xy\right)\sigma_3\right],
\end{equation*}
with real coefficients $\gamma$, $c_i$ and $c'_i$. By imposing the covariance
under $R_{\vec  n}(\pi)$, we get $c'_1=c_2$, $c'_2=c_1$,
$c_3=c_4=0$, and \eqref{gencov} is obtained. Finally, since $R_{\vec{m}}(\pi) = R_{\vec{n}}(\pi) R_{\vec{m}}(\pi) R_{\vec{n}}(\pi)$, the bi-observable \eqref{gencov} is covariant
with respect to the whole group $D_2$. To impose the positivity of the operators $\Mo(x,y)$,
it is enough to study the diagonal elements and the determinant of the $2\times
2$--matrix representing \eqref{gencov}. The positivity of the diagonal elements
$\forall(x,y)$ gives $\gamma\in [-1,1]$. By the positivity of the determinant,
\[
\left(1+\gamma xy\right)^2\geq \left(c_1x+c_2y\right)^2+ \left(c_2x+c_1y\right)^2,
\qquad \forall(x,y)\in\Xscr\times\Yscr.
\]
The latter two conditions are equivalent to \eqref{2spineq}. Evaluating the marginals of
\eqref{gencov} immediately yields the observables \eqref{eq:marginstorte}.

(ii) We begin by noticing that $\vec c \in Q$ is equivalent to
\begin{equation}\label{bbineqbis}
\sqrt{2}\abs{c_1+c_2}-1 \leq 1-\sqrt{2}\abs{c_1-c_2} .
\end{equation}
For the marginals $\Ao_{\vec {c}}$ and $\Bo_{\vec {c}}$ of a
$D_2$-covariant bi-observable, inequalities \eqref{2spineq} trivially imply
\eqref{bbineqbis}, and so $\vec c \in Q$ holds; alternatively, the same result follows from \cite[Prop.\ 3]{BusH08}. Conversely, if $\Ao_{\vec {c}}$ and
$\Bo_{\vec {c}}$ are as in \eqref{eq:marginstorte} with $\vec c \in Q$, then by \eqref{bbineqbis} we can always find $\gamma$ as in
\eqref{2spineq}. The $D_2$-covariant bi-observable corresponding to
$\gamma,c_1,c_2$ then has marginals $\Ao_{\vec {c}}$ and $\Bo_{\vec {c}}$.
\end{proof}

Now we tackle the problem of evaluating the lower bound $\Icomp(\Ao,\Bo)$ and finding the
optimal covariant approximate joint measurements of the target spin-1/2 components \eqref{ABspin}. By Remark \ref{rem:covariant} and Proposition \ref{prop:cov},
\begin{equation}\label{eq:optispin}
\begin{aligned}
\Icomp(\Ao,\Bo) & = \min_{\substack{\Mo\in\Mscr(\Xscr\times\Yscr)\\ \Mo\ \text{$D_2$-covariant}}} \,
\max_{\substack{\rho\in\sh\\ \rho\text{ pure}}}\{\Srel{\Ao^\rho}{\Mo_{[1]}^\rho}
+ \Srel{\Bo^\rho}{\Mo_{[2]}^\rho}\}  \\
& = \min_{\vec {c}\in Q} \, \max_{\substack{\rho\in\sh\\ \rho\text{ pure}}}\{\Srel{\Ao^\rho}{\Ao^\rho_{\vec {c}}}
+ \Srel{\Bo^\rho}{\Bo^\rho_{\vec {c}}}\} ,
\end{aligned}
\end{equation}
where $Q$ is the square \eqref{squareQ}. Thus, the value of $\Icomp(\Ao,\Bo)$
can be found by minimizing the function
\begin{equation}\label{eq:defDR2}
D(\vec  c)=\max_{\substack{\rho\in\sh\\ \rho\text{ pure}}}\left\{\Srel{\Ao^\rho}{\Ao_{\vec {c}}^\rho}+ \Srel{\Bo^\rho}{\Bo_{\vec {c}}^\rho}\right\}
\end{equation}
for $\vec {c}$ ranging inside $Q$.

Note that the domain of the function $D$ can be extended to the whole disk $C$
introduced in \eqref{diskC}. In the domain $C$, $D(\vec c)=0$ if and only if
$\Ao_{\vec {c}}=\Ao$ and $\Bo_{\vec {c}}=\Bo$, which is equivalent to $\vec{c}
= \vec{a}$. The regions $C$ and $Q$ in the $\vec {i}\vec {j}$-plane are
depicted in Figure \ref{c+q}.
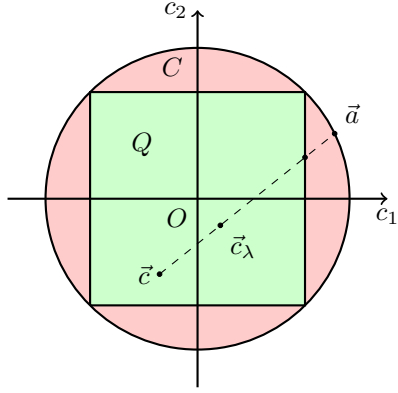
\begin{figure}[h]
\centering
\begin{tikzpicture}[domain=0:4]
\draw[thick,fill=red!20] (0,0) circle (2); \draw[thick,fill=green!20]
(-1.414,-1.414)--(1.414,-1.414)--(1.414,1.414)--(-1.414,1.414)--(-1.414,-1.414);
\draw[thick] (-1,1)node[anchor=north west]{$Q$}; \draw[thick]
(-0.6,2)node[anchor=north west]{$C$}; \draw[->,
thick](-2.5,0)--(2.5,0)node[anchor=north]{$c_1$};
\draw[->, thick](0,-2.5)--(0,2.5)node[anchor=east]{$c_2$}; \draw[thick]
(0,0)node[anchor=north east]{$O$};
\draw[very thick] (1.804,0.863)node[anchor=south west]{$\vec {a}$};
\draw[dashed] (-0.5,-1)--(1.804,0.863); \draw[fill=black] (-0.5,-1) circle (0.03);
\draw[thick] (-0.5,-1)node[anchor=east]{$\vec {c}$};
\draw[fill=black] (1.414,0.548) circle (0.03);
\draw[fill=black] (1.804,0.863) circle (0.03);
\draw[fill=black] (0.3,-0.353) circle (0.03);
\draw[thick] (0.3,-0.353)node[anchor=north west]{$\vec {c}_\lambda$};
\end{tikzpicture}
\caption{The  existence disk \eqref{diskC} for the observables $\Ao_{\vec{c}}$ and $\Bo_{\vec{c}}$, and their compatibility square \eqref{squareQ}. The disk is the domain of the function $D$ defined in \eqref{eq:defDR2}, and the square is the subset over which $D$ is minimized in \eqref{eq:minD}.}\label{c+q}
\end{figure}

We are now ready to prove our main result for the case of $\Ao$ and $\Bo$ being two arbitrary spin-1/2 components.
Indeed, the key point is that, by convexity arguments, the minimization of the function $D$ over the square $Q$ fixes $c_1 = 1/\sqrt2$. This considerably
simplifies the search of an optimal $D_2$-covariant bi-observable, as it
reduces the involved parameters from the number of three (see \eqref{gencov}) to a single one (see \eqref{eq:optispinobs}).

\begin{proof}[Proof of Theorem \ref{teo:incoNONORTHSPIN}]
By \eqref{eq:optispin}, we have
\begin{equation}\label{eq:minD}
\Icomp(\Ao,\Bo) = \min_{\vec{c}\in Q} D(\vec{c}) .
\end{equation}

Let us start with the case $\alpha\neq 0$. For $\vec{c}\in Q$, the observables $\Ao_{\vec{c}}$ and $\Bo_{\vec{c}}$ are compatible, and $D(\vec{c}) = \Div{\Ao,\Bo}{\Mo}$ for any of their joint measurements $\Mo$. By Theorem \ref{prop:2A}, item (\ref{Dfinite}), $D(\vec{c})$ is finite if and only if $\ker \Ao_{\vec{c}}(x) \subseteq \ker \Ao(x)$ and $\ker \Bo_{\vec{c}}(y) \subseteq \ker \Bo(y)$ for all $x,y$. In turn, this is equivalent to $\vec{c}$ not being any of the vertices $V$ of the square $Q$, since $\Ao_{\vec{c}} = \abs{\vec{c}}\Ao_{\vec{c}/\abs{\vec{c}}} + (1-\abs{\vec{c}}) \Uo_{\Xscr}$ and $\Bo_{\vec{c}} = \abs{\vec{c}}\Bo_{\vec{c}/\abs{\vec{c}}} + (1-\abs{\vec{c}}) \Uo_{\Yscr}$.
Therefore, in the minimum \eqref{eq:minD} we can assume that $\vec{c}\notin V$, and so $D(\vec{c})<+\infty$.

The mappings $\vec  c\mapsto \Ao_{\vec {c}}^\rho$ and $\vec  c\mapsto \Bo_{\vec
{c}}^\rho$ are affine on the disk $C$ for all $\rho\in \Sscr(\Hscr)$, which,
together with the convexity of the relative entropy, implies that the
mappings $\vec  c\mapsto \Srel{\Ao^\rho}{\Ao_{\vec {c}}^\rho}$ and $\vec
c\mapsto \Srel{\Bo^\rho}{\Bo_{\vec {c}}^\rho}$ are convex; hence, such are
their sum and the supremum $D$ in \eqref{eq:defDR2}. Moreover, we have already noticed that
$D(\vec c)=0$ if and only if $\vec {c} = \vec {a}$.

Making reference to Figure \ref{c+q}, let us take $\vec{c} \in Q\setminus V$ and introduce
the line segment joining $\vec  c$ and $\vec  a$: $\vec  c_\lambda=(1-\lambda)
\vec c + \lambda \vec  a$, $\lambda\in [0,1]$. By defining $D(\lambda)=D(\vec
c_\lambda)$, a simple convexity argument (see Lemma \ref{lem:nuovo} below) shows that the function $\lambda\mapsto D(\lambda)$ is finite and strictly decreasing on $[0,1]$. Then, the minimum of $D(\vec  c_\lambda)$ with respect to $\vec c_\lambda\in Q$
is attained where the line segment crosses the right side of the square, i.e.\ for
$(\vec c_\lambda)_1 = 1/\sqrt 2$. This is true for every point $\vec {c}$ in
the set $Q\setminus V$. Therefore, the points $\vec{c}$ minimizing \eqref{eq:minD} need to be on the right edge $\{1/\sqrt{2}\vec {i} + c_2\vec {j} : \abs{c_2}\leq 1/\sqrt{2}\}$  $=  \{\vec{c}(\gamma) : \gamma\in [-1,1]\}$ of the square $Q$; in the second equality, we have used the parametrization in \eqref{eq:rho(phi)c(gamma)}. In conclusion,
\begin{equation}\label{eq:optispin2bis}
\Icomp(\Ao,\Bo) = \min_{\gamma\in [-1,1]} D\left(\vec{c}(\gamma)\right).
\end{equation}
Note that \eqref{eq:optispin2bis} is true also in the case $\alpha = 0$ (compatible $\Ao$ and $\Bo$), for which we have $D(\vec{c}(1)) = 0$.

Now, for $\gamma\in [-1,1]$, define $\Mo_\gamma$ as in \eqref{eq:optispinobs}. Then, $\Mo_\gamma$ has the form \eqref{gencov} with $\vec{c} = \vec{c}(\gamma)$. In particular, since $\gamma$, $c_1 = 1/\sqrt{2}$ and $c_2 = \gamma/\sqrt{2}$ satisfy \eqref{2spineq}, item (\ref{itemM}) of Proposition \ref{prop:cov} implies that $\Mo_\gamma$ is a POVM, and $\Mo_{\gamma\,[1]} = \Ao_{\vec{c}(\gamma)}$ and $\Mo_{\gamma\,[2]} = \Bo_{\vec{c}(\gamma)}$. Equation \eqref{eq:dopo_optispin2} then follows from the definition \eqref{statedependentdivergence} of the error function. Moreover, by \eqref{eq:dopo_optispin2} and \eqref{eq:defDR2}, we have $\Div{\Ao,\Bo}{\Mo_\gamma} = D(\vec{c}(\gamma))$, hence $\Mo_\gamma\in\Mcomp(\Ao,\Bo)$ if and only if $\gamma$ attains the minimum in \eqref{eq:optispin2bis}.

In order complete the proof, it only remains to show that the minimization problem \eqref{eq:optispin2bis} is equivalent to \eqref{eq:optispin2}. Indeed, for $\rho = (\id + \vec{v}\cdot\vec{\sigma})/2$ and $\rho' = (\id + \vec{v}'\cdot\vec{\sigma})/2$, with $\vec{v} = (v_1\,,v_2\,,v_3)$ and $\vec{v}' = (v_1\,,v_2\,,0)$, we have $\sddiv{\Ao,\Bo}{\Mo_\gamma}(\rho) = \sddiv{\Ao,\Bo}{\Mo_\gamma}(\rho')$ by \eqref{eq:dopo_optispin2}. Therefore, by defining $\rho(\phi)$ as in \eqref{eq:rho(phi)c(gamma)},
\begin{align*}
D(\vec{c}(\gamma)) & = \max_{\substack{\rho\in\sh\\ \rho\text{ pure}}} \sddiv{\Ao,\Bo}{\Mo_\gamma}(\rho) & & \text{by \eqref{eq:dopo_optispin2}, \eqref{eq:defDR2}} \\
& = \max_{\phi\in[0,2\pi)} \sddiv{\Ao,\Bo}{\Mo_\gamma}(\rho(\phi)) .
\end{align*}
By inserting this expression into \eqref{eq:optispin2bis}, we get the desired equivalence.
\end{proof}

\begin{remark} \label{rem:covgamma}
The last proof shows that the bi-observables $\Mo_\gamma$ given by \eqref{eq:optispinobs} with $\gamma$ yielding the minimum in \eqref{eq:optispin2} actually exhaust all $D_2$-covariant elements in $\Mcomp(\Ao,\Bo)$. Indeed, for the most general $D_2$-covariant bi-observable $\Mo$ parameterized with $\gamma$ and $\vec{c}$ as in \eqref{gencov}, we have $\Div{\Ao,\Bo}{\Mo} = D(\vec{c}) > \Icomp(\Ao,\Bo)$ if $c_1\neq 1/\sqrt{2}$, or, equivalently, $\Mo\neq\Mo_\gamma$. However, it is not clear whether any optimal bi-observables needs to be $D_2$-covariant, and, if this is the case, the minimum \eqref{eq:optispin2} is attained at a unique $\gamma$.
\end{remark}

In the proof Theorem \ref{teo:incoNONORTHSPIN}, we have made use of the following lemma, which will turn out  useful also later.
\begin{lemma}\label{lem:nuovo}
Let $a\leq 0$, and suppose $D:[a,1]\to [0,+\infty]$ is a convex function such that $D(1) = 0$ and $D(0)<+\infty$. Then, $D$ is nonincreasing on the interval $[a,0]$, and it is finite and strictly decreasing on $[0,1]$.
\end{lemma}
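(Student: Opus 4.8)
The plan is to prove the three assertions of the lemma separately by elementary convexity, the organizing observation being that $D\ge 0$ together with $D(1)=0$ forces the right endpoint to be a global minimizer of $D$ on $[a,1]$. Finiteness on $[0,1]$ and nonincreasingness follow cleanly from the stated hypotheses; the word \emph{strictly} is the delicate point and will be the main obstacle.

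First I would dispose of the two easy claims. For finiteness on $[0,1]$, write any $t\in[0,1]$ as $t=(1-t)\cdot 0+t\cdot 1$ and use convexity with $D(1)=0$ to get $D(t)\le (1-t)D(0)+t\,D(1)=(1-t)D(0)$, finite since $D(0)<+\infty$; this is the only place the hypothesis $D(0)<+\infty$ enters, and it guarantees that all chord slopes of $D$ on $[0,1]$ are well defined. For monotonicity, note that $D(x)\ge 0=D(1)$ for every $x$ makes $1$ a minimizer, and a convex function is nonincreasing to the left of any minimizer: for $a\le x<y\le 1$ set $\mu=(y-x)/(1-x)\in(0,1]$, so $y=(1-\mu)x+\mu\cdot 1$, and convexity gives $D(y)\le(1-\mu)D(x)+\mu D(1)=(1-\mu)D(x)\le D(x)$, the last step using $D(x)\ge 0$ (and holding trivially if $D(x)=+\infty$). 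Hence $D$ is nonincreasing throughout $[a,1]$, in particular on $[a,0]$, and a fortiori nonincreasing on $[0,1]$.

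The remaining claim, strict decrease on $[0,1]$, is the genuine obstacle. Here I would use the monotonicity of chord slopes: for $0\le s<t<1$ the three-chord inequality gives $\frac{D(t)-D(s)}{t-s}\le\frac{D(1)-D(t)}{1-t}=\frac{-D(t)}{1-t}$, whence $D(s)-D(t)\ge\frac{t-s}{1-t}\,D(t)\ge 0$, with strict inequality \emph{exactly} when $D(t)>0$; comparing any $s<1$ with the endpoint then gives $D(s)>0=D(1)$ under the same proviso. The crux is that strictness on the whole of $[0,1]$ is equivalent to $D$ being positive on $[0,1)$ (i.e.\ to $1$ being the \emph{unique} minimizer), and this positivity is not a consequence of $D(1)=0$ and $D(0)<+\infty$ alone: for instance $D(x)=\max\{0,1-2x\}$ satisfies every stated hypothesis yet is flat on $[1/2,1]$. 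I would therefore make explicit that the strict-decrease conclusion requires the standing assumption that $D$ vanishes only at $1$ on $[0,1]$ — precisely the situation in which the lemma is invoked, where $D(\vec c_\lambda)=0$ holds only for $\lambda=1$ — and under that assumption the displayed slope inequality closes the argument at once.
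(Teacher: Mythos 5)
Your handling of the two uncontroversial claims coincides with the paper's argument: the paper's entire proof rests on the single chord inequality $D(y)\leq\frac{1-y}{1-x}\,D(x)$ for $a\leq x<y<1$ (its \eqref{eq:Dconv}), which is exactly your computation with $\mu=(y-x)/(1-x)$, and finiteness on $[0,1]$ is obtained there precisely as you obtain it, by taking $x=0$.

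Your objection to the strict-decrease claim is well founded, and in fact it exposes a genuine defect in the paper's own proof. From $D(y)\leq\frac{1-y}{1-x}\,D(x)$ with $0\leq x<y<1$, the paper asserts that this ``yields $D(y)<D(x)$'' and then remarks that ``the latter inequality implies $D(x)>0$''; but the strict inequality follows from \eqref{eq:Dconv} only when $D(x)>0$ is already known, so the argument is circular exactly at the point you flag. Your counterexample $D(x)=\max\{0,\,1-2x\}$ is valid: it is convex, nonnegative, satisfies $D(1)=0$ and $D(0)=1<+\infty$, yet is identically zero on $[1/2,1]$, so the lemma as stated is false and cannot be repaired from the stated hypotheses alone. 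The supplementary hypothesis you isolate --- that $1$ is the unique zero of $D$ on $[0,1]$ --- is indeed satisfied at every point where the paper invokes the lemma: in the proof of Theorem \ref{teo:incoNONORTHSPIN} one has $D(\vec{c}_\lambda)=0$ if and only if $\vec{c}_\lambda=\vec{a}$, i.e.\ $\lambda=1$; in the proof of Theorem \ref{teo:incoORTH333SPIN} the marginals $\Xo_c$, $\Yo_c$, $\Zo_c$ reproduce $\Xo$, $\Yo$, $\Zo$ only at $c=1$; and in the proof of Theorem \ref{teo:incoMUB} likewise $D(\lambda)=0$ forces $\lambda=1$. With that hypothesis added, your three-chord estimate $D(s)-D(t)\geq\frac{t-s}{1-t}\,D(t)>0$ for $0\leq s<t<1$ (where finiteness, which you secured first, keeps the slopes well defined) closes the argument, and the endpoint comparison $D(s)>0=D(1)$ is immediate. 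So your version of the lemma, with the explicit positivity assumption on $[0,1)$, is the correct statement, and all of the paper's applications go through unchanged under it.
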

\begin{proof}
For $a\leq x<y<1$, the convexity of $D$ implies
\begin{align}\label{eq:Dconv}
D(y) & \leq \frac{1-y}{1-x} D(x) + \frac{y-x}{1-x} D(1)
= \frac{1-y}{1-x} D(x) .
\end{align}
In particular, $D(y)\leq D(x)$, and, choosing $x=0$, $D(y)<+\infty$ for all $y\in (0,1)$. Then, another application of \eqref{eq:Dconv}, now with $0\leq x<y<1$, yields $D(y)<D(x)$. Since the latter inequality implies $D(x)>0$, for all $x\in [0,1)$, its extension to $y=1$ is clear.
\end{proof}

\subsection{The case of two orthogonal components} \label{app:orth}
When the target observables are the orthogonal spin-1/2 components $\Xo$ and $\Yo$ in \eqref{def:XY}, the symmetries of our system increase from $D_2$ to the enlarged dihedral group $D_4$. Here we recall that $D_4\subset SO(3)$ is the order $8$
group of the $90^\circ$ rotations around the $\vec  k$-axis, together with the
$180^\circ$ rotations around $\vec  i$, $\vec  j$, $\vec  n$ and $\vec m$;
clearly, $D_2\subset D_4$. Now, the two rotations $S_{D_4}=\{R_{\vec
i}(\pi),R_{\vec  n}(\pi)\}$ generate $D_4$; for instance, we have $R_{\vec
j}(\pi)=R_{\vec  n}(\pi)R_{\vec  i}(\pi)R_{\vec  n}(\pi)$, $R_{\vec
m}(\pi)=R_{\vec  i}(\pi)R_{\vec  n}(\pi)R_{\vec  i}(\pi)$, $R_{\vec
k}(\pi/2)=R_{\vec  m}(\pi)R_{\vec  j}(\pi)$.

The action of the group element $R_{\vec  n}(\pi)$ on $\Xscr\times \Yscr$, $\Hscr$,
$\Ao=\Xo$ and $\Bo=\Yo$ is still given by \eqref{2rot} and \eqref{Un};
we have already seen that these actions satisfy condition (ii) of
Theorem \ref{prop:invarD^}. Further, by introducing the natural actions
\begin{equation}\label{trasf+}
R_{\vec  i}(\pi)\, (x,y) = (x,-y), \qquad U\big(R_{\vec
i}(\pi)\big)=\rme^{-\rmi\pi \,{\vec  i}\cdot \vec  \sigma/2}\equiv
-\rmi\, {\vec  i}\cdot \vec  \sigma,
\end{equation}
we have
\begin{equation*}
U\big(R_{\vec  i}(\pi)\big) \Xo(x) U\big(R_{\vec  i}(\pi)\big)^*
= \Xo(x),\qquad  U\big(R_{\vec  i}(\pi)\big) \Yo(y) U\big(R_{\vec  i}(\pi)\big)^*
= \Yo(-y).
\end{equation*}
In particular, we see that $R_{\vec  i}(\pi)$ fulfills condition (i) of the
same theorem. Therefore, all $g\in S_{D_4}$ satisfy the hypotheses of Theorem
\ref{prop:invarD^}.

Again, in view of Remark \ref{rem:covariant}, now we look for the general expression of
a $D_4$-covariant bi-observable.

\begin{proposition}\label{cor:cov}
Let the dihedral group $D_4$ act on $\Xscr\times\Yscr$ and $\Hscr$ by
\eqref{2rot} and \eqref{trasf+}. Then, the most general $D_4$-covariant
bi-observable on $\Xscr\times\Yscr$ is given by \eqref{gencov} with $\gamma=0$,
$c_2=0$ and $\abs{c_1}\leq 1/\sqrt 2$, that is,
\begin{equation}\label{gencovD4}
\Mo(x,y)
= \frac{1}{4}\left[\id + c_1\left(x\sigma_1 + y\sigma_2\right) \right] , \qquad \abs{c_1}\leq 1/\sqrt 2 .
\end{equation}
\end{proposition}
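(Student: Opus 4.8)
The plan is to exploit the inclusion $D_2\subset D_4$. Any $D_4$-covariant bi-observable is in particular $D_2$-covariant, so by Proposition \ref{prop:cov}, item (\ref{itemM}), it must already be of the form \eqref{gencov} for some $\gamma\in\Rbb$ and $\vec c = c_1\vec i + c_2\vec j$ satisfying the positivity constraint \eqref{2spineq}. It then only remains to impose covariance under the one extra generator $R_{\vec i}(\pi)$: since $S_{D_4} = \{R_{\vec i}(\pi), R_{\vec n}(\pi)\}$ generates $D_4$, and covariance is preserved under composition of group elements (if $g_1\Mo=\Mo$ and $g_2\Mo=\Mo$ then $(g_1g_2)\Mo=g_1(g_2\Mo)=\Mo$), covariance under a generating set is equivalent to covariance under the whole group. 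The $R_{\vec n}(\pi)$-covariance is already built into \eqref{gencov}, so the entire problem reduces to one linear condition.

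The second step is to turn the covariance condition under $g = R_{\vec i}(\pi)$ into a constraint on $\gamma, c_1, c_2$. Using the action \eqref{trasf+}, namely $g(x,y) = (x,-y)$ and $U(g) = -\rmi\sigma_1$, the defining equation $U(g)\Mo(x,y)U(g)^* = \Mo(g(x,y))$ becomes $\sigma_1\Mo(x,y)\sigma_1 = \Mo(x,-y)$. Since $\sigma_1$ commutes with $\id$ and with $\sigma_1$ but anticommutes with $\sigma_2$ (so that $\sigma_1\sigma_2\sigma_1 = -\sigma_2$), conjugation by $\sigma_1$ leaves the $\id$- and $\sigma_1$-terms of \eqref{gencov} unchanged and flips the sign of the $\sigma_2$-term. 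Comparing this with $\Mo(x,-y)$, obtained from \eqref{gencov} by the substitution $y\mapsto -y$, and matching the coefficients of $\id$, $\sigma_1$ and $\sigma_2$ separately, I expect the $\id$-coefficient to force $\gamma=0$, and both the $\sigma_1$- and $\sigma_2$-coefficients to force $c_2=0$.

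Finally, I would feed $\gamma=0$ and $c_2=0$ back into the positivity inequalities \eqref{2spineq}: both sides collapse to $\sqrt2\,\abs{c_1}\leq 1$, i.e.\ $\abs{c_1}\leq 1/\sqrt2$, which is exactly the constraint appearing in \eqref{gencovD4}. No fresh positivity analysis is needed, because the POVM conditions were already solved in Proposition \ref{prop:cov}. The argument is entirely mechanical; the only points requiring care are getting the conjugation sign $\sigma_1\sigma_2\sigma_1 = -\sigma_2$ right and correctly reducing \eqref{2spineq} once $\gamma$ and $c_2$ vanish, so I do not anticipate any genuine obstacle beyond this bookkeeping.
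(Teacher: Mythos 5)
Your proposal is correct and follows essentially the same route as the paper's proof: both start from the general $D_2$-covariant form \eqref{gencov} guaranteed by Proposition \ref{prop:cov}, impose covariance under the single extra generator $R_{\vec i}(\pi)$ (conjugation by $\sigma_1$ combined with the sign flip $y\mapsto -y$), match coefficients to force $\gamma=0$ and $c_2=0$, and observe that \eqref{2spineq} then collapses to $\abs{c_1}\leq 1/\sqrt 2$. Your explicit remarks that covariance under a generating set suffices and that the projective phases cancel under conjugation are details the paper leaves implicit, but they do not change the argument.
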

\begin{proof}
By applying the extra transformation \eqref{trasf+} to the $D_2$-covariant
bi-observable \eqref{gencov} we get
\begin{align*}
R_{\vec  i}(\pi)\Mo(x,y) & = U\big(R_{\vec  i}(\pi)\big)\Mo(x,-y)U\big(R_{\vec  i}(\pi)\big)^*
\\
& = \frac{1}{4}\left[ \left(1 - \gamma xy\right)\id + \left(c_1 x -c_2 y\right)\sigma_1
- \left(c_2 x -c_1 y\right)\sigma_2\right].
\end{align*}
In order to have covariance also under this transformation, it must be $\gamma
=0$ and $c_2=0$; then,  condition \eqref{2spineq} reduces to the  inequality  in
\eqref{gencovD4}. \end{proof}

We are now ready to prove our main theorem for two orthogonal spin components.

\begin{proof}[Proof of Theorem \ref{teo:incoORTHSPIN}]
By Theorem \ref{prop:invarD^}, there is at least one $D_4$-covariant bi-observable $\Mo\in\Mcomp(\Xo,\Yo)$, which is necessarily of the form \eqref{gencovD4} by Proposition \ref{cor:cov}. Comparing it with \eqref{eq:optispinobs}, we see that they coincide if and only if $c_1 = 1/\sqrt{2}$ and $\gamma = 0$, and in this case both of them equal $\Mo_0$ in \eqref{orthcov}. Thus, by Theorem \ref{teo:incoNONORTHSPIN}, $\gamma = 0$ solves the minimization problem \eqref{eq:optispin2}, and $\Mo_0$ is the unique $D_4$-covariant element in $\Mcomp(\Xo,\Yo)$. In particular, by \eqref{eq:optispin2} and \eqref{eq:dopo_optispin2} we have
\begin{equation}\label{eq:cinc_con_sv}
\begin{gathered}
\Icomp(\Xo,\Yo) = \max_{\phi\in [0,2\pi)} \sddiv{\Ao,\Bo}{\Mo_0}(\rho(\phi)) \\
\sddiv{\Ao,\Bo}{\Mo_0}(\rho(\phi)) = \tilde s(\cos \phi)+\tilde s(\sin \phi),
\end{gathered}
\end{equation}
where we have introduced the function
$$
\tilde{s}(v)=\frac 12 \sum_{k=\pm 1}(1+kv)\log \frac{1+kv}{1+kv/\sqrt 2}, \qquad \abs{v}\leq 1 .
$$
In \eqref{eq:cinc_con_sv}, the best way to maximize $\tilde s(\cos \phi)+\tilde s(\sin \phi)$ is by means of a suitable integral
representation. Namely, by direct inspection, we have
\begin{equation*}
\tilde s(v)= \frac 1{2\ln 2}\int_{\frac{1}{\sqrt 2}}^1 \frac {2v^2(1-\lambda)}{1-\lambda^2v^2}\,\rmd \lambda.
\end{equation*}
Then, by differentiation and simple computations, we get
\[
f(\phi)=\frac{\rmd \ }{\rmd \phi}\left(\tilde s(\cos \phi)+ \tilde s(\sin \phi)\right)
= -\frac{\sin(4\phi)}{2\ln 2}\int_{1/\sqrt 2}^1 \frac {\lambda^2(1-\lambda)(2-\lambda^2)}{\left(1-\lambda^2(\sin\phi)^2\right)^2
\left(1-\lambda^2(\cos\phi)^2\right)^2}\,\rmd \lambda.
\]
The integrand is nonnegative for all $\lam\in[1/\sqrt{2},1]$ and $\phi\in [0,2\pi)$. We then see that $f(\phi)<0$ for $0<\phi <\pi/4$, $f(\pi/4)=0$,
$f(\phi)>0$ for $\pi/4<\phi<\pi/2$. So, for $\phi\in [0,\pi/2]$, the point $\phi=\pi/4$ gives a minimum
of $\tilde s(\cos \phi)+ \tilde s(\sin \phi)$, while we have two equal maxima at $\phi=0$ and
$\phi=\pi/2$; as $\tilde{s}$ is a continuous even function on $[-1,1]$, the maximum \eqref{eq:cinc_con_sv} is attained at $\phi=0,\pi/2,\pi,3\pi/2$. Such angles correspond to $\rho(\phi)$ being the eigenprojections of $\sigma_1$ or $\sigma_2$; this gives the first equality in \eqref{corth}. Then, in the last two ones, the numerical values follow by direct computation.

Finally, we still have to prove the uniqueness of $\Mo_0$ in the set $\Mcomp(\Xo,\Yo)$. Let $\Mo$ be any bi-observable in $\Mcomp(\Xo,\Yo)$. By Corollary
\ref{prop:invarI^}, its covariant version $\Mo_{D_4}$ is still in
$\Mcomp(\Xo,\Yo)$, and hence $\Mo_{D_4} = \Mo_0$ since $\Mo_0$ is the unique $D_4$-covariant element of $\Mcomp(\Xo,\Yo)$.
Definition \eqref{eq:covariantization} implies $g\Mo(x,y)\leq
|D_4|\Mo_{D_4}(x,y)$ for all $g$ and $x,y$, hence in particular $\Mo(x,y)\leq
|D_4|\Mo_{D_4}(x,y) = 8 \Mo_0(x,y)$ for all $x,y$. Since $\Mo_0(x,y)$ has rank
$1$, it must then be $\Mo(x,y) = f(x,y) \Mo_0(x,y)$, $\forall x,y$,
for some nonnegative coefficients $f(x,y)$. Writing $f$ in the linear basis
$1$, $x$, $y$, $xy$ of $\Cbb^{\Xscr\times\Yscr}$, the normalization constraint
$\sum_{x,y} \Mo(x,y) =\sum_{x,y}f(x,y)\Mo_0(x,y) =  \id$ gives
$f(x,y)=1+\epsilon xy$ for some real parameter $\epsilon$. For all $x,y$, we have the positivity constraint $\Mo(x,y) =f(x,y)\Mo_0(x,y)\geq0$, which implies $f(x,y)\geq 0$; this
gives $-1\leq\epsilon\leq1$.

Summing up, if $\Mo\in\Mcomp(\Xo,\Yo)$, then $\Mo(x,y) =(1+\epsilon
xy)\Mo_0(x,y)$ for some $\epsilon\in[-1,1]$. Let us show that the only possible
parameter is $\epsilon=0$. Indeed, the marginals of $\Mo$ are
$$
\Mo_{[1]} = \Ao_{\vec{c}(\epsilon)}, \qquad \Mo_{[2]} = \Bo_{\vec{c}(\epsilon)}, \qquad \text{with} \qquad \vec{c}(\epsilon) = \frac{\vec{i} + \epsilon \vec{j}}{\sqrt{2}}.
$$
Their distributions in the state $\rho_{e}=\left(\id + \sigma_1\right)/2$ are
$$
\Mo_{[1]}^{\rho_{e}} = \frac{1}{\sqrt{2}} \, \delta_1 + \left(1 - \frac{1}{\sqrt{2}}\right) u_{\Xscr} ,
\qquad \Mo_{[2]}^{\rho_{e}} = \frac{\epsilon}{\sqrt{2}} \, \delta_1 + \left(1 - \frac{\epsilon}{\sqrt{2}}\right) u_{\Yscr} .
$$
On the other hand, we have $\Xo^{\rho_{e}} = \delta_1$ and $\Yo^{\rho_{e}} = u_{\Yscr}$,
so that
\begin{multline*}
\Icomp(\Xo,\Yo)=\Div{\Xo,\Yo}{\Mo}  \geq \sddiv{\Xo,\Yo}{\Mo}(\rho_{e})=\Srel{\Xo^{\rho_{e}}}
{\Mo_{[1]}^{\rho_{e}}}+ \Srel{\Yo^{\rho_{e}}}{\Mo_{[2]}^{\rho_{e}}}\\
{}=\log\frac{2\sqrt 2}{1 +\sqrt2}+\Srel{\Yo^{\rho_{e}}}{\Mo_{[2]}^{\rho_{e}}}
=\Icomp(\Xo,\Yo)+ \Srel{\Yo^{\rho_{e}}}{\Mo_{[2]}^{\rho_{e}}},
\end{multline*}
which implies $\Srel{\Yo^{\rho_{e}}}{\Mo_{[2]}^{\rho_{e}}}=0$. Hence,
$\Yo^{\rho_{e}}=\Mo_{[2]}^{\rho_{e}}$, and $\epsilon=0$ then follows.
\end{proof}

\subsection{A lower bound for the incompatibility degree}\label{sec:lb}
In order to compute the lower bound \eqref{lbspin}, we have to minimize the following quantity over $\gamma$:
\begin{multline}\label{s:c2}
\sddiv{\Ao,\Bo}{\Mo_\gamma}(\rho_e) = \log \frac {2} {1+ (a_1 + a_2\gamma)/\sqrt 2 } + \frac{1+ 2a_1a_2}{2}\,\log \frac {1+ 2a_1a_2}{1+ (a_1\gamma + a_2)/\sqrt 2 } \\
{} + \frac{1-2a_1a_2}{2}\,\log \frac {1- 2a_1a_2}{1- (a_1\gamma + a_2)/\sqrt
2 }\,.
\end{multline}
By setting $ \ell=(a_1\gamma + a_2)/\sqrt 2$ and $f(\ell)= (\ln 2)
\sddiv{\Ao,\Bo}{\Mo_\gamma}(\rho_e)$, we get
\begin{multline}\label{f(x)}
f(\ell)= \ln \frac {2\sqrt 2\, a_1} {\sqrt2 \, a_1+\sqrt 2 \, a_2 \ell +a_1^{\,2}-a_2^{\,2}}\\
{}+ \frac12\left(1+2a_1a_2\right)\ln \frac {1+2a_1a_2} {1+\ell} +
\frac12\left(1- 2a_1a_2\right)\ln \frac {1-2a_1a_2}{1-\ell}\,,
\end{multline}
whose derivative is
\[
f'(\ell)=-\frac{\sqrt 2\, a_2}{\sqrt2 \, a_1+\sqrt 2 \, a_2 \ell
+a_1^{\,2}-a_2^{\,2}} +\frac{\ell-2a_1a_2} {1-\ell^2}\,.
\]

\begin{remark}\label{rem:app}
For $\alpha=\pi/2$, i.e.\ $a_1=1$ and $a_2=0$, we immediately get that the expression \eqref{f(x)} has a unique minimum at $\ell=0$, which gives  $\gamma=0$ and the value \eqref{corth} for the incompatibility degree.
\end{remark}

For $\alpha\neq \pi/2$, the zeros of $f'(\ell)$ satisfy the algebraic
equation $\ell^2+u\ell/(\sqrt2\,a_2)- 1-u=0$,
where $u$ is defined in \eqref{def:u}. By solving the algebraic equation and
studying the sign of the derivative, we find that the minimum of \eqref{f(x)}
is at the point \eqref{ellmin} and that the corresponding value of $\gamma$ is
\eqref{c2min}. By using this result and $2a_1a_2=\cos\alpha$, we get the lower
bound \eqref{s:c2min}.

\subsection{Incompatibility degree and optimal measurements for three orthogonal spin-1/2 components}\label{App:B3}

Here the target observables are $\Xo$, $\Yo$ and $\Zo$ defined in \eqref{def:XY} and \eqref{eq:Z}. Their symmetry group is the order $24$ octahedron group $O\subset SO(3)$, generated by the
$90^\circ$ rotations around the three coordinate axes: $S_{O}=\{R_{\vec
i}(\pi/2),\,R_{\vec  j}(\pi/2),\,R_{\vec  k}(\pi/2)\}$. Note that for the
dihedral groups introduced before we have $D_2\subset D_4
\subset O$. Let us denote the three generators of $O$ by $g_1=R_{\vec
i}(\pi/2)$, $g_2=R_{\vec j}(\pi/2)$, $g_3=R_{\vec  k}(\pi/2)$. By using again
the spin-1/2 projective representation of $SO(3)$, which we now restrict to
$O$, we have the relations
\begin{equation*}\begin{split}
U_{g_1}
\Xo(x) U_{g_1}^{\;*} = \Xo(x), \qquad
&U_{g_1}\Yo(y) U_{g_1}^{\;*}  = \Zo(y), \qquad \ \
U_{g_1} \Zo(z) U_{g_1}^{\;*}= \Yo(-z),
\\
U_{g_2}\Xo(x)U_{g_2}^{\;*}  = \Zo(-x), \qquad
&U_{g_2} \Yo(y) U_{g_2}^{\;*} = \Yo(y), \qquad \ \ U_{g_2}\Zo(z)U_{g_2}^{\;*}  = \Xo(z),
\\
U_{g_3} \Xo(x) U_{g_3}^{\;*}= \Yo(x), \qquad  &U_{g_3}  \Yo(y) U_{g_3}^{\;*} = \Xo(-y),
\qquad  U_{g_3} \Zo(z)U_{g_3}^{\;*} = \Zo(z).
\end{split}
\end{equation*}
Moreover, the natural action of $O$ on the outcome space $\Xscr\times
\Yscr\times \Zscr = \{+1,-1\}^3$ is
\[
g_1\,(x,y,z)=(x,-z,y),\qquad g_2\,(x,y,z)=(z,y,-x),
\qquad g_3\,(x,y,z)=(-y,x,z),
\]
and the action on the index set is
\[
g_ii=i, \quad g_12=3, \quad g_13=2,\quad g_21=3, \quad g_23=1, \quad g_31=2,\quad g_32=1.
\]
Then, the hypotheses of Theorem \ref{prop:invarD+I^n} are satisfied by setting
\[
f_{g_1,1}(x)=x, \quad f_{g_1,2}(y)=y, \quad f_{g_1,3}(z)=-z,  \quad f_{g_2,1}(x)=-x,
\]
\[ f_{g_2,2}(y)=y, \quad f_{g_2,3}(z)=z, \quad f_{g_3,1}(x)=x, \quad
f_{g_3,2}(y)=-y, \quad  f_{g_3,3}(z)=z.
\]
Therefore, we can apply Theorem \ref{prop:invarD+I^n} in order to prove the main result of Section \ref{ex:3spin}.
\begin{proof}[Proof of Theorem \ref{teo:incoORTH333SPIN}]
By similar arguments as in the proofs of Propositions \ref{prop:cov} and \ref{cor:cov}, one can prove that the most general $O$-covariant
tri-observable in $\Mscr(\Xscr\times\Yscr\times\Zscr)$ has the form
\begin{equation}\label{3cov}
\Mo(x,y,z) = \frac{1}{8} \left[\id+c(x\sigma_1+y\sigma_2+z\sigma_3)\right] \qquad\text{with}
\qquad \abs c \leq \frac 1 {\sqrt 3}\,.
\end{equation}
Writing its marginals as
\begin{equation*}
\Mo_{[1]} = \Xo_c , \qquad \Mo_{[2]} = \Yo_c , \qquad \Mo_{[3]} = \Zo_c ,
\end{equation*}
we have
$$
\Div{\Xo,\Yo,\Zo}{\Mo} = \max_{\substack{\rho\in\Sscr(\Hscr)\\
    \rho\ \mathrm{pure}}} \left[\Srel{\Xo^\rho}{\Xo_c^\rho} + \Srel{\Yo^\rho}{\Yo_c^\rho} + \Srel{\Zo^\rho}{\Zo_c^\rho}\right]
$$
for all $c\in[-1/\sqrt{3},1/\sqrt{3}]$. Denote by $D(c)$ the right hand side of the previous equation; then, the function $D$ can be extended to all $c$'s such that $\Xo_c$, $\Yo_c$ and $\Zo_c$ define three POVMs on $\{-1,+1\}$. In particular, it is naturally defined also in the interval $(1/\sqrt{3},1]$, where $\Xo_c$, $\Yo_c$ and $\Zo_c$ are the equally noisy versions of the sharp observables $\Xo$, $\Yo$ and $\Zo$ (cf.\ \eqref{eq:defOlam}). We thus obtain a function $D:[-1/\sqrt{3},1]\to [0,+\infty]$. The mappings $c\mapsto\Xo^\rho_c$, $c\mapsto\Yo^\rho_c$ and $c\mapsto\Zo^\rho_c$ are affine on the interval $[-1/\sqrt{3},1]$, which, together with the convexity of the relative entropy, implies that such are the sum and the supremum in $D$.
Moreover, $D(0) = \Div{\Xo,\Yo,\Zo}{\Uo_{\Xscr\times\Yscr\times\Zscr}} < +\infty$ and $D(1) = 0$. Then, by Lemma \ref{lem:nuovo}, the divergence $\Div{\Xo,\Yo,\Zo}{\Mo}$, with $\Mo$ given by \eqref{3cov}, attains its unique minimum when $c = 1/\sqrt{3}$; for such $c$, $\Mo = \Mo_0$ defined in \eqref{eq:M0x3}. Since $\Mcomp(\Xo,\Yo,\Zo)$ contains at least one $O$-covariant tri-observable by Theorem \ref{prop:invarD+I^n}, then $\Mo_0$ is the unique $O$-covariant element in $\Mcomp(\Xo,\Yo,\Zo)$. The fact that also $\Mo_1$ given by \eqref{eq:tildeM0x3} is optimal follows since $\Mo_0$ and $\Mo_1$ have the same marginals (see \eqref{eq:noisy3margins}).

For the optimal approximate joint measurements $\Mo_0$ and $\Mo_1$, we have
\begin{multline}\label{eq:max333}
\Icomp(\Xo,\Yo,\Zo)  = \Div{\Xo,\Yo,\Zo}{\Mo_i} = \max_{\substack{\rho\in\Sscr(\Hscr)\\ \rho\ \mathrm{pure}}} \sddiv{\Xo,\Yo,\Zo}{\Mo_i}(\rho) \\
{} = \max_{\substack{\rho\in\Sscr(\Hscr)\\ \rho\ \mathrm{pure}}} \left[ \Srel{\Xo^{\rho}}{\Xo^{\rho}_{1/\sqrt{3}}} + \Srel{\Yo^{\rho}}{\Yo^{\rho}_{1/\sqrt{3}}} + \Srel{\Zo^{\rho}}{\Zo^{\rho}_{1/\sqrt{3}}} \right] \\
{}= \max_{\substack{\phi\in [0,2\pi)\\ \theta\in [0,\pi)}} \left[ \tilde{s}(\cos\phi\sin\theta)+\tilde{s}(\sin\phi\sin\theta)+\tilde{s}(\cos\theta) \right] ,
\end{multline}
where we have used the parametrization $\rho = (\id + \cos\phi\sin\theta\,\sigma_1 + \sin\phi\sin\theta\,\sigma_2 + \cos\theta\,\sigma_3)/2$, inserted the marginals \eqref{eq:noisy3margins} of $\Mo_0$, and introduced the function
\begin{equation*}
\tilde{s}(v)=\frac 12 \sum_{k=\pm 1}(1+kv)\log \frac{1+kv}{1+kv/\sqrt 3} = \frac 1{2\ln 2}\int_{\frac{1}{\sqrt 3}}^1 \frac {2v^2(1-\lambda)}{1-\lambda^2v^2}\,\rmd \lambda, \qquad \abs{v}\leq 1 .
\end{equation*}
By using the integral representation of $\tilde{s}$,
\begin{multline*}
\frac{\partial \ }{\partial \phi}\bigl(\tilde{s} (\cos \phi\,\sin \theta) + \tilde{s} (\sin
\phi\,\sin \theta) + \tilde{s} (\cos \theta)\bigr) \\ {}= - \frac{\sin(4\phi)(\sin
\theta)^4}{2\ln 2} \int_{1/\sqrt 3}^1 \frac
{\lambda^2(1-\lambda)(2-\lambda^2)}{\left(1-\lambda^2v_1^{\,2}\right)^2
\left(1-\lambda^2v_2^{\,2}\right)^2}\,\rmd \lambda;
\end{multline*}
similar computations give the derivative with respect to $\theta$. By the same
arguments as in the case of two components, we obtain that in \eqref{eq:max333} the maximum is attained at all angles $\phi,\theta$ corresponding to $\rho$ being an eigenprojection of $\sigma_1$, $\sigma_2$ or $\sigma_3$. This fact and a final straightforward computation  give \eqref{corth333}.
\end{proof}

\begin{remark}\label{rem:B4}
The last proof actually shows that $\Mo_0$ given in \eqref{eq:M0x3} is the unique $O$-covari\-ant optimal approximate joint measurement of $\Xo$, $\Yo$ and $\Zo$.
\end{remark}

\section{Symmetries and proofs for two Fourier conjugate MUBs} \label{app:MUB}

The natural symmetry group for the two Fourier conjugate observable $\Qo$ and $\Po$ of \eqref{eq:defcoj} is the
group of the translations in the finite phase-space of the system, together with all its symplectic transformations; as usual, we identify the latter symplectic
group with the group $SL(2,\F)$ of the $2\times 2$ matrices with entries in
$\F$ and unit determinant. However, just a smaller subgroup of $SL(2,\F)$ will
be enough for us. Namely, for all $a\in\F_* =
\F\setminus\{0\}$, we denote by $\ma{d}(a)$ and
$\ma{f}(a)$ the $SL(2,\F)$-matrices
$$
\ma{d}(a) = \begin{pmatrix} a & 0 \\ 0 & a^{-1} \end{pmatrix},
\qquad \ma{f}(a) = \begin{pmatrix} 0 & a \\ -a^{-1} & 0 \end{pmatrix} .
$$
Then, the set $H=\{\ma{d}(a),\ma{f}(a)\mid a\in\F_*\}$ is an order $2(d-1)$
subgroup of the order $d(d^2-1)$ group $SL(2,\F)$. It naturally acts by left
multiplication on the additive abelian  group $V=\F^2$ of the $\F$-valued $2$-entries
column vectors $\vec u = (u_1,u_2)^T$. We can then form the semidirect product
group $G = H\rtimes V$, whose composition law is $(\ma{h},\vec u) (\ma{k},\vec
v) = (\ma{h}\ma{k},\ma{k}^{-1}\vec u+\vec v)$.

The group $G$ has a natural left action on the joint outcome space
$\Xscr\times\Yscr = \F^2$: by writing the points of $\Xscr\times\Yscr = \F^2$ as columns, we have
\begin{equation}\label{eq:actionG}
(\ma{h},\vec u)\begin{pmatrix}x\\y\end{pmatrix} = \ma{h} \begin{pmatrix}x+u_1\\y+u_2\end{pmatrix}.
\end{equation}
In
this context, the joint outcome space $\Xscr\times\Yscr$ is called the {\em
finite phase-space} of the system, and the subgroup $V\subset G$ is the group
of its translations $(\ma{d}(1),\vec u)$. The elements $(\ma{d}(a),\vec 0)\in
H$ are diagonal symplectic transformations, while $(\ma{f}(1),\vec 0)$ just
reverses the components $x$ and $y$ changing the sign of $x$ (see
e.g.~\cite{CaScTo15} for more details on finite phase-spaces and their
symmetries).

On the other hand, the group $G$ has also a natural projective unitary
representation on $\Hscr$. In order to describe it, we first introduce the
following unitary operators:
\begin{align*}
W(\vec u)\phi(z) & = \rme^{\frac{2\pi \rmi}{p}\,\tr{u_2(z-u_1)}} \phi(z-u_1), \qquad \forall \vec u\in\F^2, \\
D(a)\phi(z) & = \phi(a^{-1}z), \qquad \forall a\in\F_* = \F\setminus\{0\} .
\end{align*}
The operators $W(\vec u)$ constitute the {\em Weyl operators} associated with
the phase-space translations, and $D(a)$ are the {\em squeezing operators} by
the nonzero scalars. Collected together with the Fourier transform $F$, they
satisfy the composition rules
\begin{align*}
& W(\vec u)W(\vec v) = \rme^{\frac{2\pi\rmi}{p}\,\tr{u_2 v_1}} W(\vec u+\vec v), & \qquad & D(a)D(b) = D(ab), \\
& F^2 = D_{-1}, & \qquad & FD(a)F^* = D(a^{-1}), \\
& D(a)W(\vec u)D(a)^* = W(\ma{d}(a)\vec u), & \qquad & FW(\vec u)F^* =
\rme^{-\frac{2\pi \rmi}{p}\,\tr{u_1 u_2}} W(\ma{f}(1)\vec u).
\end{align*}
Setting
$$
U(\ma{d}(a),\vec u) = D(a)W(\vec u), \qquad \qquad U(\ma{f}(a),\vec u) = D(a)FW(\vec u) ,
$$
we obtain a projective unitary representation of $G$ on $\Hscr$. It is easily
checked that
\begin{equation}\label{eq:covarQP}
\begin{aligned}
& U(\ma{d}(a),\vec u )\Qo(x)U(\ma{d}(a),\vec u )^* = \Qo(a(x+u_1)), \\
& U(\ma{d}(a),\vec u )\Po(y)U(\ma{d}(a),\vec u )^* = \Po(a^{-1}(y+u_2)), \\
& U(\ma{f}(a),\vec u )\Qo(x)U(\ma{f}(a),\vec u )^* = \Po(-a^{-1}(x+u_1)), \\
& U(\ma{f}(a),\vec u )\Po(y)U(\ma{f}(a),\vec u )^* = \Qo(a(y+u_2)) .
\end{aligned}
\end{equation}

The action \eqref{eq:actionG} satisfies conditions (\ref{8i}.a) /
(\ref{8ii}.a) of Theorem \ref{prop:invarD^}, with $S_G= G$. Moreover, by
\eqref{eq:covarQP} the two sharp observables $\Ao=\Qo$ and $\Bo=\Po$
satisfy conditions (\ref{8i}.b) / (\ref{8ii}.b) of the same theorem.
Therefore, by Corollary \ref{prop:invarI^} we conclude that the set
$\Mcomp(\Qo,\Po)$ contains a $G$-covariant element $\Mo_0$.

Since in particular the bi-observable $\Mo_0$ is covariant with respect to the
group $V$ of the phase-space translations, it must be of the form
\begin{equation}\label{Mtau}
\Mo_{\tau} (x,y) = \frac{1}{d}\, W((x,y)^T) \tau W((x,y)^T)^* ,\qquad \forall x,y\in\F,
\end{equation}
i.e.\ $\Mo_0 =\Mo_{\tau_0} $ for some state $\tau_0\in\sh$  \cite[Theor.\ 4.5.3]{DavQTOS}.
According to \cite{BGL97,BLPY16}, we call an observable $\Mo_\tau$ of the form \eqref{Mtau} the
{\em $V$-covariant phase-space observable generated by the state $\tau$}. Since
$\Mo_0$ is also $H$-covariant and $H$ is the stability subgroup of $G$ at
$(0,0)$, we see that $\tau_0=d\,\Mo_0(0,0)$ can be any state commuting with the
restriction $\left.U\right|_H$ of the representation $U$ to $H$.

By \cite[Props.\ 1 and 2]{CHT11}, the marginals of a $V$-covariant
phase-space observable $\Mo_\tau$ are
\begin{equation}\label{eq:convol}
\Mo_{\tau\,[1]}(x) = \sum_{z\in\F} \Qo^\tau(z-x)\Qo(z),  \qquad
\Mo_{\tau\,[2]}(y) = \sum_{z\in\F} \Po^\tau(z-y)\Po(z) .
\end{equation}
Now, the fact that $\tau_0$ commutes with $\left.U\right|_H$ and the covariance
relations \eqref{eq:covarQP} imply
$$
\Qo^{\tau_0}(x) = \Tr\left[\tau_0 U(\ma{f}(-1),\vec 0 )\Qo(x)U(\ma{f}(-1),\vec 0 )^*\right]
= \Po^{\tau_0}(x), \qquad \forall x\in\F,
$$
$$
\Qo^{\tau_0}(x) = \Tr\left[\tau_0 U(\ma{d}(a),\vec 0 )\Qo(x)U(\ma{d}(a),\vec 0 )^*\right]
= \Qo^{\tau_0}(ax), \qquad \forall x\in\F,\,a\in\F_* .
$$
By the second relation, the probability $\Qo^{\tau_0}$ is constant on the two
subsets $\{0\}$ and  $\F_*$ of $\F$, which are the orbits of the action of the
multiplicative group $\F_*$ on $\F$. Therefore, we can write $\Qo^{\tau_0}$ as
a linear combination of the two functions $\delta_0$ and $u_{\F}-\delta_0/d$.
The normalization of $\Qo^{\tau_0}$ requires
$$
\Qo^{\tau_0} = \lambda_0\delta_0 + (1-\lambda_0) u_{\F}
$$
for some real $\lambda_0$. On the other hand, we must have
$\lambda_0\in[-1/(d-1)\,,\,1]$ by the positivity constraint. Equations
\eqref{eq:convol} with $\tau=\tau_0$ then give
\begin{equation*}
\Mo_{0\,[1]} = \lambda_0\Qo + (1-\lambda_0) \Uo_\F =: \Qo_{\lambda_0}\,, \qquad
\qquad \Mo_{0\,[2]} = \lambda_0\Po + (1-\lambda_0) \Uo_\F =: \Po_{\lambda_0},
\end{equation*}
where $\Uo_\F$ is the trivial uniform observable on $\F$. If $\lambda_0\geq 0$, then
$\Qo_{\lambda_0}$ and $\Po_{\lambda_0}$ have the simple physical interpretation as
uniformly noisy versions of $\Qo$ and $\Po$ with noise intensities $1-\lambda_0$,
as it was explained in Section \ref{sec:noise} (cf.\ \eqref{eq:defOlam}).
However, we can not exclude that $\lambda_0$ takes its value in the negative
interval $[-1/(d-1)\,,\,0)$, where this interpretation does not apply.

We finally come to the proof of our main result for two Fourier conjugate target observables.

\begin{proof}[Proof of Theorem \ref{teo:incoMUB}]
For $\lambda\in [0,1]$, a straightforward extension of the argument in \cite[Prop.\ 5]{CHT12} from the cyclic field $\Zbb_p$ to the finite field $\F$ yields that the minimal noise intensity making the two noisy observables $\Qo_\lambda$ and $\Po_\lambda$ compatible is
\begin{equation}\label{eq:comp_lam}
1-\lambda \geq 1-\lambda_* = \frac{\sqrt{d}}{2(\sqrt{d}+1)}
\end{equation}
(see also Example 1 therein). Moreover, the same extension also proves that when in the previous bound the equality is attained, $\Qo_{\lambda_*}$ and $\Po_{\lambda_*}$ have a unique joint measurement in the whole set $\Mscr(\Xscr\times\Yscr)$; it is the $V$-covariant phase-space observable $\Mo_{\tau_*}$ generated by the pure state
\begin{equation*}
\tau_* = \frac{\sqrt{d}}{2(1+\sqrt{d})} \kb{\psi_{0,0}}{\psi_{0,0}} ,
\end{equation*}
with $\psi_{0,0}$ given in \eqref{eq:hatI(Q,P)2}.
As a consequence, for the two marginals $\Qo_{\lam_0}$ and $\Po_{\lam_0}$ of
the optimal approximate joint measurement $\Mo_0$, the inequalities $-1/(d-1)\leq\lam_0\leq\lambda_*$ must hold. Note that the state $\tau_*$ commutes with $\left.U\right|_H$, hence it is a valid
candidate for generating the $G$-covariant phase-space observable $\Mo_0$.

Now, by optimality of $\Mo_0$ we have
$$
\Icomp(\Qo,\Po)=\Div{\Qo,\Po}{\Mo_0} = \sup_{\rho}\left[\Srel{\Qo^\rho}{\Qo^\rho_{\lam_0}}
+ \Srel{\Po^\rho}{\Po^\rho_{\lam_0}}\right] =: D(\lam_0) .
$$
The map $\lambda\mapsto D(\lambda) =
\sup_{\rho}\left[\Srel{\Qo^\rho}{\Qo^\rho_\lambda} +
\Srel{\Po^\rho}{\Po^\rho_\lambda}\right]$ is defined for all $\lambda\in\Rbb$ such that $\Qo_\lambda$ and $\Po_\lambda$ are two POVMs.
By affinity, these $\lambda$'s form an interval $I$, which necessarily contains the subinterval $[0,1]$. On the interval $I$, the function $D$ is nonnegative; moreover, the mappings $\lam\mapsto\Qo^\rho_\lam$ and $\lam\mapsto\Po^\rho_\lam$ are affine on $I$, which, together with the convexity of the relative entropy, implies that such are the sum and the supremum in $D$.
Since $D(0) = \Div{\Qo,\Po}{\Uo_{\Xscr\times\Yscr}} < +\infty$ and $D(1) = 0$, by Lemma \ref{lem:nuovo} the function $D$ is nonincreasing on $I$, and finite and strictly decreasing on $[0,1]$. This fact and inequality \eqref{eq:comp_lam} for compatible $\Qo_\lam$ and $\Po_\lam$ then imply $\lam_0 = \lam_*$.
Moreover, the fact that $\Mo_{\tau_*}$ is the unique joint observable of $\Qo_{\lam_*}$ and $\Po_{\lam_*}$ imposes $\tau_0=\tau_*$, that is $\Mo_0 = \Mo_{\tau_*}$, which is \eqref{eq:hatI(Q,P)2}. Therefore, $\Mo_{\tau_*}$ is the unique $G$-covariant observable in $\Mcomp(\Qo,\Po)$, and
\begin{align*}
\Icomp(\Qo,\Po) & = D(\lam_*) = \sup_{\rho}\left[\Srel{\Qo^\rho}{\Qo^\rho_{\lam_*}}
+ \Srel{\Po^\rho}{\Po^\rho_{\lam_*}}\right] .
\end{align*}
The first inequality in \eqref{eq:I(Q,P)} then follows by evaluating the sum inside the $\sup$ at any eigenprojection $\rho=\kb{\delta_x}{\delta_x}$ of $\Qo$. On
the other hand, the second inequality is the general bound for
$\Icomp(\Qo,\Po)$ given in \eqref{eq:bound3}.

We finally prove the uniqueness of the optimal approximate joint measurement \eqref{eq:hatI(Q,P)2} in the case $p\neq 2$. If $\Mo$ is any observable in
the optimal set $\Mcomp(\Qo,\Po)$, its covariant version $\Mo_G$ is still in
$\Mcomp(\Qo,\Po)$ by Corollary \ref{prop:invarI^}, hence $\Mo_G = \Mo_{\tau_*}$ by the previous part. By \eqref{eq:covariantization},
$\Mo(x,y)\leq |G|\Mo_G(x,y) = |G| \Mo_{\tau_*}(x,y)$ for all $x,y$.
Since $\Mo_{\tau_*}(x,y)$ has rank $1$, it must then be $\Mo(x,y) =
f(x,y) \Mo_{\tau_*}(x,y)$ for some function $f:\F^2\to [0,|G|]$. The
two normalization requirements $\sum_{x,y} \Mo_{\tau_*}(x,y) = \id$ and
$\sum_{x,y} f(x,y) \Mo_{\tau_*}(x,y) = \sum_{x,y} \Mo(x,y) = \id$ impose
constraints on the coefficients $f(x,y)$. If $d=p^n$ is odd, these constraints
are enough to imply that $f(x,y) = 1$ for all $x,y$. Indeed, this follows since
in this case the observable $\Mo_{\tau_*}$ is informationally complete.
For $d=p$ odd, this is proved in \cite[Prop.\ 9]{CHT12}. In the more
general case $d=p^n$ odd, the same proof still holds, as it relies on the fact
that the inverse Weyl transform of $\tau_*$
\[ 
\hat{\tau}_* (\vec u ) := \Tr\left\{\tau_* W(\vec u )\right\}
= \frac{\sqrt{d}}{2(1+\sqrt{d})}
\left[ \delta_0(u_1) + \delta_0(u_2) + \frac{1}{\sqrt{d}}\left( \rme^{-\frac{2\pi i}{p}\tr{u_1u_2}} + 1 \right) \right]
\]
is nonzero for all $\vec u \in\F^2$ (see \cite[Prop. 12]{CHST14}). The uniqueness statement is thus proved, and this concludes the proof of Theorem \ref{teo:incoMUB}. \end{proof}

\begin{remark}\label{rem:Fourier2}
\begin{enumerate}
\item In the case $p=2$, the above proof only shows that $\Mo_0$ defined in \eqref{eq:hatI(Q,P)2} is the unique $G$-covariant observable in the set $\Mcomp(\Qo,\Po)$.
\item \label{rem:Fourier2.symmetry} In the proof of Theorem \ref{teo:incoMUB}, the dilational symmetries $\{\ma{d}(a) \mid a \in\F_*\}$ simplified the problem of characterizing the set $\Mcomp(\Qo,\Po)$, reducing it to the optimization of the single parameter $\lam$.
\end{enumerate}
\end{remark}


\begin{thebibliography}{88}
\bibitem{Oza02} Ozawa,  M.: \textsl{Position measuring interactions and the
    Heisenberg uncertainty principle}, Phys.\ Lett.\ A \textbf{299} (2002)
    1--7.
\bibitem{Oza03a} Ozawa, M.: \textsl{Physical content of Heisenberg's
    uncertainty relation: limitation and reformulation}, Phys.\ Lett.\ A \textbf{318} (2003)
    21--29.
\bibitem{Oza03b} Ozawa, M.: \textsl{Universally valid reformulation of the
    Heisenberg uncertainty principle on noise and disturbance in measurement},
    Phys.\ Rev.\ A \textbf{67} (2003) 042105.
\bibitem{Oza04} Ozawa, M.: \textsl{Uncertainty relations for joint
    measurements of noncommuting observables}, Phys.\ Lett.\ A \textbf{320} (2004)
    367--374.
\bibitem{Oza15} Ozawa, M.: \textsl{Heisenberg's original derivation of the
    uncertainty principle and its universally valid reformulations}, Curr.
    Sci. \textbf{109} (2015) 2006--2016.
\bibitem{Wer04} Werner, R.F.: \textsl{The uncertainty relation for joint
   measurement of position and momentum}, Quantum  Inf. Comput.
   \textbf{4} (2004) 546--562.
\bibitem{BLW14a} Busch, P., Lahti, P., Werner, R.F.: \textsl{Measurement
    uncertainty relations}, J.\ Math.\ Phys.\ \textbf{55} (2014) 042111.
\bibitem{BLW14b} Busch, P., Lahti, P., Werner, R.F.:
    \textsl{Quantum root-mean-square error and measurement uncertainty relations},
    Rev.\ Mod.\ Phys.\ \textbf{86} (2014) 1261--1281.
\bibitem{BusLW14} Busch, P., Lahti, P., Werner, R.F.: \textsl{Heisenberg
    uncertainty for qubit measurements}, Phys. Rev. A \textbf{89} (2014) 012129.
\bibitem{Wer16} Werner, R.F.: \textsl{Uncertainty relations for general phase
    spaces}, Front. Phys. \textbf{11} (2016) 110305.
\bibitem{BHL07} Busch, P., Heinonen, T., Lahti, P.: \textsl{Heisenberg's
    Uncertainty Principle}, Phys. Rep. \textbf{452} (2007) 155--176.
\bibitem{DamSW15} Dammeier, L., Schwonnek, R., Werner, R.F.:
    \textsl{Uncertainty relations for angular momentum}, New J. Phys.
    \textbf{17} (2015) 093046.
\bibitem{AAHB16}  Abbott, A.A., Alzieu, P.-L. , Hall, M.J.W., Branciard, C.:
    \textsl{Tight state-independent uncertainty relations for qubits},
    Mathematics, \textbf{4} (2016) 8.
\bibitem{Hei27} Heisenberg, W.: \textsl{\"Uber den anschaulichen Inhalt der
    quantentheoretischen Kinematik und Mechanik}, Zeitschr. Phys. \textbf{43}
    (1927) 172--198.
\bibitem{Rob29} Robertson, H.: \textsl{The uncertainty principle}, Phys. Rev.
    \textbf{34} (1929) 163--164.
\bibitem{Kra87} Kraus, K.: \textsl{Complementary observables and uncertainty
    relations}, Phys. Rev. D \textbf{35} (1987) 3070--3075.
\bibitem{MaaU88} Maassen, H., Uffink, J.B.M.: \textsl{Generalized entropic
    uncertainty relations}, Phys.\ Rev.\ Lett.\ \textbf{60} (1988) 1103--1106.
\bibitem{KriParthas02} Krishna, M., Parthasarathy, K.R.: \textsl{An Entropic
    Uncertainty Principle for Quantum  Measurements}, Sankhya: Indian
    J. Stat. \textbf{64} (2002) 842--851.
\bibitem{WehW10} Wehner, S., Winter, A.: \textsl{Entropic uncertainty
    relations --- a survey}, New J.\ Phys.\ \textbf {12} (2010) 025009.
\bibitem{KTW14} Kaniewski, J., Tomamichel, M., Wehner. S.: \textsl{Entropic
    uncertainty from effective anticommutators}, Phys. Rev. A \textbf{90} (2014)
    012332.
\bibitem{MaaWer} Abdelkhalek, K., Schwonnek, R., Maassen, H., Furrer, F.,
    Duhme, J.,  Raynal, P.,  Englert, B-G.,  Werner,  R.F.:
    \textsl{Optimality of entropic uncertainty
    relations}, Int. J. Quantum Inf. \textbf{13} (2015) 1550045.

\bibitem{ColesBTW17} Coles, P.J., Berta, M., Tomamichel, M., Whener, S.: \textsl{Entropic uncertainty relations and their applications}, Rev.~Mod.~Phys. \textbf{89} (2017) 015002.
\bibitem{Hol01} Holevo,  A.S.: \textit{Statistical Structure of Quantum
    Theory},  Lecture Notes  in Physics \textbf{m 67} (Springer, Berlin, 2001).
\bibitem{BGL97}  Busch, P., Grabowski, M., Lahti, P.: \textit{Operational
    Quantum Physics} (Springer, Berlin, 1997).
\bibitem{BLPY16} Busch, P., Lahti, P., Pellonp\"a\"a, J.-P., Ylinen, K.:
    \textit{Quantum Measurement} (Springer, Berlin, 2016).
\bibitem{BusH08} Busch, P., Heinosaari, T.: \textsl{Approximate joint
    measurements of qubit observables}, Quantum Inf.\ Comp.\ \textbf{8} (2008)
    797--818.
\bibitem{HeiW10} Heinosaari, T., Wolf, M.M.: \textsl{Nondisturbing quantum
    measurements}, J. Math. Phys. 51 (2010) 092201.
\bibitem{HeiM15} Heinosaari, T., Miyadera, T.: \textsl{Universality of
    sequential quantum measurements},  Phys. Rev. \textbf{91}  (2015)
    022110.
\bibitem{App98} Appleby, D.M.: \textsl{Error principle},
    Int. J. Theoret. Phys. \textbf{37} (1998) 2557--2572.
\bibitem{App16} Appleby, D.M.: \textsl{Quantum Errors and Disturbances:
    Response to Busch, Lahti and Werner}, Entropy \textbf{18} (2016) 174.
\bibitem{BusHOW} Buscemi, F., Hall, M.J.W., Ozawa, M., Wilde, M.M.:
    \textsl{Noise and disturbance in quantum measurements: an
    information-theoretic approach}, Phys.\ Rev.\ Lett.\ \textbf{112} (2014)
    050401.
\bibitem{AbbB16} Abbot, A.A., Branciard, C.: \textsl{Noise and disturbance of
    Qubit measurements: An information-theoretic characterisation}, Phys. Rev. A \textbf{94} (2016) 062110.
\bibitem{CF15} Coles, P.J., Furrer, F.: \textsl{State-dependent approach to
    entropic measurement-disturbance relations}, Phys. Lett. A \textbf{379}
    (2015) 105--112.

\bibitem{BGT17} Barchielli, A., Gregoratti, M., Toigo, A.: \textsl{Measurement uncertainty relations for position and momentum: Relative entropy  formulation}, Entropy \textbf{19} (2017) 301.
\bibitem{BA02} Burnham, K.P., Anderson D.R.: \textit{Model Selection
    and  Multi-Model Inference}, 2nd edition (Springer, New York, 2002).
\bibitem{CovT06} Cover, T.M., Thomas, J.A.: \textit{Elements of Information Theory}, 2nd edition (Wiley, Hoboken, New Jersey, 2006).
\bibitem{OhP93} Ohya, M., Petz, D.: \textit{Quantum entropy and its use}
    (Springer, Berlin, 1993).
\bibitem{BarL05} Barchielli, A., Lupieri, G.:
    \textsl{Instruments and channels in quantum information theory},
    Opt. Spectrosc. \textbf{99} (2005) 425--432.
\bibitem{BarL06a} Barchielli, A., Lupieri, G.:
    \textsl{Quantum measurements and entropic  bounds on information transmission},
    Quantum  Inf. Comput. \textbf{6} (2006) 16--45.
\bibitem{BarL06b} Barchielli, A., Lupieri, G.:
    \textsl{Instruments and mutual entropies in quantum information}, Banach Center Publ.
    \textbf{73} (2006) 65--80.
\bibitem{Mac07} Maccone, L.: \textsl{Entropic information-disturbance
    tradeoff}, Europhys. Lett. \textbf{77} (2007) 40002.
\bibitem{DavQTOS} Davies, E.B.: \textit{Quantum Theory of Open
    Systems}  (Academic, London, 1976).
\bibitem{Hol12} Holevo, A.S.: \textit{Quantum Systems, Channels, Information}
    (de Gruiter, Berlin, 2012).
\bibitem{HeZi12} Heinosaari, T., Ziman, M.: \textit{The mathematical
    language of quantum theory: From uncertainty to entanglement} (Cambridge University
    Press, Cambridge, 2012).

\bibitem{HeiMZ16} Heinosaari, T., Miyadera, T., Ziman, M.: \textsl{An
    invitation to quantum incompatibility}, J.\ Phys.\ A: Math.\ Theor.\
    \textbf{49} (2016) 123001.
\bibitem{Top01} Tops\o e, F.: \textsl{Basic concepts, identities and
    inequalities --- the toolkit of Information Theory}, Entropy \textbf{3}
    (2001) 162--190.
\bibitem{Ped89} Pedersen, G.K.: \textit{Analysis now} (Springer-Verlag, New
    York,  1989).
\bibitem{BHSS13} Busch, P., Heinosaari, T., Schultz, J., Stevens, N.:
    \textsl{Comparing the degrees of incompatibility inherent in probabilistic
    physical theories}, Europhys.\ Lett.\ \textbf{103} (2013) 10002.
\bibitem{HSTZ14} Heinosaari, T., Schultz, J., Toigo, A., Ziman, M.: \textsl{
    Maximally  incompatible quantum observables}, Phys.\ Lett.\ A \textbf{378} (2014)
    1695--1699.
\bibitem{KW99} Keyl, M., Werner, R.F.: \textsl{Optimal cloning of pure
    states,  testing single clones}, J.\ Math.\ Phys.\ \textbf{40} (1999) 3283--3299.
\bibitem{Wer98} Werner, R.F.: \textsl{Optimal cloning of pure states},
    Phys.\ Rev.\  A \textbf{58} (1998) 1827--1832.
\bibitem{LA03} P. Lahti, \textsl{Coexistence and Joint Measurability in Quantum
    Mechanics}, Int. J. Theor. Phys. \textbf{42} (2003) 893--906.

\bibitem{WoFi89} Wootters, W.K., Fields, D.B.: \textsl{Optimal
    state-determination by mutually unbiased measurements}, Ann.\ Phys.\
    \textbf{191} (1989) 363--381.
\bibitem{DuEnBeZy10} Durt, T., Englert, B.-G., Bengtsson, I., Zyczkowsky, K.:
    \textsl{On mutually unbiased bases}, Int.\ J.\ Quantum Inf.\ \textbf{8}
    (2010) 535--640.
\bibitem{BaBoRoVa02} Bandyopadhyay, S., Boykin, P.O., Roychowdhury, V.,
    Vatan, F.:
    \textsl{A new proof for the existence of mutually unbiased bases},
    Algorithmica \textbf{34} (2002) 512--528.
\bibitem{Ap09} Appleby, D.M.: \textsl{Properties of the extended Clifford group
    with applications to SIC-POVMs and MUBs}, arXiv:0909.5233.
\bibitem{CaScTo15} Carmeli, C., Schultz, J., Toigo, A.: \textsl{Covariant
    mutually unbiased bases}, Rev. Math. Phys. \textbf{28} (2016) 1650009.
\bibitem{LanAlg} Lang, S.: \textit{Algebra}, 3rd edition, Graduate Texts in
    Mathematics, \textbf{211} (Springer, New York, 2002).
\bibitem{CHT12} Carmeli, C., Heinosaari, T., Toigo, A.:
    \textsl{Informationally  complete joint measurements on finite quantum systems}, Phys. Rev. A \textbf{85} (2012) 012109.
\bibitem{HeJiReZi10} Heinosaari, T., Jivulescu, M.A., Reitzner, D., Ziman, M.:
    \textsl{Approximating incompatible von Neumann measurements
    simultaneously}, Phys. Rev. A \textbf{82} (2010) 032328.

\bibitem{Berta+} Berta, M., Christandl, M., Colbeck, R., Renes, J.M., Renner, R.: \textsl{The uncertainty principle in the presence of quantum memory}, Nat. Phys. \textbf{6} (2010) 659.
\bibitem{Frank} Frank, R. L., Lieb, E.H.: \textsl{Extended Quantum Conditional Entropy and Quantum Uncertainty Inequalities},  Commun. Math. Phys. \textbf{323} (2013)  487--495.

\bibitem{Weyl} Weyl, H.: \textit{Symmetry} (Princeton University Press, Princeton, 1952).
\bibitem{CHT11} Carmeli, C., Heinosaari, T., Toigo, A.: \textsl{Sequential
    measurements of conjugate observables},
    J.\ Phys.\ A: Math.\    Theor.\ \textbf{44} (2011) 285304.
\bibitem{CHST14} Carmeli, C., Heinosaari, T., Schultz, J., Toigo, A.:
    \textsl{Tasks and premises in quantum state determination}, J.\ Phys.\ A:
    Math.\ Theor.\ \textbf{47} (2014) 075302.
\end{thebibliography}
\end{document}